\tikzset{  
  >=stealth',
  plainnode/.style 	= {draw, thick,circle, minimum size=8mm, inner sep=0mm}
}
\newcommand{\nats}{\mathbb{N}}
\renewcommand{\epsilon}{\varepsilon}
\renewcommand{\phi}{\varphi}
\newcommand{\size}[1]{|#1|}
\newcommand{\pow}[1]{2^{#1}}
\newcommand{\cceq}{\mathop{::=}}
\newcommand{\set}[1]{\{#1\}}
\newcommand{\F}{\mathop{\mathbf{F}\vphantom{a}}\nolimits}
\newcommand{\G}{\mathop{\mathbf{G}\vphantom{a}}\nolimits}
\DeclareMathOperator{\U}{\mathbf{U}}
\newcommand{\X}{\mathop{\mathbf{X}\vphantom{a}}\nolimits}
\newcommand{\ltl}{\mathrm{LTL}\xspace}
\newcommand{\ctlstar}{\mathrm{CTL^*}\xspace}
\newcommand{\hyltl}{\mathrm{Hyper\-LTL}\xspace}
\newcommand{\hyqptl}{\mathrm{Hyper\-QPTL}\xspace}
\newcommand{\hyqptlplus}{\mathrm{Hyper\-QPTL^+}\xspace}
\newcommand{\sohyltl}{\mathrm{Hyper^2LTL}\xspace}
\newcommand{\sohyltlfp}{\mathrm{Hyper^2LTL_{\mathrm{mm}}}\xspace}
\newcommand{\sohyltlfpmax}{\mathrm{Hyper^2LTL_{\mathrm{mm}}^{\largest}}\xspace}
\newcommand{\sohyltlfpmin}{\mathrm{Hyper^2LTL_{\mathrm{mm}}^{\smallest}}\xspace}
\newcommand{\sohyltlfpsmalar}{\mathrm{Hyper^2LTL_{\mathrm{mm}}^{\smalar}}\xspace}
\newcommand{\lfpsohyltlfp}{\mathrm{lfp\text{-}{Hyper^2LTL_{\mathrm{mm}}}}\xspace}
\newcommand{\hyctlstar}{\mathrm{HyperCTL^*}\xspace}
\newcommand{\sohyltlfpold}{\mathrm{Hyper^2LTL_{\mathrm{fp}}}\xspace}
\newcommand{\teamltl}{\mathrm{TeamLTL}\xspace}
\newcommand{\suffix}[2]{#1[#2,\infty)}
\newcommand{\fovar}{\mathcal{V}_1}
\newcommand{\sovar}{\mathcal{V}_2}
\newcommand{\ap}[0]{\mathrm{AP}}
\newcommand{\univar}{X_a}
\newcommand{\unidisvar}{X_d}
\newcommand{\smallest}{\curlyvee}
\newcommand{\largest}{\curlywedge}
\newcommand{\smalar}{{\mathrlap{\curlyvee}\curlywedge}}
\newcommand{\solutions}{\mathrm{sol}}
\newcommand{\pspace}{\textsc{PSpace}\xspace}
\newcommand{\tower}{\textsc{Tower}\xspace}
\newcommand{\myquot}[1]{``#1''}
\newcommand{\tsys}{\mathfrak{T}}
\newcommand{\traces}{\mathrm{Tr}}
\newcommand{\initmark}{I}
\newcommand{\posprop}{\mathtt{+}}
\newcommand{\negprop}{\mathtt{-}}
\newcommand{\setprop}{\mathtt{s}}
\newcommand{\inprop}{\mathtt{x}}
\newcommand{\contcard}{\mathfrak{c}}
\newcommand{\proposition}{\texttt{p}}
\newcommand{\argone}{\texttt{arg1}}
\newcommand{\argtwo}{\texttt{arg2}}
\newcommand{\res}{\texttt{res}}
\newcommand{\add}{\texttt{add}}
\newcommand{\mult}{\texttt{mult}}
\newcommand{\plustimes}{{(+,\cdot)}}
\newcommand{\hyperize}{{\mathit{hyp}}}
\newcommand{\arithmetize}{{\mathit{ar}}}
\newcommand{\allsets}{{\mathit{allSets}}}
\newcommand{\pair}{\mathit{pair}}
\newcommand{\istrace}{\mathit{isTrace}}
\newcommand{\arith}{\mathit{arith}}
\newcommand{\onlytraces}{\mathit{onlyTraces}}
\newcommand{\alltraces}{\mathit{allTraces}}
\newcommand{\parti}{\mathit{part}}
\newcommand{\update}{\mathit{update}}
\newcommand{\calY}{\mathcal{Y}}
\newcommand{\ists}{\mathit{isTS}}
\newcommand{\ispath}{\mathit{isPath}}
\newcommand{\traceof}{\mathit{traceOf}}
\newcommand{\prefs}{\mathit{Prefs}}
\newcommand{\natsstruct}{(\nats, +, \cdot, <, \in)}
\newcommand{\class}{\mathcal{C}}
\newcommand{\con}{\mathrm{con}}
\newcommand{\step}{\mathrm{step}}
\newcommand{\tracein}{\mathrel{\triangleright}}
\newcommand{\quant}{Q}
\newcommand{\overdot}[1]{\dot{#1}}
\newcommand{\overdotdot}[1]{\ddot{#1}}
\newcommand{\cw}{\mathrm{cw}}
\newcommand{\expansion}{e}
\newcommand{\lfp}{\mathrm{lfp}}
\newcommand{\skolem}{\mathit{Sk}}
\newcommand{\equals}[3]{#1 =_{#3} #2}
\newcommand{\tree}{B}
\newcommand{\ck}{\mathit{ck}}
\newcommand{\gni}{\mathit{gni}}
\newcommand{\itereval}{e}
\newcommand{\marker}{\texttt{m}}
\newcommand{\complete}{\mathit{complete}}
\newcommand{\merge}{^\smallfrown}
\newcommand{\guard}{\mathit{guard}}
\newcommand{\enc}{\mathit{enc}}
\newcommand{\extend}{\mathit{ext}}
\newcommand{\hastree}{\mathit{hasTree}}
\newcommand{\encode}[1]{\langle #1\rangle}
\newcommand{\vertices}{\mathit{V}}
\newcommand{\edges}{\mathit{E}}
\newcommand{\edge}{\mathit{edge}}
\newcommand{\prefix}{\mathit{prefix}}
\newcommand{\prefixes}{\mathit{prefixes}}
\newcommand{\rel}{\mathit{rel}}
\newcommand{\init}{\mathit{init}}
\newcommand{\consistent}{\mathit{cons}}
\begin{document}

\title{The Complexity of Second-order HyperLTL}
\titlecomment{A preliminary version of this article has been published in the proceedings of CSL 2025~\cite{sohypercomplexity}. Here, all problems left open in the conference version have been solved, as reported in the technical report~\cite{fragments}.}

\author[H.~Frenkel]{Hadar Frenkel\lmcsorcid{0000-0002-3566-0338}}[a]
\author[G.~Regaud]{Gaëtan Regaud\lmcsorcid{0009-0000-1409-5707}}[b]
\author[M.~Zimmermann]{Martin Zimmermann\lmcsorcid{0000-0002-8038-2453}}[c]

\address{Bar-Ilan University, Ramat Gan, Israel}	
\email{hadar.frenkel@biu.ac.il}  

\address{ENS Rennes, Rennes, France}	
\email{gaetan.regaud@ens-rennes.fr}  

\address{Aalborg University, Aalborg, Denmark}	
\email{mzi@cs.aau.dk}  





\begin{abstract}
  \noindent We determine the complexity of second-order HyperLTL satisfiability, finite-state satisfiability, and model-checking: All three are equivalent to truth in third-order arithmetic. 
    
    We also consider two fragments of second-order HyperLTL that have been introduced with the aim to facilitate effective model-checking by restricting the sets one can quantify over. 
The first one restricts second-order quantification to smallest/largest sets that satisfy a guard while the second one restricts second-order quantification further to least fixed points of (first-order) HyperLTL definable functions.
    All three problems for the first fragment are still equivalent to truth in third-order arithmetic while satisfiability for the second fragment is $\Sigma_1^2$-complete, and finite-state satisfiability and model-checking are equivalent to truth in second-order arithmetic. 

    Finally, we also introduce closed-world semantics for second-order HyperLTL, where set quantification ranges only over subsets of the model, while set quantification in standard semantics ranges over arbitrary sets of traces. 
    Here, satisfiability for the least fixed point fragment becomes $\Sigma_1^1$-complete, but all other results are unaffected.
\end{abstract}

\maketitle

\section{Introduction}
\label{sec:intro}

The introduction of hyperlogics~\cite{ClarksonFKMRS14} for the specification and verification of hyperproperties~\cite{ClarksonS10} -- properties that relate multiple system executions -- has been one of the major success stories of formal verification during the last decade.
Logics like $\hyltl$ and $\hyctlstar$~\cite{ClarksonFKMRS14}, the extensions of $\ltl$~\cite{Pnueli77} 
and $\ctlstar$~\cite{EmersonH86} (respectively) with trace quantification, are natural specification languages for information-flow and security properties, have a decidable model-checking problem~\cite{FinkbeinerRS15}, and hence found many applications in program verification.

However, while expressive enough to express common information-flow properties, they are unable to express other important hyperproperties, e.g., common knowledge in multi-agent systems and asynchronous hyperproperties (witnessed by a plethora of asynchronous extensions of $\hyltl$, e.g.,~\cite{BartocciHNC23,DBLP:conf/cav/BaumeisterCBFS21,BeutnerF23,DBLP:journals/corr/abs-2404-16778,DBLP:conf/lics/BozzelliPS21,DBLP:conf/concur/BozzelliPS22,DBLP:journals/pacmpl/GutsfeldMO21,DBLP:conf/mfcs/KontinenSV23,KontinenSV24,DBLP:conf/mfcs/KrebsMV018}).
These examples all have in common that they are \emph{second-order} properties, i.e., they naturally require quantification over \emph{sets} of traces, while $\hyltl$ (and $\hyctlstar$) only allows quantification over traces.

In light of this situation, Beutner et al.~\cite{DBLP:conf/cav/BeutnerFFM23} introduced the logic $\sohyltl$, which extends $\hyltl$ with second-order quantification, i.e., quantification over sets of traces. 
They show that the resulting logic, $\sohyltl$, is indeed able to capture common knowledge, asynchronous extensions of $\hyltl$, and many other applications. 

Consider, e.g., common knowledge in multi-agent systems where each agent~$i$ only observes some parts of the system. The agent \emph{knows} that a statement~$\varphi$ holds if it holds on all traces that are \emph{indistinguishable} in the agent's view. 
We write $\pi \sim_i \pi'$ if the traces~$\pi$ and $\pi'$ are indistinguishable for agent $i$. 
A property~$\varphi$ is common knowledge among all agents if all agents know $\varphi$, all agents know that all agents know $\varphi$, and so on, i.e., one takes the infinite closure of knowledge among all agents. 
This infinite closure cannot be expressed using first-order quantification over traces~\cite{DBLP:conf/fossacs/BozzelliMP15}, like the one used in $\hyltl$. 
The second-order quantification suggested by Beutner et al.\ allows us to express common knowledge, as demonstrated by the formula~$\varphi_\ck$, which 
 states that $\varphi$ is common knowledge on all traces of the system (we use a simplified syntax for readability): 
\begin{align*}
   \varphi_\ck =  \forall \pi.\, \exists X .\ \pi \in X \land \Big( \forall \pi' \in X.\, \forall \pi''.\ \big(\bigvee\nolimits_{i=1}^n \pi' \sim_i \pi'' \big) \rightarrow \pi'' \in X \Big)\, \land\, \forall \pi' \in X.\ \varphi({\pi'})
\end{align*}
The formula~$\varphi_\ck$ expresses that for every trace $t$ (instantiating $\pi$), there exists a set $T$ (an instantiation of the second-order variable $X$) such that $t$ is in $T$, $T$ is closed under the observations of all agents (if $t'$ is in $T$ and $t''$ is indistinguishable from $t'$ for some agent~$i$, then $t''$ is also in $T$), and all traces in $T$ satisfy $\varphi$.  

However, Beutner et al.\ also note that this expressiveness comes at a steep price:
model-checking $\sohyltl$ is highly undecidable, i.e., $\Sigma_1^1$-hard.
Thus, their main result is a partial model-checking algorithm for a fragment of $\sohyltl$ where second-order quantification degenerates to least fixed point computations of $\hyltl$ definable functions. 
A prototype implementation of the algorithm is able to model-check properties capturing common knowledge, asynchronous hyperproperties, and distributed computing. 

However, one question has been left open: Just how complex is $\sohyltl$ verification?

\subsection{Complexity Classes for Undecidable Problems}
The complexity of undecidable problems is typically captured in terms of the arithmetical and analytical hierarchy, where decision problems (encoded as subsets of $\nats$) are classified based on their definability by formulas of higher-order arithmetic, namely by the type of objects one can quantify over and by the number of alternations of such quantifiers.
We refer to Roger's textbook~\cite{Rogers87} for fully formal definitions and refer to Figure~\ref{fig_hierarchies} for a visualization.

\begin{figure}[h]
    \centering
    
  \scalebox{.95}{
  \begin{tikzpicture}[xscale=1.0,yscale=.8,thick]

    \fill[fill = gray!25, rounded corners] (-1.05,2) rectangle (0.5,-1.5);
    \fill[fill = gray!25, rounded corners] (.6,2) rectangle (14.5,-1.5);

    \node[align=center] (s00) at (0,0) {$\Sigma^0_0$ \\ $=$ \\ $\Pi^0_0$} ;
    \node (s01) at (1,1) {$\Sigma^0_1$} ;
    \node (p01) at (1,-1) {$\Pi^0_1$} ;
    \node (s02) at (2,1) {$\Sigma^0_2$} ;
    \node (p02) at (2,-1) {$\Pi^0_2$} ;
    \node (s03) at (3,1) {$\Sigma^0_3$} ;
    \node (p03) at (3,-1) {$\Pi^0_3$} ;
    \node (s04) at (4,1) {$\cdots$} ;
    \node (p04) at (4,-1) {$\cdots$} ;
    
    \node[align=center] (s10) at (5,0) {$\Sigma^1_0 $ \\ $=$ \\ $ \Pi^1_0$} ;
    \node (s11) at (6,1) {$\Sigma^1_1$} ;
    \node (p11) at (6,-1) {$\Pi^1_1$} ;
    \node (s12) at (7,1) {$\Sigma^1_2$} ;
    \node (p12) at (7,-1) {$\Pi^1_2$} ;
    \node (s13) at (8,1) {$\Sigma^1_3$} ;
    \node (p13) at (8,-1) {$\Pi^1_3$} ;
    \node (s14) at (9,1) {$\cdots$} ;
    \node (p14) at (9,-1) {$\cdots$} ;
    
    \node[align=center] (s20) at (10,0) {$\Sigma^2_0$ \\ $ =$ \\ $ \Pi^2_0$} ;
    \node (s21) at (11,1) {$\Sigma^2_1$} ;
    \node (p21) at (11,-1) {$\Pi^2_1$} ;
    \node (s22) at (12,1) {$\Sigma^2_2$} ;
    \node (p22) at (12,-1) {$\Pi^2_2$} ;
    \node (s23) at (13,1) {$\Sigma^2_3$} ;
    \node (p23) at (13,-1) {$\Pi^2_3$} ;
    \node (s24) at (14,1) {$\cdots$} ;
    \node (p24) at (14,-1) {$\cdots$} ;
    
    \foreach \i in {0,1,2} {
      \draw (s\i0) -- (s\i1) ;
      \draw (s\i0) -- (p\i1) ;
      \draw (s\i1) -- (s\i2) ;
      \draw (s\i1) -- (p\i2) ;
      \draw (p\i1) -- (s\i2) ;
      \draw (p\i1) -- (p\i2) ;
      \draw (s\i2) -- (s\i3) ;
      \draw (s\i2) -- (p\i3) ;
      \draw (p\i2) -- (s\i3) ;
      \draw (p\i2) -- (p\i3) ;
      \draw (s\i3) -- (s\i4) ;
      \draw (s\i3) -- (p\i4) ;
      \draw (p\i3) -- (s\i4) ;
      \draw (p\i3) -- (p\i4) ;
    }
    \foreach \i [evaluate=\i as \iplus using int(\i+1)] in {0,1} {
      \draw (s\i4) -- (s\iplus0) ;
      \draw (p\i4) -- (s\iplus0) ;
}
      \node[] at (-0.25,1.7) {\small Decidable} ;
      \node at (13.4,1.7) {\small Undecidable} ;


      \node[align=left,font=\small,
      rounded corners=2pt] at (3.2,1.7)
      (re) {\small Recursively enumerable} ;
      \draw[-stealth,rounded corners=2pt] (s01) |- (re) ;

    \path (4.25,-1.75) edge[decorate,decoration={brace,amplitude=3pt}]
    node[below] {\begin{minipage}{4cm}\centering
    	arithmetical hierarchy $ $\\$\equiv$\\ first-order arithmetic 
    \end{minipage}} (.75,-1.75) ;

    \path (9.25,-1.75) edge[decorate,decoration={brace,amplitude=3pt}]
    node[below] {\begin{minipage}{4cm}\centering
    	analytical hierarchy $ $\\$\equiv$\\ second-order arithmetic 
    \end{minipage}} (5,-1.75) ;

    \path (14.25,-1.75) edge[decorate,decoration={brace,amplitude=3pt}]
    node[below] {\begin{minipage}{4cm}\centering
    	\myquot{the third hierarchy} $ $\\ {$\equiv$}\\ third-order arithmetic 
    \end{minipage}} (10,-1.75) ;

  \end{tikzpicture}}
    
    \caption{The arithmetical hierarchy, the analytical hierarchy, and beyond.}
    \label{fig_hierarchies}
\end{figure}

The class~$\Sigma_1^0$ contains the sets of natural numbers of the form
\[
\set{x \in \nats \mid \exists x_0.\  \cdots \exists x_k.\ \psi(x, x_0, \ldots, x_k)}
\] 
where quantifiers range over natural numbers and $\psi$ is a quantifier-free arithmetic formula. 
Note that this is exactly the class of recursively enumerable sets.
The notation~$\Sigma_1^0$ signifies that there is a single block of existential quantifiers (the subscript~$1$) ranging over natural numbers (type~$0$ objects, explaining the superscript~$0$).
Analogously, $\Sigma_1^1$ is induced by arithmetic formulas with existential quantification of type~$1$ objects (sets of natural numbers) and arbitrary (universal and existential) quantification of type~$0$ objects.
So, $\Sigma_1^0$ is part of the first level of the arithmetical hierarchy while $\Sigma_1^1$ is part of the first level of the analytical hierarchy.
In general, level~$\Sigma_n^0$ (level~$\Pi_n^0$) of the arithmetical hierarchy is induced by formulas with at most $n-1$ alternations between existential and universal type~$0$ quantifiers, starting with an existential (universal) quantifier. 
Similar hierarchies can be defined for arithmetic of any fixed order by limiting the alternations of the highest-order quantifiers and allowing arbitrary lower-order quantification.
In this work, the highest order we are concerned with is three, i.e., quantification over sets of sets of natural numbers.

$\hyltl$ satisfiability is $\Sigma_1^1$-complete~\cite{hyperltlsatconf}, $\hyltl$ finite-state satisfiability is $\Sigma_1^0$-complete~\cite{FinkbeinerH16,hyperltlsat}, and, as mentioned above, $\sohyltl$ model-checking is $\Sigma_1^1$-hard~\cite{DBLP:conf/cav/BeutnerFFM23}, but, prior to this work, no upper bounds were known for $\sohyltl$.

Another yardstick is truth for order~$k$ arithmetic, i.e., the question whether a given sentence of order~$k$ arithmetic evaluates to true. In the following, we are in particular interested in the case~$k=3$, i.e., we consider formulas with arbitrary quantification over type~$0$ objects, type~$1$ objects, and type~$2$ objects (sets of sets of natural numbers). 
Note that these formulas span the whole third hierarchy, as we allow arbitrary nesting of existential and universal third-order quantification.

\subsection{Our Contributions}
In this work, we determine the exact complexity of $\sohyltl$ satisfiability, finite-state satisfiability, and model-checking, for the full logic and the two fragments introduced by Beutner et al.~\cite{DBLP:conf/cav/BeutnerFFM23}, as well as for two variants of the semantics.

An important stepping stone for us is the investigation of the cardinality of models of $\sohyltl$. 
It is known that every satisfiable $\hyltl$ sentence has a countable model, and that some have no finite models~\cite{FZ17}.
This restricts the order of arithmetic that can be simulated in $\hyltl$ and explains in particular the $\Sigma_1^1$-completeness of $\hyltl$ satisfiability~\cite{hyperltlsatconf}.
We show that (unsurprisingly) second-order quantification allows to write formulas that only have uncountable models by generalizing the lower bound construction of $\hyltl$ to $\sohyltl$. 
Note that the cardinality of the continuum is a trivial upper bound on the size of models, as they are sets of traces. 

With this tool at hand, we are able to show that $\sohyltl$ satisfiability is equivalent to truth in third-order arithmetic, i.e., much harder than $\hyltl$ satisfiability.
This increase in complexity is not surprising, as second-order quantification can be expected to increase the complexity considerably.
But what might be surprising at first glance is that the problem is not $\Sigma_1^2$-complete, i.e., at the same position of the third hierarchy that $\hyltl$ satisfiability occupies in the analytic hierarchy (see Figure~\ref{fig_hierarchies}). However, arbitrary second-order trace quantification corresponds to arbitrary quantification over type 2 objects, which allows to capture the full third hierarchy.
Furthermore, we also show that $\sohyltl$ finite-state satisfiability is equivalent to truth in third-order arithmetic, and therefore as hard as general satisfiability.
This should be contrasted with the situation for $\hyltl$ described above, where finite-state satisfiability is $\Sigma_1^0$-complete (i.e., recursively enumerable) and thus much simpler than general satisfiability, which is $\Sigma_1^1$-complete.

Finally, our techniques for $\sohyltl$ satisfiability also shed light on the exact complexity of $\sohyltl$ model-checking, which we show to be equivalent to  truth in third-order arithmetic as well, i.e., all three problems we consider have the same complexity. In particular, this increases the lower bound on $\sohyltl$ model-checking from $\Sigma_1^1$ to truth in third-order arithmetic.
Again, this has be contrasted with the situation for $\hyltl$, where model-checking is decidable, albeit \tower-complete~\cite{Rabe16diss,MZ20}.

So, quantification over arbitrary sets of traces makes verification very hard. 
However, Beutner et al.~\cite{DBLP:conf/cav/BeutnerFFM23} noticed that many of the applications of $\sohyltl$ described above do not require full second-order quantification, but can be expressed with restricted forms of second-order quantification.
To capture this, they first restrict second-order quantification to smallest/largest sets satisfying a guard  (obtaining the fragment~$\sohyltlfp$)\footnote{In~\cite{DBLP:conf/cav/BeutnerFFM23} this fragment is termed $\sohyltlfpold$. For clarity, since it is not fixed point based, but uses minimality/maximality constraints, we use the subscript \myquot{mm} instead of \myquot{fp}.} and then further restrict those to least fixed points induced by $\hyltl$ definable operators (obtaining the fragment~$\lfpsohyltlfp$). By construction, these least fixed points are unique, i.e., second-order quantification degenerates to least fixed point computation.

As an example, consider again $ \varphi_\ck $ above. The internal constraint 
\begin{align*}
    \forall \pi' \in X.\, \forall \pi''.\ \big(\bigvee\nolimits_{i=1}^n \pi' \sim_i \pi'' \big) \rightarrow \pi'' \in X 
\end{align*}
defines a condition on what traces have to be in the set $X$, and how they are added gradually to $X$, a behavior that can be captured by a fixed point computation for the (monotone) operator induced by the formula above. Since the last part~$\forall \pi' \in X.\ \varphi({\pi'})$ of $\phi_\ck$ universally quantifies over all traces in $X$, and since $X$ is existentially quantified, it is enough to consider the minimal set that satisfies the internal constraint: if \emph{some} set satisfies a universal condition, then so does the minimal set.
This minimal set is exactly the least fixed point of the operator induced by the formula above.
Similar behavior is exhibited by many other applications of the logic, which gives the motivation to explore the fragment~$\lfpsohyltlfp$.

Nevertheless, we show that $\sohyltlfp$ retains the same complexity as $\sohyltl$, i.e., all three problems are still equivalent to  truth in third-order arithmetic: Just restricting to guarded second-order quantification does not decrease the complexity. Furthermore, we show that this is even the case when we allow only minimality constraints ($\sohyltlfpmin$) and if we allow only maximality constraints ($\sohyltlfpmax$).

But if we consider $\lfpsohyltlfp$, the complexity finally decreases: we show that satisfiability is $\Sigma_1^2$-complete while finite-state satisfiability and model-checking are both equivalent to truth in second-order arithmetic.

Furthermore, we introduce and study a new semantics for $\sohyltl$ and its fragments: 
In standard semantics as introduced by Beutner et al.~\cite{DBLP:conf/cav/BeutnerFFM23}, second-order quantifiers range over arbitrary sets of traces that may contain traces that are not in the model.
In closed-world semantics introduced here, second-order quantifiers range only over subsets of the model. 

We show that for all but one case, the three problems have the same complexity under closed-world semantics as they have under standard semantics. 
The sole exception is satisfiability for $\lfpsohyltlfp$ under closed-world semantics, which we prove $\Sigma_1^1$-complete, i.e., as hard as $\hyltl$ satisfiability.
Stated differently, one can add least fixed points of $\hyltl$ definable operators to $\hyltl$ without increasing the complexity of the satisfiability problem.

Table~\ref{table_results} lists our results and compares them to the related logics~$\ltl$, $\hyltl$, and $\hyqptl$ (see Section~\ref{sec_relatedwork} for more details on these results). 
Recall that Beutner et al.\ showed that $\lfpsohyltlfp$ yields (partial) model checking and monitoring algorithms~\cite{DBLP:conf/cav/BeutnerFFM23,BeutnerFFM24}. 
Our results confirm the usability of the $\lfpsohyltlfp$ fragment also from a theoretical point of view, as all problems relevant for verification have significantly lower complexity (albeit, still  highly undecidable).

\begin{table}[h]
    \centering
    \caption{List of our results (in bold and red) and comparison to related logics~\cite{FinkbeinerH16,hyperltlsat,sohypercomplexity,regaud2024complexityhyperqptl}. \myquot{T2A-equiv.} stands for \myquot{equivalent to truth in second-order arithmetic}, and \myquot{T3A-equiv.} stands for \myquot{equivalent to truth in third-order arithmetic}. Unless explicitly specified, results hold for both semantics.}
    \renewcommand{\arraystretch}{1.25}
    \setlength{\tabcolsep}{1pt}
    \footnotesize
    \begin{tabular}{llll}
       Logic &  Satisfiability  &  Finite-state satisfiability  &  Model-checking \\
         \midrule
        $\ltl$ &  \pspace-compl.~\cite{SistlaC85} &  \pspace-compl.~\cite{SistlaC85} &  \pspace-compl.~\cite{SistlaC85} \\
        \rowcolor{lightgray!40}$\hyltl$ & $\Sigma_1^1$-compl.~\cite{hyperltlsatconf}  & $\Sigma_1^0$-compl.~\cite{FinkbeinerH16,hyperltlsat}  & \tower-compl.~\cite{Rabe16diss,MZ20} \\
            $\sohyltl$ &  \textcolor{Maroon}{\textbf{T3A-equiv.} (Thm.\ref{thm_satcomplexity})} &  \textcolor{Maroon}{\textbf{T3A-equiv.} (Thm.\ref{thm_finsatcomplexity})} &  \textcolor{Maroon}{\textbf{T3A-equiv.} (Thm.\ref{thm_mccomplexity})}\\
        \rowcolor{lightgray!40}$\sohyltlfp$ &  \textcolor{Maroon}{\textbf{T3A-equiv.} (Thm.\ref{thm_hyltlmm_complexity})} &  \textcolor{Maroon}{\textbf{T3A-equiv.} (Thm.\ref{thm_hyltlmm_complexity})} &  \textcolor{Maroon}{\textbf{T3A-equiv.} (Thm.\ref{thm_hyltlmm_complexity})}\\
        $\sohyltlfpmax$ &  \textcolor{Maroon}{\textbf{T3A-equiv.} (Thm.\ref{thm_hyltlmm_complexity})} &    \textcolor{Maroon}{\textbf{T3A-equiv.} (Thm.\ref{thm_hyltlmm_complexity})} &    \textcolor{Maroon}{\textbf{T3A-equiv.} (Thm.\ref{thm_hyltlmm_complexity})}\\
        \rowcolor{lightgray!40}$\sohyltlfpmin$ &    \textcolor{Maroon}{\textbf{T3A-equiv.} (Thm.\ref{thm_hyltlmm_complexity})} &    \textcolor{Maroon}{\textbf{T3A-equiv.} (Thm.\ref{thm_hyltlmm_complexity})} &    \textcolor{Maroon}{\textbf{T3A-equiv.} (Thm.\ref{thm_hyltlmm_complexity})}\\
        $\lfpsohyltlfp$ &  \textcolor{Maroon}{\textbf{\boldmath$\Sigma_1^2$-compl.} (std) (Thm.\ref{thm_satcomplexity_lfp_ss})} &    \textcolor{Maroon}{\textbf{T2A-equiv.} (Thm.\ref{thm_fssatmccomplexity_lfp})} &    \textcolor{Maroon}{\textbf{T2A-equiv.} (Thm.\ref{thm_fssatmccomplexity_lfp})}\\
                        & \textcolor{Maroon}{\textbf{\boldmath$\Sigma_1^1$-compl.} (cw) (Thm.\ref{thm_lfpsatcomplexity})} &&\\
        \rowcolor{lightgray!40}$\hyqptl$ &  $\Sigma^2_1$-compl.~\cite{regaud2024complexityhyperqptl} & $\Sigma_1^0$-compl.~\cite{Rabe16diss} & $\tower$-compl.~\cite{Rabe16diss}\\
        $\hyqptlplus$ & T3A-equiv.~\cite{regaud2024complexityhyperqptl} &  T3A-equiv.~\cite{regaud2024complexityhyperqptl} &  T3A-equiv.~\cite{regaud2024complexityhyperqptl}\\
    \end{tabular}
    \label{table_results}
\end{table}

\subsection{Outline}
In Section~\ref{sec:prels}, we introduce $\sohyltl$, the problems we are interested in, and how to use arithmetic to classify the complexity of undecidable problems.
In Section~\ref{sec_models}, we show that there are $\sohyltl$ sentences that only have uncountable models.
The proof of this result is then used in Section~\ref{sec_sat} to show that $\sohyltl$ satisfiability and finite-state satisfiability are equivalent to truth in third-order arithmetic. 
Then, in Section~\ref{sec_mc}, we show that similar techniques allow us to show that $\sohyltl$ model-checking is equivalent to truth in third-order arithmetic.

Then, in Section~\ref{sec_mm}, we show that all three problems for $\sohyltl$ can be reduced to the analogous problems for $\sohyltlfp$ where all second-order quantifiers are guarded. Hence, all three problems for $\sohyltlfp$ are equivalent to truth in third-order arithmetic as well.

Finally, in Section~\ref{sec_lfp}, we consider the fragment~$\lfpsohyltlfp$ where second-order quantification is restricted to computation of least fixed points. Here, all three problems are simpler, but still highly undecidable.

We conclude with discussing related work in Section~\ref{sec_relatedwork} and with mentioning some questions for future work in Section~\ref{sec_conc}.

\section{Preliminaries}
\label{sec:prels}

We denote the nonnegative integers by $\nats$. 
An alphabet is a nonempty finite set. 
The set of infinite words over an alphabet~$\Sigma$ is denoted by $\Sigma^\omega$.
Let $\ap$ be a nonempty finite set of atomic propositions. 
A trace over $\ap$ is an infinite word over the alphabet~$\pow{\ap}$.
Given a subset~$\ap' \subseteq \ap$, the $\ap'$-projection of a trace~$t(0)t(1)t(2) \cdots$ over $\ap$ is the trace~$(t(0) \cap \ap')(t(1) \cap \ap')(t(2) \cap \ap') \cdots$ over $\ap'$.
The $\ap'$-projection of $T \subseteq (\pow{\ap})^\omega$ is defined as the set of $\ap$-projections of traces in $T$.
Now, let $\ap$ and $\ap'$ be two disjoint sets, let $t$ be a trace over~$\ap$, and let $t'$ be a trace over $\ap'$. 
Then, we define $t \merge t'$ as the pointwise union of $t$ and $t'$, i.e., $t \merge t'$ is the trace over $\ap \cup \ap'$ defined as $(t(0) \cup t'(0))(t(1) \cup t'(1))(t(2) \cup t'(2))\cdots$.

A finite transition system~$\tsys = (V,E,I, \lambda)$ consists of a finite nonempty set~$V$ of vertices, a set~$E \subseteq V \times V$ of (directed) edges, a set~$I \subseteq V$ of initial vertices, and a labeling~$\lambda\colon V \rightarrow \pow{\ap}$ of the vertices by sets of atomic propositions.
We assume that every vertex has at least one outgoing edge.
A path~$\rho$ through~$\tsys$ is an infinite sequence~$\rho(0)\rho(1)\rho(2)\cdots$ of vertices with $\rho(0) \in I$ and $(\rho(n),\rho(n+1))\in E$ for every $n \ge 0$.
The trace of $\rho$ is defined as $ \lambda(\rho ) = \lambda(\rho(0))\lambda(\rho(1))\lambda(\rho(2))\cdots$.
The set of traces of $\tsys$ is $\traces(\tsys) = \set{\lambda(\rho) \mid \rho \text{ is a path through $\tsys$}}$.

\subsection{\texorpdfstring{\boldmath$\sohyltl$}{Second-order HyperLTL}}
\label{subsec_hyperltl}
Let $\fovar$ be a set of first-order trace variables (i.e., ranging over traces) and $\sovar$ be a set of second-order trace variables (i.e., ranging over sets of traces) such that $\fovar \cap \sovar = \emptyset$.
We typically use $\pi$ (possibly with decorations) to denote first-order variables and $X, Y, Z$ (possibly with decorations) to denote second-order variables.
Also, we assume the existence of two distinguished second-order variables~$\univar, \unidisvar  \in \sovar$ such that $\univar$ refers to the set~$(\pow{\ap})^\omega$ of all traces, and $\unidisvar$ refers to the universe of discourse (the set of traces the formula is evaluated over).

The formulas of $\sohyltl$ are given by the grammar
\begin{align*}
\phi  {} \cceq {}&{} \exists X.\ \phi \mid \forall X.\ \phi \mid \exists \pi \in X.\ \phi \mid \forall \pi \in X.\ \phi \mid \psi \\
\psi {}  \cceq {}&{} \proposition_\pi \mid \neg \psi \mid \psi \vee \psi \mid \X \psi \mid \psi \U \psi    
\end{align*}
where $\proposition$ ranges over $\ap$, $\pi$ ranges over $\fovar$, $X$ ranges over $\sovar$, and $\X$ (next) and $\U$ (until) are temporal operators. Conjunction~($\wedge$), exclusive disjunction~$(\oplus)$, implication~($\rightarrow$), and equivalence~$(\leftrightarrow)$ are defined as usual, and the temporal operators eventually~($\F$) and always~($\G$) are derived as $\F\psi = \neg \psi\U \psi$ and $\G \psi = \neg \F \neg \psi$. We measure the size of a formula by its number of distinct subformulas.

The semantics of $\sohyltl$ is defined with respect to a variable assignment, i.e., a partial mapping~$\Pi \colon \fovar \cup \sovar \rightarrow (\pow{\ap})^\omega \cup \pow{(\pow{\ap})^\omega}$
such that
\begin{itemize}
    \item if $\Pi(\pi)$ for $\pi\in\fovar$ is defined, then $\Pi(\pi) \in (\pow{\ap})^\omega$ and
    \item if $\Pi(X)$ for $X \in\sovar$ is defined, then $\Pi(X) \in \pow{(\pow{\ap})^\omega}$.
\end{itemize}
Given a variable assignment~$\Pi$, a variable~$\pi \in \fovar$, and a trace~$t$, we denote by $\Pi[\pi \mapsto t]$ the assignment that coincides with $\Pi$ on all variables but $\pi$, which is mapped to $t$. 
Similarly, for a variable~$X \in \sovar$, and a set~$T$ of traces, $\Pi[X \mapsto T]$ is the assignment that coincides with $\Pi$ everywhere but $X$, which is mapped to $T$.
Furthermore, $\suffix{\Pi}{j}$ denotes the variable assignment mapping every $\pi \in \fovar$ in $\Pi$'s domain to $\Pi(\pi)(j)\Pi(\pi)(j+1)\Pi(\pi)(j+2) \cdots $, the suffix of $\Pi(\pi)$ starting at position $j$.\footnote{Note that the assignment of variables~$X \in \sovar$ is not updated, as this is not necessary for our application: $\suffix{\Pi}{j}$ is used to define the semantics of the temporal operators and at that point the assignments to second-order variables is irrelevant, as we only consider formulas in prenex normal form.}

For a variable assignment~$\Pi$ we define 
\begin{itemize}
	\item $\Pi \models \proposition_\pi$ if $\proposition \in \Pi(\pi)(0)$,
	\item $\Pi \models \neg \psi$ if $\Pi \not\models \psi$,
	\item $\Pi \models \psi_1 \vee \psi_2 $ if $\Pi \models \psi_1$ or $\Pi  \models \psi_2$,
	\item $\Pi \models \X \psi$ if $\suffix{\Pi}{1} \models \psi$,
	\item $\Pi \models \psi_1 \U \psi_2$ if there is a $j \ge 0$ such that $\suffix{\Pi}{j} \models \psi_2$ and for all $0 \le j' < j$ we have $ \suffix{\Pi}{j'} \models \psi_1$ ,
	\item $\Pi \models \exists \pi \in X.\ \phi$ if there exists a trace~$t \in \Pi(X)$ such that $\Pi[\pi \mapsto t] \models \phi$ ,
	\item $\Pi \models \forall \pi \in X.\ \phi$ if for all traces~$t \in \Pi(X)$ we have $\Pi[\pi \mapsto t] \models \phi$,
    \item $\Pi \models \exists X.\ \phi$ if there exists a set~$T \subseteq (\pow{\ap})^\omega$ such that $\Pi[X\mapsto T] \models \phi$, and
    \item $\Pi \models \forall X.\ \phi$ if for all sets~$T \subseteq (\pow{\ap})^\omega$ we have $\Pi[X\mapsto T] \models \phi$.
\end{itemize}

A sentence is a formula in which only the variables $\univar,\unidisvar$ can be free. 
The variable assignment with empty domain is denoted by $\Pi_\emptyset$. 
We say that a set~$T$ of traces satisfies a $\sohyltl$ sentence~$\phi$, written $T \models \phi$, if $\Pi_\emptyset[\univar \mapsto (\pow{\ap})^\omega, \unidisvar \mapsto T]\models \phi$, i.e., if we assign the set of all traces to~$\univar$ and the set~$T$ to the universe of discourse~$\unidisvar$.
In this case, we say that $T$ is a model of $\varphi$.
A transition system~$\tsys$ satisfies $\phi$, written $\tsys \models \phi$, if $\traces(\tsys)\models \phi$.

Although $\sohyltl$ sentences are required to be in prenex normal form, $\sohyltl$ sentences are closed under Boolean combinations, which can easily be seen by transforming such a sentence into an equivalent one in prenex normal form (which might require renaming of variables).
Thus, in examples and proofs we will often use Boolean combinations of $\sohyltl$ sentences.

Throughout the paper, we use the following shorthands to simplify our formulas:
\begin{itemize}
    
    \item We write $\equals{\pi_1}{\pi_2}{\ap'}$ for a set~$\ap' \subseteq \ap$ for the formula~$ \G\bigwedge_{\proposition\in\ap'}(\proposition_{\pi_1} \leftrightarrow \proposition_{\pi_2})$ expressing that the $\ap'$-projection of $\pi_1$ and the $\ap'$-projection of $\pi_2$ are equal.
    
    \item We write $\pi\tracein X $  for the formula~$\exists \pi' \in X.\ \equals{\pi}{\pi'}{\ap}$ expressing that the trace~$\pi$ is in $X$. Note that this shorthand cannot be used under the scope of temporal operators, as we require formulas to be in prenex normal form.

\end{itemize}

\begin{rem}
\label{rem_hyltlisfragment}
$\hyltl$ is the fragment of $\sohyltl$ obtained by disallowing second-order quantification and only allowing first-order quantification of the form~$\exists \pi \in \unidisvar$ and $\forall \pi \in \unidisvar$, i.e., one can only quantify over traces from the universe of discourse.
Hence, we typically simplify our notation to $\exists\pi$ and $\forall\pi$ in $\hyltl$ formulas.
\end{rem}

\subsection{Closed-World Semantics}
Second-order quantification in $\sohyltl$ as defined by Beutner et al.~\cite{DBLP:conf/cav/BeutnerFFM23} (and introduced above) ranges over arbitrary sets of traces (not necessarily from the universe of discourse) and first-order quantification ranges over elements in such sets, i.e., (possibly) again over arbitrary traces.
To disallow this, we introduce \emph{closed-world} semantics for $\sohyltl$, only considering formulas that do not use the variable~$\univar$. We change the semantics of set quantifiers as follows, 
where the closed-world semantics of atomic propositions, Boolean connectives, temporal operators, and trace quantifiers is defined as before:
\begin{itemize}
    \item $\Pi \models_\cw \exists X.\ \phi$ if there exists a set~$T \subseteq \Pi(\unidisvar)$ such that $\Pi[X\mapsto T] \models_\cw \phi$, and 
    \item $\Pi \models_\cw \forall X.\ \phi$ if for all sets~$T \subseteq \Pi(\unidisvar)$ we have $\Pi[X\mapsto T] \models_\cw \phi$.
\end{itemize}
We say that $T \subseteq (\pow{\ap})^\omega$ satisfies $\phi$ under closed-world semantics, if $\Pi_\emptyset[\unidisvar \mapsto T] \models_\cw \phi$.
Hence, under closed-world semantics, second-order quantifiers only range over subsets of the universe of discourse. 
Consequently, first-order quantifiers also range over traces from the universe of discourse. 

\begin{lem}
\label{lem_cwtoclassical}
Every $\univar$-free $\sohyltl$ sentence~$\phi$ can be translated in polynomial time (in $\size{\varphi}$) into a $\sohyltl$ sentence~$\phi'$ such that for all sets~$T$ of traces we have that $T \models_\cw \phi$  if and only if  $T \models \phi'$ (under standard semantics).
\end{lem}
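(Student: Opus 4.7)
The plan is to relativize every second-order quantifier to the universe of discourse~$\unidisvar$. Define the macro $\consistent(X) := \forall \pi \in X.\ \exists \pi' \in \unidisvar.\ \equals{\pi}{\pi'}{\ap}$, which expresses that every trace in $X$ occurs in $\unidisvar$. Then define a translation~$(\cdot)^*$ by structural recursion: the quantifier-free LTL fragment is left unchanged, first-order quantifiers descend into their bodies ($(\exists \pi \in X.\,\phi)^* := \exists \pi \in X.\,\phi^*$ and symmetrically for $\forall$), and second-order quantifiers are guarded via $(\exists X.\,\phi)^* := \exists X.\ \consistent(X) \land \phi^*$ and $(\forall X.\,\phi)^* := \forall X.\ \consistent(X) \rightarrow \phi^*$. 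Finally, let $\phi'$ be a prenex normal form of $\phi^*$; this exists and is computable in polynomial time because $\sohyltl$ is closed under Boolean combinations.

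For correctness, I would prove by structural induction on $\phi$ the strengthened statement: for every variable assignment~$\Pi$ with $\Pi(X) \subseteq \Pi(\unidisvar)$ for every second-order variable~$X$ in its domain, $\Pi \models_\cw \phi$ iff $\Pi \models \phi^*$. The crucial case is $\exists X.\,\phi$: under closed-world semantics the quantifier ranges over $T \subseteq \Pi(\unidisvar)$, while under standard semantics the guard $\consistent(X)$ forces $T$ to consist of traces $\ap$-equal to some trace in $\Pi(\unidisvar)$. Since traces are elements of $(\pow{\ap})^\omega$, $\ap$-equality coincides with set-theoretic equality, so the two ranges of $T$ coincide. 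The invariant $\Pi(X) \subseteq \Pi(\unidisvar)$ is preserved across the induction because first-order quantifiers do not modify second-order assignments, while the guard $\consistent(X)$ re-establishes the invariant whenever a fresh second-order variable is introduced. The lemma then follows by specializing $\Pi$ to $\Pi_\emptyset[\unidisvar \mapsto T]$; the standard-semantics assignment also maps $\univar$, but $\phi'$ remains $\univar$-free, so this is harmless.

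The polynomial size bound is immediate: each second-order quantifier adds a constant-sized guard $\consistent(X)$, so $|\phi^*| = O(|\phi|)$, and the standard prenexification (renaming bound variables apart and then pulling quantifiers out) preserves this bound. The main obstacle, mild as it is, amounts to bookkeeping: verifying that $\consistent(X)$ is a legal $\univar$-free $\sohyltl$ formula (it is, since it only combines first-order quantifiers over $X$ and $\unidisvar$ with the temporal abbreviation $\equals{\pi}{\pi'}{\ap}$) and tracking that the invariant $\Pi(X) \subseteq \Pi(\unidisvar)$ is genuinely maintained through the inductive clauses, in particular under first-order quantification (where the newly bound $\pi$ is a trace and thus leaves all second-order components of the assignment untouched). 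Once this bookkeeping is in place, the correspondence between the two semantics becomes essentially definitional.
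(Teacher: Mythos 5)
Your proposal is correct and takes essentially the same approach as the paper: your guard is literally the paper's $\psi_{\subseteq \unidisvar}(X) = \forall \pi \in X.\ \pi \tracein \unidisvar$ once the shorthand $\pi \tracein \unidisvar$ is unfolded to $\exists \pi' \in \unidisvar.\ \equals{\pi}{\pi'}{\ap}$, and the recursive guarding of second-order quantifiers followed by prenexification is identical. The paper merely states the construction and asserts the equivalence, whereas you additionally spell out the inductive invariant $\Pi(X) \subseteq \Pi(\unidisvar)$, but the underlying argument is the same.
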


\begin{proof}
Second-order quantification over subsets of the universe of discourse can easily be mimicked by guarding classical quantifiers ranging over arbitrary sets.
Here, we rely on the formula~$\psi_{\subseteq \unidisvar}(X) = \forall \pi \in X.\ \pi \tracein \unidisvar$, which expresses that every trace in $X$ is also in $\unidisvar$.

Now, given an $\univar$-free $\sohyltl$ sentence~$\phi$, let $\phi'$ be the $\sohyltl$ sentence obtained by recursively replacing
\begin{itemize}
    \item each existential second-order quantifier~$\exists X.\ \psi$ in $\phi$ by 
    $\exists X.\ \psi_{\subseteq \unidisvar}(X) \wedge \psi $, 
    
    \item each universal second-order quantifier~$\forall X.\ \psi$ in $\phi$ by 
    $\forall X.\ \psi_{\subseteq \unidisvar}(X)\rightarrow \psi$,
\end{itemize}
and then bringing the resulting sentence into prenex normal form, which can be done as no quantifier is under the scope of a temporal operator.
Then, we have $T \models_\cw \phi$  if and only if  $T \models \phi'$ (under standard semantics).
\end{proof}

Thus, all complexity upper bounds we derive for standard semantics also hold for closed-world semantics and all lower bounds for closed-world semantics hold for standard semantics.

\begin{rem}
\label{rem_cwvsclassicalforfullsetoftraces}
Let $\phi$ be an $\univar$-free $\sohyltl$ sentence over $\ap$.
We have $(\pow{\ap})^\omega \models \varphi$ (under standard semantics) if and only if $(\pow{\ap})^\omega \models_\cw \varphi$, as the second-order quantifiers range in both cases over subsets of $(\pow{\ap})^\omega$, which implies that the trace quantifiers in both cases range over traces from $(\pow{\ap})^\omega$.
\end{rem}

\subsection{Problem Statement}

We are interested in the complexity of the following three problems for fragments of $\sohyltl$ for both semantics:
\begin{itemize}
    \item Satisfiability: Given a sentence~$\phi$, does it have a model, i.e., is there a set~$T$ of traces such that $T \models \phi$?
    \item Finite-state satisfiability: Given a sentence~$\phi$, is it satisfied by a finite transition system, i.e., is there a finite transition system~$\tsys$ such that $\tsys \models \phi$?
    \item Model-checking: Given a sentence~$\phi$ and a finite transition system~$\tsys$, do we have $\tsys \models \phi$?
\end{itemize}

\subsection{Arithmetic}
\label{subsec_arithmetic}

To capture the complexity of undecidable problems, we consider formulas of arithmetic, i.e., predicate logic with signature~$(+, \cdot, <, \in)$, evaluated over the structure~$\natsstruct$. 
A type~$0$ object is a natural number in $\nats$, a type~$1$ object is a subset of $\nats$, and a type~$2$ object is a set of subsets of $\nats$.
In the following, we use lower-case roman letters~$x,y$ (possibly with decorations) for first-order variables, upper-case roman letters~$X,Y$ (possibly with decorations) for second-order variables, and upper-case calligraphic roman letters~$\mathcal{X},\mathcal{Y}$ (possibly with decorations) for third-order variables. 

First-order arithmetic allows to quantify over type~$0$ objects, second-order arithmetic allows to quantify over type~$0$ and type~$1$ objects, and third-order arithmetic allows to quantify over type~$0$, type~$1$, and type~$2$ objects.
Note that every fixed natural number is definable in first-order arithmetic, so we freely use them as syntactic sugar. Similarly, equality can be eliminated if necessary, as it can be expressed using~$<$.

Truth in second-order arithmetic is the following problem: given a sentence~$\phi$ of second-order arithmetic, does $\natsstruct $ satisfy $\phi$?
Truth in third-order arithmetic is defined analogously.
Furthermore, arithmetic formulas with a single free first-order variable define sets of natural numbers. We are interested in the classes
\begin{itemize}
    \item $\Sigma_1^1$ containing sets of the form
    \[\set{x \in\nats \mid \exists X_1 \subseteq \nats.\ \cdots \exists X_k\subseteq \nats.\ \psi(x, X_1, \ldots,X_k )},\] where $\psi$ is a formula of arithmetic with arbitrary quantification over type~$0$ objects (but no other quantifiers), and
    \item $\Sigma_1^2$ containing sets of the form
    \[
    \set{x \in\nats \mid \exists \mathcal{X}_1\subseteq \pow{\nats}.\ \cdots \exists \mathcal{X}_k\subseteq \pow{\nats}. \ \psi(x, \mathcal{X}_1, \ldots,\mathcal{X}_k)},
    \]
    where $\psi$ is a formula of arithmetic with arbitrary quantification over type~$0$ and type~$1$ objects (but no other quantifiers).
\end{itemize}

Let $A$ be an alphabet.
We say that a language~$L \subseteq A^*$ is $\Sigma_1^1$-complete if 
\begin{itemize}
    \item $\set{e(w) \mid w \in L}$ is in $\Sigma_1^1$ (where $e\colon A^* \rightarrow \nats$ is a computable encoding of words over $A$ by natural numbers), and

    \item for every $S \in \Sigma_1^1$ there is a computable function function~$f\colon \nats \rightarrow A^*$ such that $n \in S$ if and only if $f(n) \in L$.

\end{itemize}
We define $\Sigma_1^2$-completeness analogously.
Note that all encodings~$e$ and functions~$f$ we use are polynomial-time computable.

Thus, while, e.g., $\Sigma_1^1$ captures the complexity of formulas with a specific pattern of quantifier alternation, truth in, e.g., second-order arithmetic allows for arbitrary second-order formulas.

\section{\texorpdfstring{The Cardinality of $\sohyltl$ Models}{The Cardinality of Second-order HyperLTL Models}} 
\label{sec_models}

In this section, we investigate the cardinality of models of satisfiable $\sohyltl$ sentences, i.e., the number of traces in the model.
This is an important step in classifying the complexity of $\sohyltl$.

We begin by stating a (trivial) upper bound, which follows from the fact that models are sets of traces.
Here, $\contcard$ denotes the cardinality of the continuum (equivalently, the cardinality of $\pow{\nats}$ and of $(\pow{\ap})^\omega$ for any finite nonempty $\ap$). 

\begin{prop}
    Every satisfiable $\sohyltl$ sentence has a model of cardinality at most~$\contcard$.
\end{prop}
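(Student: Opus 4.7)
The plan is essentially to unfold the definitions, since the bound really is trivial. A model of a $\sohyltl$ sentence over atomic propositions~$\ap$ is, by the semantics defined in Section~\ref{subsec_hyperltl}, a set~$T \subseteq (\pow{\ap})^\omega$ of traces. Hence, any model is a subset of the fixed set~$(\pow{\ap})^\omega$, so its cardinality is bounded above by $\size{(\pow{\ap})^\omega}$.

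I would then compute this cardinality explicitly. Since $\ap$ is a nonempty finite alphabet, $\pow{\ap}$ is a finite set of size~$2^{\size{\ap}} \ge 2$. Consequently,
\[
\size{(\pow{\ap})^\omega} = \bigl(2^{\size{\ap}}\bigr)^{\aleph_0} = 2^{\size{\ap} \cdot \aleph_0} = 2^{\aleph_0} = \contcard,
\]
using the standard cardinal arithmetic identity $\kappa^{\aleph_0} = 2^{\aleph_0}$ for every finite $\kappa \ge 2$. Combining this with the observation above yields $\size{T} \le \contcard$ for every model~$T$ of a satisfiable $\sohyltl$ sentence, which is the claim.

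There is no real obstacle here: the bound follows purely from the definition of the semantics, which does not place any restriction on the size of the universe of discourse beyond it being a set of traces over~$\ap$. The interesting content of this section is rather the matching lower bound (i.e., the existence of satisfiable sentences forcing uncountable models), which the authors announce will be shown by lifting the $\hyltl$ lower bound construction to $\sohyltl$; the upper bound itself is recorded only as a ceiling against which to measure those constructions.
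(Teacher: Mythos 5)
Your argument is correct and is exactly the paper's (unstated) justification: the paper records this proposition without proof, noting only that it "follows from the fact that models are sets of traces" and that $(\pow{\ap})^\omega$ has cardinality~$\contcard$ for any finite nonempty~$\ap$. Your explicit cardinal-arithmetic computation just fills in the routine detail the authors left implicit.
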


In this section, we show that this trivial upper bound is tight. 

\begin{rem}
\label{remark_unsatisfactory}
There is a very simple, albeit equally unsatisfactory way to obtain the desired lower bound: 
Consider $\forall \pi \in \univar.\ \pi\tracein\unidisvar$ expressing that every trace in the set of all traces is also in the universe of discourse, i.e., $(\pow{\ap})^\omega$ is its only model over $\ap$. 
However, this crucially relies on the fact that $\univar$ is, by definition, interpreted as the set of all traces.
In fact, the formula does not even use second-order quantification.
\end{rem}

We show how to construct a sentence that has only uncountable models, and which retains that property under closed-world semantics (which in particular means it cannot use $\univar$). 
This should be compared with $\hyltl$, where every satisfiable sentence has a countable model~\cite{FZ17}: 
Unsurprisingly, the addition of (even closed-world) second-order quantification increases the cardinality of minimal models, even without cheating.

\begin{exa}
\label{example_hyltlcardlb}
We begin by recalling a construction of Finkbeiner and Zimmermann giving a satisfiable $\hyltl$ sentence~$\psi$ that has no finite models~\cite{FZ17}.
The sentence intuitively posits the existence of a unique trace for every natural number~$n$. Our lower bound for $\sohyltl$ builds upon that construction.

Fix $\ap=\set{\inprop}$ and consider the conjunction~$\psi = \psi_1 \wedge \psi_2 \wedge \psi_3$ of the following three formulas:
\begin{enumerate}
    \item $\psi_1 = \forall \pi.\ \neg \inprop_\pi \U ( \inprop_\pi \wedge \X \G \neg \inprop_\pi ) $: every trace in a model is of the form~$\emptyset^n \set{\inprop} \emptyset^\omega$ for some $n \in\nats$, i.e., every model is a subset of $\set{\emptyset^n \set{\inprop} \emptyset^\omega \mid n\in\nats}$. 
    
    \item $\psi_2 = \exists\pi.\ \inprop_\pi$: the trace $\emptyset^0 \set{\inprop} \emptyset^\omega$ is in every model.
    
    \item $\psi_3 = \forall \pi.\ \exists\pi'.\ \F (\inprop_\pi \wedge \X \inprop_{\pi'})$: if $\emptyset^n \set{\inprop} \emptyset^\omega$ is in a model for some $n\in\nats$, then also $\emptyset^{n+1} \set{\inprop} \emptyset^\omega$.
    
\end{enumerate}
Then, $\psi$ has exactly one model (over $\ap$), namely $\set{\emptyset^n \set{\inprop} \emptyset^\omega \mid n\in\nats}$.
\end{exa}

A trace of the form~$\emptyset^n \set{\inprop} \emptyset^\omega $ encodes the natural number~$n$ and $\psi$ expresses that every model contains the encodings of all natural numbers and nothing else.
But we can of course also encode sets of natural numbers with traces as follows: a trace~$t$ over a set of atomic propositions containing $\inprop$ encodes the set~$\set{n \in \nats \mid \inprop \in t(n)}$. 
In the following, we show that second-order quantification allows us to express the existence of the encodings of all subsets of natural numbers by requiring that for every subset~$S \subseteq \nats$ (quantified as the set $S' = \set{\emptyset^n \set{\inprop} \emptyset^\omega \mid n\in S}$ of traces) there is a trace~$t$ encoding~$S$, which means $\inprop$ is in $t(n)$ if and only if $S'$ contains a trace in which $\inprop$ holds at position $n$.
This equivalence can be expressed in $\sohyltl$.
For technical reasons, we do not capture the equivalence directly but instead use encodings of both the natural numbers that are in $S$ and the natural numbers that are not in $S$.

\begin{thm}
\label{thm_modelsizelowerbound}
There is a satisfiable $\univar$-free $\sohyltl$ sentence that only has models of cardinality~$\contcard$ (both under standard and closed-world semantics).
\end{thm}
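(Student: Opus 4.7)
The plan is to strengthen Example~\ref{example_hyltlcardlb} by using second-order quantification to force the existence of a distinct trace for every subset $S \subseteq \nats$. I work over $\ap = \{\inprop, \soinprop\}$ and distinguish two kinds of traces: a \emph{number encoding} is $t_n = \emptyset^n\{\inprop\}\emptyset^\omega$ for some $n \in \nats$, and a \emph{set encoding} is any trace in which $\inprop$ never holds (so it carries $\soinprop$ at an arbitrary subset of positions, encoding that subset). Both shapes are definable by straightforward $\hyltl$ formulas in a single free trace variable, which I abbreviate $\mathit{numEnc}(\pi)$ and $\mathit{setEnc}(\pi)$.

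The sentence~$\phi$ I propose is the conjunction of three parts. First, a shape constraint demands that every trace in $\unidisvar$ be either a number encoding or a set encoding. Second, an adaptation of the sentence $\psi$ from Example~\ref{example_hyltlcardlb}, suitably guarded by $\mathit{numEnc}$, forces $t_n \in \unidisvar$ for every $n \in \nats$. Third, a second-order constraint
\begin{align*}
\psi_{\mathrm{sets}} = \forall X.\ &\bigl(\psi_{\subseteq\unidisvar}(X) \wedge \forall \pi \in X.\ \mathit{numEnc}(\pi)\bigr) \to {} \\
&\exists \pi^* \in \unidisvar.\ \mathit{setEnc}(\pi^*) \wedge \forall \pi \in \unidisvar.\ \mathit{numEnc}(\pi) \to \bigl(\pi \tracein X \leftrightarrow \F(\inprop_\pi \wedge \soinprop_{\pi^*})\bigr),
\end{align*}
where $\psi_{\subseteq\unidisvar}(X) = \forall \pi \in X.\ \pi \tracein \unidisvar$ is the guard from the proof of Lemma~\ref{lem_cwtoclassical}. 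Exploiting that $\pi$ is a number encoding, the subformula $\F(\inprop_\pi \wedge \soinprop_{\pi^*})$ tests whether $\pi^*$ carries $\soinprop$ at the unique $\inprop$-position of $\pi$, so $\psi_{\mathrm{sets}}$ really says that for every $X \subseteq \unidisvar$ of number encodings there is a matching set encoding in $\unidisvar$.

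For any $S \subseteq \nats$ the set $X_S = \{t_n \mid n \in S\}$ is a subset of $\unidisvar$ consisting of number encodings, so $\psi_{\mathrm{sets}}$ forces a set encoding $\pi^*_S \in \unidisvar$ carrying $\soinprop$ at exactly the positions in $S$. Distinct $S$ yield distinct $\pi^*_S$, so every model of $\phi$ contains at least $\contcard$ traces; together with the trivial upper bound this yields cardinality exactly $\contcard$. Satisfiability is witnessed by $T = \{t_n \mid n \in \nats\} \cup \{t_S \mid S \subseteq \nats\}$, where $t_S$ is the set encoding with $\soinprop$ precisely at positions in $S$; all three conjuncts are immediate. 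The guard $\psi_{\subseteq\unidisvar}(X)$ is vacuous under closed-world semantics but is needed under standard semantics so that the same sentence works in both settings; crucially, $\univar$ is not used.

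I do not foresee a genuine obstacle. The sentence as written is not in prenex normal form, but since no quantifier appears under a temporal operator, the shorthand $\pi \tracein X$ and the Boolean combinations can be expanded and prenexed routinely, as noted immediately after the definition of $\sohyltl$. The only care needed is in writing the $\hyltl$ definitions of $\mathit{numEnc}$ and $\mathit{setEnc}$ and in the straightforward verification that $T$ above satisfies the shape, successor, and matching constraints.
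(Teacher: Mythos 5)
Your proof is correct and follows essentially the same strategy as the paper's: force all number-encoding traces into every model, universally quantify over subsets of them, and demand a trace of the model encoding each such subset, which yields $\contcard$ many traces. The one divergence is that you state consistency between $X$ and the set-encoding trace as a direct biconditional over a single (positive) number encoding, whereas the paper avoids the direct equivalence \myquot{for technical reasons} and instead uses dual positive ($\posprop$) and negative ($\negprop$) encodings with one-directional implications; your version is nonetheless fine, since the negative occurrence of $\pi \tracein X$ in the biconditional prenexes to a universal quantifier over $X$ and no quantifier ends up under a temporal operator.
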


\begin{proof}
We first prove that there is a satisfiable $\univar$-free $\sohyltl$ sentence~$\phi_\allsets$  whose unique model (under standard semantics) has cardinality~$\contcard$.
To this end, we fix $\ap = \set{\posprop, \negprop,\setprop,\inprop}$ and consider the conjunction~$\phi_\allsets = \phi_0 \wedge \cdots \wedge \phi_4$ of the following formulas:
\begin{itemize}

\item $\phi_0 = \forall \pi \in \unidisvar.\ \bigvee_{\proposition \in \set{\posprop, \negprop,\setprop}} \G (\proposition_\pi \wedge \bigwedge_{\proposition' \in \set{\posprop, \negprop,\setprop} \setminus \set{\proposition}} \neg \proposition'_\pi )$: In each trace of a model, one of the propositions in $\set{\posprop, \negprop,\setprop}$ holds at every position and the other two propositions in $\set{\posprop, \negprop,\setprop}$ hold at none of the positions. Consequently, we speak in the following about type~$\proposition$ traces for $\proposition \in \set{\posprop, \negprop,\setprop}$.

\item $\phi_1 = \forall \pi \in \unidisvar.\ (\posprop_\pi \vee \negprop_\pi) \rightarrow \neg \inprop_\pi \U ( \inprop_\pi \wedge \X \G \neg \inprop_\pi )$: Type~$\proposition$ traces for $\proposition \in \set{\posprop,\negprop}$ in the model have the form~$\set{\proposition}^n \set{\inprop, \proposition} \set{\proposition}^\omega$ for some $n\in\nats$.

\item $\phi_2 = \bigwedge_{\proposition \in \set{\posprop,\negprop}} \exists\pi \in \unidisvar.\ \proposition_\pi \wedge \inprop_\pi$: for both $\proposition \in \set{\posprop,\negprop}$, the type~$\proposition$ trace $\set{\proposition}^0 \set{\inprop,\proposition} \set{\proposition}^\omega$ is in every model.

\item $\phi_3 = \bigwedge_{\proposition \in \set{\posprop,\negprop}}\forall \pi \in \unidisvar.\ \exists\pi'\in\unidisvar.\ \proposition_\pi \rightarrow ( \proposition_{\pi'} \wedge  \F (\inprop_\pi \wedge \X \inprop_{\pi'}))$: for both $\proposition \in \set{\posprop,\negprop}$, if the type~$\proposition$ trace~$\set{\proposition}^n \set{\inprop,\proposition} \set{\proposition}^\omega$ is in a model for some $n\in\nats$, then also $\set{\proposition}^{n+1} \set{\inprop,\proposition} \set{\proposition}^\omega$.
\end{itemize}

The formulas $\phi_1, \phi_2, \phi_3$ are similar to the formulas~$\psi_1, \psi_2, \psi_3$ from Example~\ref{example_hyltlcardlb}. 
So, every model of $\phi_0 \wedge \cdots \wedge \phi_3$ contains $\set{\set{\posprop}^n \set{\inprop,\posprop} \set{\posprop}^\omega \mid n\in\nats}$ and $\set{\set{\negprop}^n \set{\inprop,\negprop} \set{\negprop}^\omega \mid n\in\nats}$ as subsets, and no other type~$\posprop$ or type~$\negprop$ traces.

Now, consider a set~$T$ of traces over $\ap$ (recall that second-order quantification ranges over arbitrary sets, not only over subsets of the universe of discourse).
We say that $T$ is contradiction-free if it only contains traces of the form~$\set{\posprop}^n \set{\inprop,\posprop} \set{\posprop}^\omega$ or $\set{\negprop}^n \set{\inprop,\negprop} \set{\negprop}^\omega$ and if there is no $n \in \nats$ such that $\set{\posprop}^n \set{\inprop,\posprop} \set{\posprop}^\omega \in T$ and $\set{\negprop}^n \set{\inprop,\negprop} \set{\negprop}^\omega \in T$. 
Furthermore, a trace~$t$ over $\ap$ is consistent with a contradiction-free $T$ if 
\begin{description}
    \item[(C1)] $\set{\posprop}^n \set{\inprop,\posprop} \set{\posprop}^\omega \in T$ implies $\inprop\in t(n)$ and
    \item[(C2)] $\set{\negprop}^n \set{\inprop,\negprop} \set{\negprop}^\omega \in T$ implies $\inprop\notin t(n)$.
\end{description}
Note that $T$ does not necessarily specify the truth value of $\inprop$ in every position of $t$, i.e., in those positions~$n\in\nats$ where neither $\set{\posprop}^n \set{\inprop,\posprop} \set{\posprop}^\omega$ nor $\set{\negprop}^n \set{\inprop,\negprop} \set{\negprop}^\omega$ are in $T$.
Nevertheless, for every trace~$t$ over $\set{\inprop}$ there is a contradiction-free $T$ such that the $\set{\inprop}$-projection of every trace~$t'$ over $\ap$ that is consistent with $T$ is equal to $t$. 
Thus, each of the uncountably many traces over $\set{\inprop}$ is induced by some subset of the model.
\begin{itemize}
\item Hence, we define $\phi_4$ as the formula
\begin{align*}
\forall X.\ {}&{} \overbrace{[ \forall \pi \in X.\ (\pi\tracein \unidisvar \wedge (\posprop_\pi \vee \negprop_\pi)) \wedge \forall \pi \in X.\ \forall \pi' \in X.\ (\posprop_\pi \wedge \negprop_{\pi'}) \rightarrow \neg \F(\inprop_{\pi} \wedge \inprop_{\pi'}) ]}^{\text{$X$ is contradiction-free}} \rightarrow \\
&{}\exists \pi'' \in \unidisvar.\ \forall \pi''' \in X.\ \setprop_{\pi''} \wedge  \underbrace{(\posprop_{\pi'''} \rightarrow \F(\inprop_{\pi'''} \wedge \inprop_{\pi''}))}_{\text{(C1)}}
\wedge 
\underbrace{(\negprop_{\pi'''} \rightarrow \F(\inprop_{\pi'''} \wedge \neg\inprop_{\pi''}))}_{\text{(C2)}},
\end{align*} 
expressing that for every contradiction-free set of traces~$T$, there is a type~$\setprop$ trace~$t''$ in the model (note that $\pi''$ is required to be in $\unidisvar$) that is consistent with $T$. 
\end{itemize}
While $\phi_\allsets$ is not in prenex normal form, it can easily be turned into an equivalent formula in prenex normal form (at the cost of readability).

Now, the set 
\begin{align*}
T_\allsets ={}&{} \set{\set{\posprop}^n \set{\inprop,\posprop} \set{\posprop}^\omega \mid n\in\nats} \cup \set{\set{\negprop}^n \set{\inprop,\negprop} \set{\negprop}^\omega \mid n\in\nats} \cup\\
&{}\set{ (t(0) \cup \set{\setprop})(t(1) \cup \set{\setprop})(t(2) \cup \set{\setprop}) \cdots \mid t \in (\pow{\set{\inprop}})^\omega} 
\end{align*}
of traces satisfies $\phi_\allsets$.
On the other hand, every model of $\phi_\allsets$ must indeed contain $T_\allsets$ as a subset, as $\phi_\allsets$ requires the existence of all of its traces in the model.
Finally, due to $\phi_0$ and $\phi_1$, a model (over $\ap$) cannot contain any traces that are not in $T_\allsets$,
i.e., $T_\allsets$ is the unique model of $\phi_\allsets$.

To conclude, we just remark that 
\[\set{ (t(0) \cup \set{\setprop})(t(1) \cup \set{\setprop})(t(2) \cup \set{\setprop}) \cdots \mid t \in (\pow{\set{\inprop}})^\omega} \subseteq T_\allsets\] has indeed cardinality~$\contcard$, as
$(\pow{\set{\inprop}})^\omega$ has cardinality~$\contcard$.

Finally, let us consider closed-world semantics. The second-order quantifier in $\phi_4$ (the only one in $\phi_\allsets$) is already restricted to subsets of the universe of discourse.
Thus, $\phi_\allsets$ has the unique model $T_\allsets$ even under closed-world semantics.
\end{proof}

\section{\texorpdfstring{The Complexity of $\sohyltl$ Satisfiability}{The Complexity of Second-order HyperLTL Satisfiability}}
\label{sec_sat}

A $\sohyltl$ sentence is satisfiable if it has a model. The $\sohyltl$ satisfiability problem asks, given a $\sohyltl$ sentence~$\phi$, whether $\phi$ is satisfiable. In this section, we determine tight bounds on the complexity of $\sohyltl$ satisfiability and some of its variants.

Recall that in Section~\ref{sec_models}, we encoded sets of natural numbers as traces over a set of propositions containing $\inprop$ and encoded natural numbers as singleton sets.
The proof of Theorem~\ref{thm_modelsizelowerbound} relies on constructing a sentence that requires each of its models to encode every subset of $\nats$ by a trace in the model. 
Hence, sets of traces can encode sets of sets of natural numbers, i.e., type~$2$ objects. 

Another important ingredient in the following proof is the implementation of addition and multiplication in $\hyltl$~\cite{hyperltlsat}.
Let $\ap_\arith = \set{\argone, \argtwo, \res, \add, \mult}$ and let $T_\plustimes$ be the set of all traces $t \in (\pow{\ap_\arith})^\omega$ such that
\begin{itemize}

    \item there are unique $n_1, n_2, n_3 \in \nats$  with $\argone \in t(n_1)$, $\argtwo \in t(n_2)$, and $\res \in t(n_3)$, and

    \item either $\add \in t(n)$ and $\mult \notin t(n)$ for all $n$, and $n_1+n_2 = n_3$, or $\mult \in t(n)$ and $\add \notin t(n)$ for all $n$, and $n_1 \cdot n_2 = n_3$.

\end{itemize}

\begin{prop}[Lemma 5.5 of \cite{hyperltlsat}]
\label{prop_plustimesinhyperltl}
There is a satisfiable $\hyltl$ sentence $\phi_\plustimes$ such that the $\ap_\arith$-projection of every model of $\phi_\plustimes$ is $T_\plustimes$.
\end{prop}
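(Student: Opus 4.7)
The plan is to construct $\phi_\plustimes$ as a $\hyltl$ sentence over an alphabet $\ap \supseteq \ap_\arith$, possibly enriched by a few auxiliary propositions, formed as the conjunction of a \myquot{shape} part that fixes the syntactic form of every trace in a model and a \myquot{closure} part that forces the arithmetically correct triples to be realized. The goal is that after projection to $\ap_\arith$, the outcome is exactly $T_\plustimes$.

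The shape part is routine $\hyltl$. For every trace $\pi$ one asserts, for each $\prop \in \set{\argone,\argtwo,\res}$, the standard \myquot{exactly one position} pattern~$\neg\prop_\pi \U (\prop_\pi \wedge \X \G \neg \prop_\pi)$, and one adds the disjunction~$(\G \add_\pi \wedge \G \neg \mult_\pi) \vee (\G \mult_\pi \wedge \G \neg \add_\pi)$ to enforce the $\add$/$\mult$ dichotomy. This alone guarantees that the $\ap_\arith$-projection of every model is a subset of $T_\plustimes$.

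The closure part encodes the Peano-style recurrences. I would include closed existential subformulas asserting the presence of base-case $\add$- and $\mult$-traces realizing $(0,0,0)$. For addition, a $\forall\exists$ schema asserts that for every $\add$-trace $\pi$ there is an $\add$-trace $\pi'$ with the same $\argone$-position but with $\argtwo$ and $\res$ each shifted one step to the right; a symmetric rule shifts $\argone$ and $\res$. Shifts between two traces are $\hyltl$-expressible, e.g., by the pattern $\F(\prop_\pi \wedge \X \prop_{\pi'})$ combined with the uniqueness already supplied by the shape part. Iterating these successor rules from $(0,0,0)$ produces every triple $(n_1,n_2,n_1{+}n_2)$. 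Multiplication is handled by the reduction $n_1 \cdot (n_2{+}1) = n_1 \cdot n_2 + n_1$: for every $\mult$-trace $\pi$ with $\argtwo$-position $>0$, a three-trace $\forall\exists\exists$ assertion demands a companion $\mult$-trace witnessing $(n_1,n_2)$ together with an $\add$-trace, whose existence is already guaranteed by the addition part, linking the three relevant positions.

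The main obstacle is showing that these recurrences, combined with the base cases, force exactly the arithmetic relation — neither missing a triple nor admitting a spurious one — after projection. The argument is a double induction on $n_2$: first that every $\add$-trace with $\argtwo$-position $n_2$ has its $\res$-position equal to $n_1+n_2$ (using the uniqueness from the shape part to propagate the successor rule), and then the analogous induction for $\mult$ using the now-established addition. Combined with the existence of the base cases, this shows that $T_\plustimes$, augmented with any auxiliary propositions needed for the shifts, is itself a model, giving satisfiability, and that every model projects exactly to $T_\plustimes$.
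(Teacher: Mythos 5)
There is a genuine gap in the soundness direction, i.e., in ruling out spurious triples. First, a small but telling slip: the shape part alone does \emph{not} guarantee that the $\ap_\arith$-projection of every model is a subset of $T_\plustimes$, since membership in $T_\plustimes$ also requires $n_1+n_2=n_3$ (resp.\ $n_1\cdot n_2=n_3$), which the shape constraints do not touch. More importantly, your closure rules for addition are stated in the \emph{forward} direction (\myquot{for every $\add$-trace there is an $\add$-trace with $\argtwo$ and $\res$ shifted one step to the right}), and such $\forall\exists$ rules are purely generative: they force the presence of all correct triples reachable from $(0,0,0)$, but they never exclude an incorrect one. Concretely, take $T_\plustimes$ together with all $\add$-traces encoding $(1+a,\,1+b,\,5+a+b)$ for $a,b\ge 0$; this set satisfies the shape part, the base cases, and both forward shift rules, yet its projection strictly contains $T_\plustimes$. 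Consequently the induction on $n_2$ you invoke (\myquot{every $\add$-trace with $\argtwo$-position $n_2$ has $\res$-position $n_1+n_2$}) cannot be carried out: the induction step must relate an \emph{arbitrary} $\add$-trace with $\argtwo$ at $n_2>0$ to a predecessor with $\argtwo$ at $n_2-1$, whereas your rules only supply successors of traces already present.

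What is needed is a \emph{downward} correctness constraint imposed on every $\add$-trace: if $\argtwo$ holds at position $0$ then $\res$ must coincide with the $\argone$-position, and otherwise there must exist an $\add$-trace with the same $\argone$ whose $\argtwo$- and $\res$-positions are each one step to the \emph{left}. Soundness then follows by downward induction on $n_2$, and completeness by your forward generation from the base case. Your multiplication rule is, by contrast, already phrased in the downward direction (reducing $\argtwo$-position $n_2+1$ to $n_2$ via an $\add$-trace) and is essentially right --- but it only yields correctness of the $\mult$-traces once soundness of the $\add$-traces it relies on has been secured. With the addition rules repaired in this way, the remainder of your argument (satisfiability via the intended model and exactness of the projection) goes through and matches the cited construction.
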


Combining the capability of quantifying over type~$0$, type~$1$, and type~$2$ objects and the encoding of addition and multiplication, we show that $\sohyltl$ is at least as hard as truth in third-order arithmetic.
A matching upper bound will be obtained by showing that one can encode (using arithmetic) traces as sets of natural numbers and thus sets of traces as sets of sets of natural numbers. 

\begin{thm}
\label{thm_satcomplexity}
The $\sohyltl$ satisfiability problem is polynomial-time equivalent to truth in third-order arithmetic. The lower bound holds even for $\univar$-free sentences.
\end{thm}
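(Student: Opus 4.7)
The plan is to show both directions separately, with the lower bound being the substantive contribution and the upper bound following by a direct semantic translation.

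\textbf{Upper bound.} I would encode the satisfiability question directly in third-order arithmetic. A trace $t\in(\pow{\ap})^\omega$ is a function $\nats \to \pow{\ap}$ and, since $\ap$ is finite, it can be faithfully represented by a type~$1$ object (for instance by fixing a bijection between $\ap$ and $\{0,\dots,|\ap|-1\}$ and encoding $t$ as $\{\langle n, k\rangle \mid \text{the $k$-th proposition holds at position $n$ in } t\}$ using a standard pairing function definable in first-order arithmetic). A set of traces is therefore a type~$2$ object. Under this encoding, the semantics of $\sohyltl$ translates compositionally into an arithmetic formula: first-order trace quantifiers become type~$1$ quantifiers guarded by the predicate \myquot{is a valid trace encoding}, second-order trace quantifiers become type~$2$ quantifiers, the temporal operators $\X$ and $\U$ unfold into (bounded or unbounded) first-order quantification over positions in $\nats$, and atomic propositions become membership statements in the encoding. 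The translation is polynomial-time computable, so $\sohyltl$ satisfiability reduces to truth in third-order arithmetic.

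\textbf{Lower bound.} I would strengthen the model constructed in the proof of Theorem~\ref{thm_modelsizelowerbound} so that a third-order arithmetic sentence can be evaluated over its traces. Starting from that sentence $\phi_\allsets$, I have encodings of all natural numbers (as type~$\posprop$/type~$\negprop$ traces) and encodings of all subsets of $\nats$ (as type~$\setprop$ traces). I then add the machinery of Proposition~\ref{prop_plustimesinhyperltl}, using fresh disjoint propositions so that the $\ap_\arith$-projection of the model is exactly $T_\plustimes$; this gives me $+$, $\cdot$, and (derivable) $<$ on the natural-number encodings. For the last layer, type~$2$ objects (sets of subsets of $\nats$) will be represented by sets of type~$\setprop$ traces, i.e.\ by values of $\sohyltl$ second-order variables.

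With this infrastructure in place, given a third-order arithmetic sentence $\varphi$, I translate it recursively to a $\sohyltl$ formula $\varphi^\sharp$: a first-order quantifier $\exists x$ becomes $\exists \pi \in \unidisvar$ guarded to range over $\posprop$-type (natural number) traces; a second-order quantifier $\exists X$ becomes $\exists \pi \in \unidisvar$ guarded to range over $\setprop$-type (set) traces; a third-order quantifier $\exists\mathcal{X}$ becomes a $\sohyltl$ second-order quantifier~$\exists Y$, guarded so that $Y$ contains only $\setprop$-type traces from the universe of discourse. The atoms $x<y$, $x+y=z$, $x\cdot y=z$ are expressed by invoking the $\ap_\arith$-traces with $\argone,\argtwo,\res$ markers aligned to the positions where the corresponding number-encoding traces carry $\inprop$; the atom $x\in X$ is expressed by a simple coincidence check between the position of $\inprop$ on the number-encoding trace and the value of $\inprop$ on the set-encoding trace; finally $X\in\mathcal{X}$ becomes $\pi\tracein Y$ (up to an obvious equality test on $\inprop$-projections of $\setprop$-type traces). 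The final $\sohyltl$ sentence is $\phi_\allsets \wedge \phi_\plustimes \wedge \varphi^\sharp$; it is satisfiable iff the required model (which is essentially unique on the controlled propositions) exists and $\varphi^\sharp$ holds in it, iff $\varphi$ is true in $\natsstruct$. The construction is $\univar$-free throughout.

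\textbf{Main obstacle.} The hard part is the interplay between \emph{quantification over arbitrary sets of traces} in $\sohyltl$ and \emph{quantification over subsets of $\nats$} in arithmetic. In particular, when a third-order quantifier $\exists\mathcal{X}$ is translated to $\exists Y$, the set $Y$ could in principle contain traces that do not correspond to any set of natural numbers, or multiple traces encoding the same set. I would handle this by guarding $Y$ so that it only contains well-formed $\setprop$-type encoding traces and by working modulo the equivalence \myquot{same $\{\inprop\}$-projection restricted to $\setprop$-type traces}, which is expressible by a $\G$-formula. Once this quotienting is in place, the recursive translation is routine.
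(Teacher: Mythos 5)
Your overall architecture matches the paper's proof almost exactly: the upper bound via encoding traces as type~$1$ objects with a pairing function and a compositional translation carrying a free position variable, and the lower bound via $\phi_\allsets$ for type~$0$/$1$/$2$ objects, Proposition~\ref{prop_plustimesinhyperltl} for $+$ and $\cdot$, trace quantifiers for first- and second-order arithmetic variables, set quantifiers for third-order variables, and the $\set{\inprop}$-projection equality test for $X\in\mathcal{X}$.

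There is, however, one step that fails as written. You conjoin $\phi_\allsets \wedge \phi_\plustimes \wedge \varphi^\sharp$ with both $\phi_\allsets$ and $\phi_\plustimes$ constraining the same universe of discourse~$\unidisvar$. But $\phi_0$ (inside $\phi_\allsets$) forces \emph{every} trace of the model to be of type~$\posprop$, $\negprop$, or $\setprop$, while Proposition~\ref{prop_plustimesinhyperltl} guarantees that the $\ap_\arith$-projection of \emph{every} model of $\phi_\plustimes$ is exactly $T_\plustimes$ --- in particular, every trace must carry exactly one $\argone$, $\argtwo$, and $\res$ position. A number- or set-encoding trace projects to $\emptyset^\omega \notin T_\plustimes$, so the conjunction is unsatisfiable and the reduction breaks. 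Using \myquot{fresh disjoint propositions} does not help, since the problem is that the universally quantified statements in each conjunct range over all of $\unidisvar$, including the traces intended for the other conjunct. The paper avoids this by not constraining $\unidisvar$ at all: it existentially quantifies two second-order variables $X_\allsets$ and $X_\arith$, substitutes $X_\allsets$ for $\unidisvar$ in $\phi_\allsets$, and relativizes every quantifier of $\phi_\plustimes$ to $X_\arith$, so the two infrastructures live in disjoint quantified sets. (This also makes the resulting sentence model-independent, which the paper then reuses for Theorems~\ref{thm_finsatcomplexity} and~\ref{thm_mccomplexity} and for the closed-world corollary; your model-constraining variant would not transfer to those.) The fix to your construction is routine --- relativize $\phi_\plustimes$'s quantifiers to the arithmetic traces, e.g.\ via a marker proposition or a separate quantified set --- but it is needed for the proof to go through.
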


\begin{proof}
We begin with the lower bound by reducing truth in third-order arithmetic to $\sohyltl$ satisfiability: we present a polynomial-time translation from sentences~$\phi$ of third-order arithmetic to $\sohyltl$ sentences~$\phi'$ such that $\natsstruct \models \phi$ if and only if $\phi'$ is satisfiable.

Given a third-order sentence~$\phi$, we define \[\phi' = \exists X_\allsets.\ \exists X_\arith.\ (\phi_\allsets[\unidisvar/X_\allsets] \wedge \phi'_\plustimes \wedge \hyperize(\phi))\] where 
\begin{itemize}
    \item $\phi_\allsets[\unidisvar/X_\allsets]$ is the $\sohyltl$ sentence from the proof of Theorem~\ref{thm_modelsizelowerbound} where every occurrence of $\unidisvar$ is replaced by $X_\allsets$ and thus enforces every subset of $\nats$ to be encoded in the interpretation of $X_\allsets$ (as introduced in the proof of Theorem~\ref{thm_modelsizelowerbound}),
    \item $\phi'_\plustimes$ is the $\sohyltl$ formula obtained from the $\hyltl$ formula~$\phi_\plustimes$ by replacing each quantifier~$\exists \pi$ ($\forall \pi$, respectively) by $\exists \pi \in X_\arith$ ($\forall \pi \in X_\arith$, respectively) and thus enforces that $X_\arith$ is interpreted by a set whose $\ap_\arith$-projection is $T_\plustimes$, and
\end{itemize}
  where $\hyperize(\phi)$ is defined inductively as follows:
  
\begin{itemize}
    
    \item For third-order variables~$\calY$, \[\hyperize(\exists \calY.\ \psi) = \exists X_\calY.\ (\forall \pi \in X_\calY.\ \exists \pi' \in X_\allsets.\ (\equals{\pi}{\pi'}{\set{\posprop, \negprop,\setprop,\inprop}}) \wedge \setprop_{\pi}) \wedge \hyperize(\psi).\]
    
    \item For third-order variables~$\calY$, \[\hyperize(\forall \calY.\ \psi) = \forall X_\calY.\ (\forall \pi \in X_\calY.\  \exists \pi' \in X_\allsets.\ (\equals{\pi}{\pi'}{\set{\posprop, \negprop,\setprop,\inprop}}) \wedge \setprop_{\pi}) \rightarrow \hyperize(\psi).\]
    
    \item For second-order variables~$Y$, $\hyperize(\exists Y.\ \psi) = \exists \pi_Y \in X_\allsets.\  \hyperize(\psi)$.
    
    \item For second-order variables~$Y$, $\hyperize(\forall Y.\ \psi) = \forall \pi_Y \in X_\allsets.\ \hyperize(\psi)$.
    
    \item For first-order variables~$y$, \[\hyperize(\exists y.\ \psi) = \exists \pi_y \in X_\allsets.\  [(\neg \inprop_{\pi_y}) \U (\inprop_{\pi_y}\wedge \X\G\neg\inprop_{\pi_y})] \wedge \hyperize(\psi).\]
    
    \item For first-order variables~$y$, \[\hyperize(\forall y.\ \psi) = \forall \pi_y \in X_\allsets.\ [(\neg \inprop_{\pi_y}) \U (\inprop_{\pi_y}\wedge \X\G\neg\inprop_{\pi_y})] \rightarrow \hyperize(\psi).\]
    
    \item $\hyperize(\psi_1 \vee \psi_2) = \hyperize(\psi_1) \vee \hyperize(\psi_2)$.
    
    \item $\hyperize(\neg \psi) = \neg \hyperize(\psi)$.

    \item For second-order variables~$Y$ and third-order variables~$\calY$, \[\hyperize(Y \in \calY) = \exists \pi \in X_{\calY}.\ \equals{\pi_Y}{\pi}{\set{\inprop}}.\]

    \item For first-order variables~$y$ and second-order variables~$Y$, $\hyperize(y \in Y) = \F(\inprop_{\pi_y} \wedge \inprop_{\pi_{Y}})$.
    
    \item For first-order variables~$y,y'$, $\hyperize(y < y') = \F(\inprop_{\pi_y} \wedge \X\F\inprop_{\pi_{y'}})$.
    
    \item For first-order variables~$y_1,y_2, y$, \[\hyperize(y_1 + y_2 = y) = \exists \pi \in X_\arith.\ \add_{\pi} \wedge 
    \F(\argone_{\pi} \wedge \inprop_{\pi_{y_1}}) \wedge 
    \F(\argtwo_{\pi} \wedge \inprop_{\pi_{y_2}}) \wedge
    \F(\res_{\pi} \wedge \inprop_{\pi_y}).\]

    \item For first-order variables~$y_1,y_2, y$, \[\hyperize(y_1 \cdot y_2 = y) = \exists \pi \in X_\arith.\ \mult_{\pi} \wedge 
    \F(\argone_{\pi} \wedge \inprop_{\pi_{y_1}}) \wedge 
    \F(\argtwo_{\pi} \wedge \inprop_{\pi_{y_2}}) \wedge
    \F(\res_{\pi} \wedge \inprop_{\pi_y}).\]

\end{itemize}
While $\phi'$ is not in prenex normal form, it can easily be brought into prenex normal form, as there are no quantifiers under the scope of a temporal operator.

As we are evaluating $\phi'$ w.r.t.\ standard semantics and the variable~$\unidisvar$ (interpreted with the model) does not occur in $\phi'$, satisfaction of $\varphi'$ is independent of the model, i.e., for all sets~$T,T'$ of traces, $T \models \varphi'$ if and only if $T' \models \varphi'$.
So, let us fix some set~$T$ of traces.
An induction shows that $\natsstruct$ satisfies $\phi$ if and only if $T$ satisfies $\phi'$.
Altogether we obtain the desired equivalence between $\natsstruct \models \phi$ and $\phi'$ being satisfiable.

For the upper bound, we conversely reduce $\sohyltl$ satisfiability to truth in third-order arithmetic: we present a polynomial-time translation from $\sohyltl$ sentences~$\phi$ 
 to sentences~$\phi'$ of third-order arithmetic such that  $\phi$ is satisfiable if and only if  $\natsstruct \models \phi'$. Here, we assume $\ap$ to be fixed, so that we can use $\size{\ap}$ as a constant in our formulas (which is definable in first-order arithmetic). 

Let $\pair \colon \nats\times\nats \rightarrow\nats$ denote Cantor's pairing function defined as $\pair(i,j) = \frac{1}{2}(i+j)(i+j+1) +j$, which is a bijection.
Furthermore, fix some bijection~$e \colon \ap \rightarrow\set{0,1,\ldots,\size{\ap}-1}$.
Then, we encode a trace~$t \in (\pow{\ap})^\omega$ by the set~$S_t =\set{\pair(j,e(\proposition)) \mid j \in \nats \text{ and } \proposition \in t(j)} \subseteq \nats$.
As $\pair$ is a bijection, we have that $t \neq t'$ implies $S_t \neq S_{t'}$.
While not every subset of $\nats$ encodes some trace~$t$, the first-order formula 
\[\phi_\istrace(Y) = \forall x.\ \forall y.\ y \ge \size{\ap} \rightarrow \pair(x,y) \notin Y \] checks if a set does encode a trace.
Here, we use $\pair$ as syntactic sugar, which is possible as the definition of $\pair$ only uses addition and multiplication.

As (certain) sets of natural numbers encode traces, sets of (certain) sets of natural numbers encode sets of traces. 
This is sufficient to reduce $\sohyltl$ to third-order arithmetic, which allows the quantification over sets of sets of natural numbers. 
Before we present the translation, we need to introduce some more auxiliary formulas:
\begin{itemize}
    \item Let $\calY$ be a third-order variable (i.e., $\calY$ ranges over sets of sets of natural numbers). Then, the formula
    \[\phi_\onlytraces(\calY) = \forall Y.\  Y  \in \calY \rightarrow \phi_\istrace(Y) \]
    checks if a set of sets of natural numbers only contains sets encoding a trace.

    \item Further, the formula
    \[\phi_\alltraces(\calY) = \phi_\onlytraces(\calY) \wedge \forall Y.\ \phi_\istrace(Y) \rightarrow Y \in\calY \]
    checks if a set of sets of natural numbers contains exactly the sets encoding a trace.
\end{itemize}

Now, we are ready to define our encoding of $\sohyltl$ in third-order arithmetic.
Given a $\sohyltl$ sentence~$\phi$, let 
\[\phi' = \exists \calY_a.\ \exists\calY_d.\ \phi_\alltraces(\calY_a) \wedge \phi_\onlytraces(\calY_d)\wedge (\arithmetize(\phi))(0)\]
where $\arithmetize(\phi)$ is defined inductively as presented below.
Note that $\phi'$ requires $\calY_a$ to contain exactly the encodings of all traces (i.e., it corresponds to the distinguished $\sohyltl$ variable~$\univar$ in the following translation) and $\calY_d$ is an existentially quantified set of trace encodings (i.e., it corresponds to the distinguished $\sohyltl$ variable~$\unidisvar$ in the following translation).

In the inductive definition of $\arithmetize(\phi)$, we will employ a free first-order variable~$i$ to denote the position at which the formula is to be evaluated to capture the semantics of the temporal operators.
As seen above, in $\phi'$, this free variable is set to zero in correspondence with the $\sohyltl$ semantics.
\begin{itemize}
    
    \item $\arithmetize(\exists X.\ \psi) = \exists \calY_X.\ \phi_\onlytraces(\calY_X) \wedge \arithmetize(\psi)$. Here, the free variable of $\arithmetize(\exists X.\ \psi)$ is the free variable of $\arithmetize(\psi)$. 
    
    \item $\arithmetize(\forall X.\ \psi) = \forall \calY_X.\ \phi_\onlytraces(\calY_X) \rightarrow \arithmetize(\psi)$. Here, the free variable of $\arithmetize(\forall X.\ \psi)$ is the free variable of $\arithmetize(\psi)$.
    
    \item $\arithmetize(\exists\pi\in X.\ \psi) = \exists Y_\pi.\ Y_\pi\in \calY_X \wedge \arithmetize(\psi)$. Here, the free variable of $\arithmetize(\exists \pi \in X.\ \psi)$ is the free variable of $\arithmetize(\psi)$.
    
    \item $\arithmetize(\forall\pi\in X.\ \psi) = \forall Y_\pi.\ Y_\pi\in \calY_X \rightarrow \arithmetize(\psi)$. Here, the free variable of $\arithmetize(\forall \pi \in X.\ \psi)$ is the free variable of $\arithmetize(\psi)$.
    
    \item $\arithmetize(\psi_1 \vee \psi_2) = \arithmetize(\psi_1) \vee \arithmetize(\psi_2)$. Here, we require that the free variables of $\arithmetize(\psi_1)$ and $\arithmetize(\psi_2)$ are the same (which can always be achieved by variable renaming), which is then also the free variable of $\arithmetize(\psi_1 \vee \psi_2)$.
    
    \item $\arithmetize(\neg\psi) = \neg\arithmetize(\psi)$. Here, the free variable of $\arithmetize(\neg\psi) $ is the free variable of $ \arithmetize(\psi)$.
    
     \item $\arithmetize(\X\psi) =  \exists i' (i' = i+1) \wedge \arithmetize(\psi)$, where $i'$ is the free variable of $\arithmetize(\psi)$ and $i$ is the free variable of $\arithmetize(\X\psi)$.
    
    \item $\arithmetize(\psi_1\U\psi_2) =  \exists i_2.\ i_2 \ge i \wedge \arithmetize(\psi_2) \wedge \forall i_1.\ (i \le i_1 \wedge i_1 < i_2) \rightarrow \arithmetize(\psi_1)$, where $i_j$ is the free variable of $\arithmetize(\psi_j)$, 
    and $i$ is the free variable of $\arithmetize(\psi_1\U\psi_2)$.

    \item $\arithmetize(\proposition_\pi) = \pair(i,e(\proposition)) \in Y_\pi$, i.e., $i$ is the free variable of $\arithmetize(\proposition_\pi)$.
    
\end{itemize}
Now, an induction shows that $\Pi_\emptyset[\univar\rightarrow (\pow{\ap})^\omega, \unidisvar \mapsto T] \models \phi$ if and only if $\natsstruct$ satisfies $(\arithmetize(\phi))(0)$ when the variable~$\calY_a$ is interpreted by the encoding of $(\pow{\ap})^\omega$ and $\calY_d$ is interpreted by the encoding of $T$.
Hence, $\phi$ is indeed satisfiable if and only if $\natsstruct$ satisfies $\phi'$.
\end{proof}

In the lower bound proof above, we have turned a sentence~$\phi$ of third-order arithmetic into a $\sohyltl$ sentence~$\phi'$ such that $\natsstruct\models\phi$  if and only if  $\phi'$ is satisfiable. 
In fact, we have constructed $\phi'$ such that if it is satisfiable, then every set of traces satisfies it, in particular~$(\pow{\ap})^\omega$.
Recall that Remark~\ref{rem_cwvsclassicalforfullsetoftraces} states that $(\pow{\ap})^\omega$ satisfies $\phi'$ under standard semantics if and only if $(\pow{\ap})^\omega$ satisfies $\phi'$ under closed-world semantics.
Thus, altogether we obtain that $\natsstruct\models\phi$  if and only if  $\phi'$ is satisfiable under closed-world semantics, i.e, the lower bound holds even under closed-world semantics.
Together with Lemma~\ref{lem_cwtoclassical}, this settles the complexity of $\sohyltl$ satisfiability under closed-world semantics.

\begin{cor}
\label{cor_satcomplexity_cw}
The $\sohyltl$ satisfiability problem under closed-world semantics is polynomial-time equivalent to truth in third-order arithmetic.
\end{cor}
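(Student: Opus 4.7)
The plan is to assemble the corollary from two ingredients that are already available in the excerpt, rather than to carry out a new construction. For the lower bound, I revisit the reduction from truth in third-order arithmetic to $\sohyltl$ satisfiability built in the proof of Theorem~\ref{thm_satcomplexity}. That reduction produces, in polynomial time from a sentence~$\phi$ of third-order arithmetic, a $\univar$-free $\sohyltl$ sentence~$\phi'$ whose satisfaction does not depend on the chosen universe of discourse (the proof explicitly notes that $T \models \phi'$ iff $T' \models \phi'$ for all sets $T,T'$ of traces, because $\unidisvar$ does not occur in $\phi'$). In particular, $\phi'$ is satisfiable under standard semantics iff $(\pow{\ap})^\omega \models \phi'$, and by Remark~\ref{rem_cwvsclassicalforfullsetoftraces} this is equivalent to $(\pow{\ap})^\omega \models_\cw \phi'$. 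Hence $\phi'$ is satisfiable under closed-world semantics iff $\natsstruct \models \phi$, giving the desired hardness.

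For the upper bound I invoke Lemma~\ref{lem_cwtoclassical}: every $\univar$-free $\sohyltl$ sentence~$\phi$ translates in polynomial time into a sentence~$\phi''$ with $T \models_\cw \phi$ iff $T \models \phi''$ for all sets~$T$ of traces. A sentence~$\phi$ to be tested for closed-world satisfiability can be assumed $\univar$-free by definition of closed-world semantics, so $\phi$ is closed-world satisfiable iff $\phi''$ is satisfiable under standard semantics. Applying Theorem~\ref{thm_satcomplexity} then reduces closed-world satisfiability to truth in third-order arithmetic in polynomial time.

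There is no genuine obstacle here; the main point to check is simply that the sentence~$\phi'$ produced by the lower-bound reduction of Theorem~\ref{thm_satcomplexity} is $\univar$-free and model-independent, both of which are verified within that proof. The corollary is thus essentially a packaging of Lemma~\ref{lem_cwtoclassical}, Remark~\ref{rem_cwvsclassicalforfullsetoftraces}, and the strengthened lower bound of Theorem~\ref{thm_satcomplexity}.
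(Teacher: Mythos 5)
Your proposal is correct and follows essentially the same route as the paper: the lower bound via the model-independence and $\univar$-freeness of the sentence $\phi'$ from Theorem~\ref{thm_satcomplexity} combined with Remark~\ref{rem_cwvsclassicalforfullsetoftraces}, and the upper bound via Lemma~\ref{lem_cwtoclassical} followed by Theorem~\ref{thm_satcomplexity}. Nothing is missing.
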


The $\sohyltl$ finite-state satisfiability problem asks, given a $\sohyltl$ sentence~$\phi$, whether there is a finite transition system satisfying~$\phi$.
Note that we do not ask for a finite set~$T$ of traces satisfying $\phi$. In fact, the set of traces of the finite transition system may still be infinite or even uncountable. 
Nevertheless, the problem is potentially simpler, as there are only countably many finite transition systems (and their sets of traces are much simpler).
However, we show that the finite-state satisfiability problem is as hard as the general satisfiability problem, as $\sohyltl$ allows the quantification over arbitrary (sets of) traces, i.e., restricting the universe of discourse to the traces of a finite transition system does not restrict second-order quantification at all (as the set of all traces is represented by a finite transition system).
This has to be contrasted with the finite-state satisfiability problem for $\hyltl$ (defined analogously), which is $\Sigma_1^0$-complete (a.k.a.\ recursively enumerable), as $\hyltl$ model-checking of finite transition systems is decidable~\cite{ClarksonFKMRS14}.

\begin{thm}
\label{thm_finsatcomplexity}
The $\sohyltl$ finite-state satisfiability problem is polynomial-time equivalent to truth in third-order arithmetic. The lower bound holds even for $\univar$-free sentences.
\end{thm}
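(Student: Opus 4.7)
The plan is to establish both bounds by leveraging the proof of Theorem~\ref{thm_satcomplexity}.

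For the lower bound, I would observe that the reduction from truth in third-order arithmetic to $\sohyltl$ satisfiability constructed in the proof of Theorem~\ref{thm_satcomplexity} already produces a $\univar$-free sentence~$\phi'$ whose satisfaction is independent of the model, i.e., $T \models \phi'$ if and only if $T' \models \phi'$ for all sets~$T, T'$ of traces. Consequently, $\phi'$ has some model if and only if every set of traces is a model, and in particular if and only if the trace set of some finite transition system (e.g.\ a one-vertex self-loop labelled arbitrarily) is a model. Hence the same polynomial-time translation~$\phi \mapsto \phi'$ witnesses a reduction from truth in third-order arithmetic to $\univar$-free $\sohyltl$ finite-state satisfiability, establishing the lower bound.

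For the upper bound, I would adapt the arithmetization~$\arithmetize(\cdot)$ from Theorem~\ref{thm_satcomplexity} so that the universe of discourse is represented not by an existentially quantified type~$2$ variable~$\calY_d$, but by the implicit set of traces of a first-order-quantified finite transition system. Concretely, one first-order existentially quantifies (using the pairing function as syntactic sugar) an encoding of a finite transition system~$\tsys = (V, E, I, \lambda)$. Membership~$\istraceof(\tsys, Y)$ of an encoded trace~$Y$ in $\traces(\tsys)$ is expressed by existentially quantifying a type~$1$ object~$\rho$ encoding a function $\nats \to V$, together with an arithmetic check that $\rho$ is a path through $\tsys$ (i.e.\ $\rho(0) \in I$ and $(\rho(n), \rho(n+1)) \in E$ for all $n$) and that $Y$ encodes $\lambda(\rho(0)) \lambda(\rho(1)) \lambda(\rho(2)) \cdots$ under the encoding used in the proof of Theorem~\ref{thm_satcomplexity}. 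The arithmetization~$\arithmetize(\phi)$ is then modified by replacing every atom of the form~$Y_\pi \in \calY_d$ by $\istraceof(\tsys, Y_\pi)$, and $\calY_a$ is still quantified as the encoding of $(\pow{\ap})^\omega$. An induction along the lines of Theorem~\ref{thm_satcomplexity} shows that the resulting third-order sentence is true in $\natsstruct$ if and only if $\phi$ has a finite-state model.

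The main obstacle is purely conceptual rather than technical: one must carefully replace the type~$2$ existential~$\exists \calY_d$ of the satisfiability arithmetization by a type~$0$ existential over finite transition systems together with an internal $\Sigma_1^1$ definition of trace membership, and verify that this substitution interacts correctly with the inductive definition of $\arithmetize$. Once this is set up cleanly, both the translation and its correctness argument are routine adaptations of the corresponding steps in the proof of Theorem~\ref{thm_satcomplexity}, and the translation is clearly computable in polynomial time.
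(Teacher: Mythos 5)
Your proposal is correct and takes essentially the same approach as the paper: the lower bound is identical (reusing the model-independent sentence $\phi'$ from Theorem~\ref{thm_satcomplexity}), and the upper bound likewise works by encoding a finite transition system, its paths, and their traces in arithmetic on top of the arithmetization~$\arithmetize(\cdot)$. The only cosmetic difference is that the paper keeps the type~$2$ existential~$\exists \calY_d$ and adds constraints forcing it to be exactly the trace set of the quantified transition system (leaving $\arithmetize$ untouched), whereas you drop $\calY_d$ and inline the trace-membership test into the atoms~$Y_\pi \in \calY_d$; the paper's footnote already notes that a first-order encoding of the transition system suffices, so both variants are equivalent.
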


\begin{proof}
For the lower bound under standard semantics, we reduce truth in third-order arithmetic to $\sohyltl$ finite-state satisfiability: we present a polynomial-time translation from sentences~$\phi$ of third-order arithmetic to $\sohyltl$ sentences~$\phi'$ such that $\natsstruct \models \phi$ if and only if $\phi'$ is satisfied by a finite transition system.

So, let $\phi$ be a sentence of third-order arithmetic. Recall that in the proof of Theorem~\ref{thm_satcomplexity}, we have shown how to construct from $\phi$ the $\sohyltl$ sentence~$\phi'$ such that the following three statements are equivalent:
\begin{itemize}
    \item $\natsstruct \models \phi$.
    \item $\phi'$ is satisfiable.
    \item $\phi'$ is satisfied by all sets~$T$ of traces (and in particular by some finite-state transition system).
\end{itemize}
Thus, the lower bound follows from Theorem~\ref{thm_satcomplexity}.

For the upper bound, we conversely reduce $\sohyltl$ finite-state satisfiability to truth in third-order arithmetic: we present a polynomial-time translation from $\sohyltl$ sentences~$\phi$ to sentences~$\phi''$ of third-order arithmetic such that  $\phi$ is satisfied by a finite transition system if and only if  $\natsstruct \models \phi''$.

Recall that in the proof of Theorem~\ref{thm_satcomplexity}, we have constructed a sentence
\[\phi' = \exists \calY_a.\ \exists\calY_d.\ \phi_\alltraces(\calY_a) \wedge \phi_\onlytraces(\calY_d)\wedge (\arithmetize(\phi))(0)\]
of third-order arithmetic where $\calY_a$ represents the distinguished $\sohyltl$ variable~$\univar$, $\calY_d$ represents the distinguished $\sohyltl$ variable~$\unidisvar$, and where $\arithmetize(\phi)$ is the encoding of $\phi$ in $\sohyltl$.

To encode the general satisfiability problem it was sufficient to express that $\calY_d$ only contains traces.
Here, we now require that $\calY_d$ contains exactly the traces of some finite transition system, which can easily be expressed in second-order arithmetic\footnote{With a little more effort, and a little less readability, first-order suffices for this task, as finite transition systems can be encoded by natural numbers.} as follows.

We begin with a formula~$\phi_\ists(n, E, I, \ell)$ expressing that the second-order variables~$E$, $I$, and $\ell$ encode a transition system with set~$\set{0,1, \ldots, n-1}$ of vertices.
Our encoding will make extensive use of the pairing function introduced in the proof of Theorem~\ref{thm_satcomplexity}.
Formally, we define $\phi_\ists(n, E, I, \ell)$ as the conjunction of the following formulas (where all quantifiers are first-order and we use $\pair$ as syntactic sugar):
\begin{itemize}
    \item $n > 0$: the transition system is nonempty.
    
    \item $\forall y.\ y \in E \rightarrow \exists v.\ \exists v'.\ (v < n \wedge v'<n \wedge y = \pair(v,v'))$: edges are pairs of vertices.

    \item $\forall v.\ v < n \rightarrow \exists v'.\ (v' < n \wedge \pair(v,v') \in E)$: every vertex has a successor.

    \item $\forall v.\ v \in I \rightarrow v < n$: the set of initial vertices is a subset of the set of all vertices.

    \item $\forall y.\ y \in \ell \rightarrow \exists v.\ \exists p.\ (v < n \wedge p < \size{\ap} \wedge y = \pair(v,p))$: the labeling of $v$ by the proposition encoded by $p$ is encoded by the pair $(v,p)$. Here, as in the proof of Theorem~\ref{thm_satcomplexity}, we assume $\ap$ to be fixed and therefore can use $\size{\ap}$ as a constant, and we identify propositions by numbers in $\set{0,1,\ldots, \size{\ap}-1}$.
\end{itemize}

Next, we define $\phi_\ispath(P, n, E, I)$, expressing that the second-order variable~$P$ encodes a path through the transition system encoded by $n$, $E$, and $I$, as the conjunction of the following formulas:
\begin{itemize}
    \item $\forall j.\ \exists v.\ (v < n \wedge \pair(j,v)\in P \wedge \neg \exists v'.\ (v' \neq v \wedge \pair(j,v') \in P))$: the fact that at position~$j$ the path visits vertex~$v$ is encoded by the pair $(j,v)$. Exactly one vertex is visited at each position.

    \item $\exists v.\ v\in I \wedge \pair(0,v) \in P$: the path starts in an initial vertex.

    \item $\forall j.\ \exists v.\ \exists v'.\ \pair(j,v) \in P \wedge \pair(j+1, v') \in P \wedge \pair(v,v') \in E $: successive vertices in the path are indeed connected by an edge.
\end{itemize}

Finally, we define $\phi_\traceof(T, P, \ell)$, expressing that the second-order variable $T$ encodes the trace (using the encoding from the proof of Theorem~\ref{thm_satcomplexity}) of the path encoded by the second-order variable~$P$, as the  following formula:
\begin{itemize}
    \item $\forall j.\ \forall p.\ \pair(j,p) \in T \leftrightarrow (\exists v.\ \pair(j,v) \in P \wedge \pair(v,p) \in \ell)$: a proposition holds in the trace at position~$j$ if and only if it is in the labeling of the $j$-th vertex of the path.
\end{itemize}

Now, we define the sentence~$\phi''$ as 
\begin{align*}
\exists \calY_a.\ \exists\calY_d.\ {}&{} \phi_\alltraces(\calY_a) \wedge \phi_\onlytraces(\calY_d) \wedge\\
&{}\Big[\underbrace{\exists n.\ \exists E.\ \exists I.\ \exists \ell.\ \phi_\ists(n, E, I, \ell)}_{\text{there exists a transition system~$\tsys$}} \wedge \\
&{}\underbrace{(\forall T.\ T \in \calY_d \rightarrow \exists P.\ (\phi_\ispath(P, n, E, I) \wedge \phi_\traceof(T,P, \ell)))}_{\text{$\calY_d$ contains only traces of paths through $\tsys$}}\wedge\\
&{}\underbrace{( \forall P.\ (\phi_\ispath(P, n,E,I) \rightarrow \exists T.\ T \in \calY_d \wedge \phi_\traceof(T,P, \ell)) )}_{\text{$\calY_d$ contains all traces of paths through $\tsys$.}}\Big]\wedge  (\arithmetize(\phi))(0),
\end{align*}
which holds in $\natsstruct$ if and only if $\phi$ is satisfied by a finite transition system.
\end{proof}

Again, the lower bound proof can easily be extended to the case of closed-world semantics, using the same arguments as in the case of general satisfiability.

\begin{cor}
\label{cor_finsatcomplexity_cw}
The $\sohyltl$ finite-state satisfiability problem under closed-world semantics is polynomial-time equivalent to truth in third-order arithmetic.
\end{cor}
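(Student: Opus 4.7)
The plan is to follow the same pattern as for Corollary~\ref{cor_satcomplexity_cw}, assembling the result from Theorem~\ref{thm_finsatcomplexity}, Lemma~\ref{lem_cwtoclassical}, and Remark~\ref{rem_cwvsclassicalforfullsetoftraces}; no new constructions are needed.

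For the upper bound, I would apply Lemma~\ref{lem_cwtoclassical}: any $\univar$-free sentence~$\phi$ (which is all one considers under closed-world semantics) can be translated in polynomial time into an equivalent standard-semantics sentence~$\phi^{*}$ with $T \models_\cw \phi$ iff $T \models \phi^{*}$ for every set~$T$ of traces. Specializing to $T = \traces(\tsys)$, we get that $\phi$ is finite-state satisfiable under closed-world semantics iff $\phi^{*}$ is finite-state satisfiable under standard semantics, which is polynomial-time equivalent to truth in third-order arithmetic by Theorem~\ref{thm_finsatcomplexity}.

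For the lower bound, I would reuse the sentence~$\phi'$ produced by the polynomial-time reduction from truth in third-order arithmetic built in the proof of Theorem~\ref{thm_satcomplexity}. That sentence is $\univar$-free by construction, and after the substitution~$\phi_\allsets[\unidisvar/X_\allsets]$ it is also $\unidisvar$-free; moreover, as already exploited in the lower bound of Theorem~\ref{thm_finsatcomplexity}, whenever $\natsstruct \models \phi$ holds, $\phi'$ is satisfied by every set of traces under standard semantics. Let $\tsys$ be a fixed finite transition system over $\ap$ whose set of traces is all of $(\pow{\ap})^\omega$ (for instance, with one vertex per element of $\pow{\ap}$ labeled accordingly, all initial, fully connected). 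Since $\traces(\tsys) = (\pow{\ap})^\omega$, Remark~\ref{rem_cwvsclassicalforfullsetoftraces} gives $\tsys \models_\cw \phi'$ iff $(\pow{\ap})^\omega \models \phi'$ under standard semantics. Hence $\natsstruct \models \phi$ implies $\tsys \models_\cw \phi'$. Conversely, if any finite transition system satisfies $\phi'$ under closed-world semantics, applying Lemma~\ref{lem_cwtoclassical} yields a standard-semantics sentence that is finite-state satisfiable, and Theorem~\ref{thm_finsatcomplexity} then delivers $\natsstruct \models \phi$.

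The only step requiring a moment's care is confirming that $\phi'$ is genuinely $\unidisvar$-free after the substitution, so that evaluating it over $\tsys$ with $\unidisvar$ interpreted by $(\pow{\ap})^\omega$ forces the closed-world second-order quantifiers to range over exactly the same sets as their standard-semantics counterparts; this is immediate from the shape of the construction in Theorem~\ref{thm_satcomplexity}. Beyond that, the argument is a mechanical composition of previously established results and introduces no new arithmetic or hyperlogic encodings.
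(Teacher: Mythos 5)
Your upper bound and the forward direction of the lower bound are exactly the paper's argument: the upper bound follows by translating closed-world into standard semantics via Lemma~\ref{lem_cwtoclassical} and invoking Theorem~\ref{thm_finsatcomplexity}, and for the forward direction of the lower bound you correctly observe that $\phi'$ is $\univar$-free, pick the complete transition system~$\tsys$ with $\traces(\tsys)=(\pow{\ap})^\omega$, and apply Remark~\ref{rem_cwvsclassicalforfullsetoftraces} to get $\natsstruct\models\phi \Rightarrow \tsys\models_\cw\phi'$.

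The backward direction of the lower bound, however, does not go through as you state it. Lemma~\ref{lem_cwtoclassical} turns \emph{$\phi'$ under closed-world semantics} into a \emph{different} sentence~$(\phi')^*$ with guarded second-order quantifiers, so from $\tsys\models_\cw\phi'$ you only obtain that $(\phi')^*$ is finite-state satisfiable under standard semantics. Theorem~\ref{thm_finsatcomplexity} says nothing about $(\phi')^*$: its reduction establishes $\natsstruct\models\phi$ iff $\phi'$ \emph{itself} is finite-state satisfiable, and in general $T\models(\phi')^*$ does not imply $T\models\phi'$, because the guarding weakens the universal second-order quantifiers (those arising from $\hyperize(\forall\calY.\ \cdot)$) to range only over subsets of $T$ rather than over all sets of traces. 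What is actually needed -- and what the paper means by extending the lower bound "using the same arguments as in the case of general satisfiability" -- is to rerun the induction from the proof of Theorem~\ref{thm_satcomplexity} for an arbitrary closed-world model~$T$: any $T$ with $T\models_\cw\phi'$ must contain $T_\allsets$ (forced by $\phi_\allsets[\unidisvar/X_\allsets]$ together with $X_\allsets\subseteq T$) and a set projecting to $T_\plustimes$, and since the guards in $\hyperize$ confine the second-order quantifiers to sets of type-$\setprop$ traces all of whose relevant $\set{\inprop}$-projections are already realized inside $T_\allsets\subseteq T$, the closed-world restriction loses nothing, yielding $T\models_\cw\phi'$ iff $\natsstruct\models\phi$. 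Without this construction-specific argument the converse implication in your lower bound is unjustified.
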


Let us also just remark that the proof of Theorem~\ref{thm_finsatcomplexity} can easily be adapted to show that other natural variations of the satisfiability problem are also polynomial-time equivalent to  truth in third-order arithmetic, e.g., satisfiability by countable transition systems, satisfiability by finitely branching transition systems, etc.
In fact, as long as a class~$\class$ of transition systems is axiomatizable in third-order arithmetic, the $\sohyltl$ satisfiability problem restricted to transition systems in $\class$ is reducible to truth in third-order arithmetic.
Similarly, truth in third-order arithmetic is reducible to $\sohyltl$ satisfiability for every nonempty class of models (w.r.t.\ standard semantics) and to every class of models containing at least one model whose language contains $T_\allsets \cup T_\arith$ (w.r.t.\ closed-world semantics).

\section{\texorpdfstring{The Complexity of $\sohyltl$ Model-Checking}{The Complexity of Second-order HyperLTL Model-Checking}}
\label{sec_mc}

The $\sohyltl$ model-checking problem asks, given a finite transition system~$\tsys$ and a $\sohyltl$ sentence~$\phi$, whether $\tsys \models \phi$.
Beutner et al.~\cite{DBLP:conf/cav/BeutnerFFM23} have shown that $\sohyltl$ model-checking is $\Sigma_1^1$-hard. 
We improve the lower bound considerably, i.e., also to truth in third-order arithmetic, and show that this bound is tight.
This is the first upper bound on the problem's complexity. 

\begin{thm}
\label{thm_mccomplexity}
The $\sohyltl$ model-checking problem is polynomial-time equivalent to truth in third-order arithmetic. The lower bound already holds for $\univar$-free sentences.
\end{thm}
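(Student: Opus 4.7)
The plan is to piggyback on the constructions of Theorems~\ref{thm_satcomplexity} and~\ref{thm_finsatcomplexity}, with essentially no new ideas required.

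For the lower bound, I would reduce truth in third-order arithmetic to $\sohyltl$ model-checking. Given a sentence~$\phi$ of third-order arithmetic, let $\phi'$ be the $\univar$-free $\sohyltl$ sentence produced by the construction in the proof of Theorem~\ref{thm_satcomplexity}. The key observation I would exploit is that $\phi'$ also does not use $\unidisvar$ (the original occurrence of $\unidisvar$ inside $\phi_\allsets$ having been renamed to $X_\allsets$), so the truth of $\phi'$ on a model $T$ does not depend on $T$ at all. In particular, for \emph{any} fixed finite transition system~$\tsys$ (say, a single state with a self-loop), one has $\tsys \models \phi'$ if and only if $\natsstruct \models \phi$, yielding the desired polynomial-time reduction. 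The same argument applies verbatim under closed-world semantics, since $\phi'$ is $\univar$-free and satisfaction is independent of the universe of discourse.

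For the upper bound, I would reduce $\sohyltl$ model-checking to truth in third-order arithmetic. Given a finite transition system~$\tsys = (V, E, I, \lambda)$ and a $\sohyltl$ sentence~$\phi$, I would identify $V$ with $\{0, \ldots, |V|-1\}$ and encode the edge relation, initial set, and labeling of $\tsys$ as fixed finite subsets of $\nats$ via Cantor's pairing function. Since $\tsys$ is finite, each of these sets is first-order definable in $\natsstruct$ and can be plugged in as constants. Reusing the auxiliary formulas $\phi_\alltraces$, $\phi_\onlytraces$, $\phi_\ispath$, $\phi_\traceof$, together with the translation $\arithmetize$ from the proofs of Theorems~\ref{thm_satcomplexity} and~\ref{thm_finsatcomplexity}, I would define
\[
\phi'' \;=\; \exists \calY_a.\ \exists \calY_d.\ \phi_\alltraces(\calY_a)\,\wedge\,\psi_\tsys(\calY_d)\,\wedge\,(\arithmetize(\phi))(0),
\]
where $\psi_\tsys(\calY_d)$ asserts that $\calY_d$ contains exactly the encodings of the traces of paths through the hard-coded $\tsys$ (a direct specialization of the conjunct in the proof of Theorem~\ref{thm_finsatcomplexity}, with the outer existentials over $n,E,I,\ell$ replaced by the concrete constants for $\tsys$). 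The same structural induction on $\phi$ as in Theorem~\ref{thm_satcomplexity} then shows $\tsys \models \phi$ if and only if $\natsstruct \models \phi''$. For closed-world semantics, the upper bound transfers via Lemma~\ref{lem_cwtoclassical}.

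The main (and only) technical point will be making sure that the finite data of $\tsys$ is correctly folded into a first-order arithmetic formula as fixed constants, so that $\psi_\tsys$ really captures $\traces(\tsys)$; this is routine given the infrastructure already developed for Theorem~\ref{thm_finsatcomplexity}, where transition systems were existentially quantified, and in the model-checking setting one simply substitutes hard-coded encodings for those quantifiers.
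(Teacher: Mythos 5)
Your proposal is correct and matches the paper's proof essentially verbatim: the lower bound reuses the model-independent sentence $\phi'$ from Theorem~\ref{thm_satcomplexity} paired with an arbitrary fixed transition system, and the upper bound takes the third-order arithmetic sentence from Theorem~\ref{thm_finsatcomplexity} and hardcodes $\tsys$ in place of the existential quantification over transition systems. No substantive differences.
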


\begin{proof}
For the lower bound, we reduce truth in third-order arithmetic to the $\sohyltl$ model-checking problem: we present a polynomial-time translation from sentences~$\phi$ of third-order arithmetic to pairs $(\tsys, \phi')$ of a finite transition system~$\tsys$ and a $\sohyltl$ sentence~$\phi'$ such that $\natsstruct \models \phi$ if and only if $\tsys \models \phi'$.

In the proof of Theorem~\ref{thm_satcomplexity} we have, given a sentence~$\phi$ of third-order arithmetic, constructed a $\sohyltl$ sentence~$\phi'$ such that $\natsstruct\models \phi$ if and only if every set~$T$ of traces satisfies $\phi'$ (i.e., satisfaction is independent of the model). 
Thus, we obtain the lower bound by mapping $\phi$ to $\phi'$ and $\tsys^*$, where $\tsys^*$ is some fixed transition system.

For the upper bound, we reduce the $\sohyltl$ model-checking problem to truth in third-order arithmetic: we present a polynomial-time translation from pairs~$(\tsys, \phi)$ of a finite transition system and a $\sohyltl$ sentence~$\phi$ to sentences~$\phi'$ of third-order arithmetic such that $\tsys \models \phi$ if and only if $\natsstruct \models \phi'$.

In the proof of Theorem~\ref{thm_finsatcomplexity}, we have constructed, from a $\sohyltl$ sentence~$\phi$, a sentence~$\phi'$ of third-order arithmetic that expresses the existence of a finite transition system that satisfies $\phi$. 
We obtain the desired upper bound by modifying $\phi'$ to replace the existential quantification of the transition system by hardcoding $\tsys$ instead.
\end{proof}

Again, the lower bound proof can easily be extended to closed-world semantics, using the same arguments as in the case of satisfiability.

\begin{cor}
\label{cor_mccomplexity}
The $\sohyltl$ model-checking problem under closed-world semantics is polynomial-time equivalent to truth in third-order arithmetic.
\end{cor}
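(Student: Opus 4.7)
The plan is to obtain the upper bound by a direct reduction to the standard-semantics case, and the lower bound by reusing the construction from Theorem~\ref{thm_mccomplexity} together with a suitably chosen finite transition system.

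For the upper bound, I would take a closed-world instance $(\tsys, \phi)$ with $\phi$ being $\univar$-free (closed-world semantics only makes sense for $\univar$-free sentences), apply Lemma~\ref{lem_cwtoclassical} to obtain in polynomial time a $\sohyltl$ sentence $\phi'$ with $T \models_\cw \phi$ iff $T \models \phi'$ for every $T$. In particular $\traces(\tsys) \models_\cw \phi$ iff $\traces(\tsys) \models \phi'$, i.e., $\tsys \models_\cw \phi$ iff $\tsys \models \phi'$, so the standard-semantics upper bound from Theorem~\ref{thm_mccomplexity} transfers directly.

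For the lower bound, I would mirror the argument used in Corollary~\ref{cor_satcomplexity_cw} and Corollary~\ref{cor_finsatcomplexity_cw}. Given a sentence $\phi$ of third-order arithmetic, the proof of Theorem~\ref{thm_mccomplexity} constructs a $\univar$-free $\sohyltl$ sentence $\phi'$ such that $\natsstruct \models \phi$ if and only if \emph{every} set of traces satisfies $\phi'$ under standard semantics (satisfaction of $\phi'$ is independent of the model, because $\unidisvar$ does not occur in $\phi'$). In particular, $\natsstruct \models \phi$ iff $(\pow{\ap})^\omega \models \phi'$ under standard semantics. Now I would fix a small finite transition system~$\tsys^*$ whose set of traces is $(\pow{\ap})^\omega$; such a $\tsys^*$ exists (e.g., take one vertex per element of $\pow{\ap}$, all initial and fully connected, with the obvious labeling), and its size depends only on the fixed~$\ap$.

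Combining these ingredients, I would argue:
\begin{align*}
\natsstruct \models \phi
\;\Longleftrightarrow\; (\pow{\ap})^\omega \models \phi'
\;\Longleftrightarrow\; (\pow{\ap})^\omega \models_\cw \phi'
\;\Longleftrightarrow\; \tsys^* \models_\cw \phi',
\end{align*}
where the middle equivalence is exactly Remark~\ref{rem_cwvsclassicalforfullsetoftraces}, and the last one holds by construction of $\tsys^*$. The map $\phi \mapsto (\tsys^*, \phi')$ is computable in polynomial time, which yields the desired reduction. The only mildly delicate point—and the step I would double-check carefully—is the justification that $\tsys^*$ can be chosen independently of $\phi$ and that Remark~\ref{rem_cwvsclassicalforfullsetoftraces} really applies to the specific $\phi'$ built in Theorem~\ref{thm_mccomplexity} (which requires $\phi'$ to be $\univar$-free, as guaranteed by the strengthening in that theorem's statement). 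Everything else is immediate bookkeeping.
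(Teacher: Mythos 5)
Your proposal is correct and follows essentially the same route as the paper: the upper bound via Lemma~\ref{lem_cwtoclassical} and the standard-semantics result, and the lower bound by combining the model-independent sentence $\phi'$ from Theorem~\ref{thm_satcomplexity} with Remark~\ref{rem_cwvsclassicalforfullsetoftraces}. You are merely more explicit than the paper's one-line proof about the one genuinely necessary adjustment, namely that under closed-world semantics $\tsys^*$ cannot be arbitrary but must satisfy $\traces(\tsys^*) = (\pow{\ap})^\omega$ (or at least contain the traces the guards require), which is exactly the right point to flag.
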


\section{\texorpdfstring{$\sohyltlfp$}{Second-order HyperLTL with Minimality/Maximality Constraints}} 
\label{sec_mm}

As we have seen, unrestricted second-order quantification makes $\sohyltl$ very expressive and therefore highly undecidable. 
But restricted forms of second-order quantification are sufficient for many application areas.
Beutner et al.~\cite{DBLP:conf/cav/BeutnerFFM23} introduced $\sohyltlfp$,
a fragment\footnote{In~\cite{DBLP:conf/cav/BeutnerFFM23} this fragment is termed $\sohyltlfpold$.}  of $\sohyltl$ in which second-order quantification ranges over smallest/largest sets that satisfy a given guard. For example, the formula $\exists (X,\smallest, \phi_1).\ \phi_2$ expresses that there is a set~$T$ of traces that satisfies both $\phi_1$ and $\phi_2$, and $T$ is \emph{a} smallest set that satisfies $\phi_1$ (i.e., $\phi_1$ is the guard). 
Note that the guards themselves may again contain (guarded) second-order quantifiers.
This fragment is expressive enough to express common knowledge, asynchronous hyperproperties, and causality in reactive systems~\cite{DBLP:conf/cav/BeutnerFFM23}, but it can also reason directly about maximal (or minimal) sets that satisfy some property. 
As a concrete example, consider the HyperLTL formula 
\[\varphi_{\gni} = \forall \pi_1. \forall \pi_2. \exists \pi. (\pi_1 =_{\text{high-in}} \pi) \wedge (\pi_2 =_{\text{low-in}} \pi) \] expressing generalized non-interference~\cite{ClarksonFKMRS14}. In case a system does not satisfy it, we might be interested in finding a maximal subset of the system that does satisfy $\varphi_{\gni}$, while preserving some correctness property $\varphi$. We can do so using the formula $\exists (X,\largest, \varphi_{\gni}).\ \phi$  with closed-world semantics (see below the formal definitions).

The formulas of $\sohyltlfp$ are given by the grammar
\begin{align*}
\phi  {}& \cceq {} \exists (X,\smalar,\phi).\ \phi \mid \forall (X,\smalar,\phi).\ \phi \mid \exists \pi \in X.\ \phi \mid \forall \pi \in X.\ \phi \mid \psi\\
\psi {}&  \cceq {} \proposition_\pi \mid \neg \psi \mid \psi \vee \psi \mid \X \psi \mid \psi \U \psi    
\end{align*}
where $\proposition$ ranges over $\ap$, $\pi$ ranges over~$\fovar$, $X$ ranges over $\sovar$, and $\smalar \in \set{\smallest, \largest}$, i.e., the only modification concerns the syntax of second-order quantification. 

We consider two fragments of $\sohyltlfp$ obtained by only allowing quantification over maximal sets and only allowing quantification over minimal sets, respectively:
\begin{itemize}
    \item $\sohyltlfpmax$ is the fragment of $\sohyltlfp$ obtained by disallowing second-order quantifiers of the form~$\exists (X,\smallest,\phi) $ and $ \forall (X,\smallest,\phi)$.
    \item $\sohyltlfpmin$ is the fragment of $\sohyltlfp$ obtained by disallowing second-order quantifiers of the form~$\exists (X,\largest,\phi) $ and $ \forall (X,\largest,\phi)$.
\end{itemize}

The semantics of $\sohyltlfp$ is similar to that of $\sohyltl$ but for the second-order quantifiers, for which we define (for $\smalar \in \set{\smallest,\largest}$):
\begin{itemize}
    \item $\Pi \models \exists (X,\smalar,\phi_1).\ \phi_2$ if there exists a set~$T \in \solutions(\Pi, (X,\smalar,\phi_1))$ such that $\Pi[X\mapsto T] \models \phi_2$.
    \item $\Pi \models \forall (X,\smalar,\phi_1).\ \phi_2$ if for all sets~$T \in \solutions(\Pi, (X,\smalar,\phi_1))$ we have $\Pi[X\mapsto T] \models \phi_2$.
\end{itemize}
Here, $\solutions(\Pi, (X,\smalar,\phi_1))$ is the set of all minimal/maximal models of the formula~$\phi_1$, which is defined as 
\begin{align*}
    {}&{}\!\!\!\solutions(\Pi, (X,\smallest,\phi_1))  = \set{T \subseteq (\pow{\ap})^\omega \mid \Pi[X\mapsto T] \models \phi_1\text{ and } 
    \Pi[X\mapsto T'] \not\models \phi_1 \text{ for all } T' \subsetneq T }
\end{align*}
and
\begin{align*}
    {}&{}\!\!\!\solutions(\Pi, (X,\largest,\phi_1))  = \set{T \subseteq (\pow{\ap})^\omega \mid \Pi[X\mapsto T] \models \phi_1 \text{ and }
     \Pi[X\mapsto T'] \not\models \phi_1 \text{ for all } T' \supsetneq T}.
\end{align*}
Note that $\solutions(\Pi, (X,\smalar,\phi_1))$ may be empty, may be a singleton, or may contain multiple sets, which then are pairwise incomparable.

Let us also define closed-world semantics for $\sohyltlfp$. Here, we again disallow the use of the variable~$\univar$ and change the semantics of set quantification to 
\begin{itemize}
    \item $\Pi \models_\cw \exists (X,\smalar,\phi_1).\ \phi_2$ if there exists a set~$T \in \solutions_\cw(\Pi, (X,\smalar,\phi_1))$ such that $\Pi[X\mapsto T] \models_\cw \phi_2$, and 
    \item $\Pi \models_\cw \forall (X,\smalar,\phi_1).\ \phi_2$ if for all sets~$T \in \solutions_\cw(\Pi, (X,\smalar,\phi_1))$ we have $\Pi[X\mapsto T] \models_\cw \phi_2$,
\end{itemize}
where $\solutions_\cw(\Pi, (X,\smallest,\phi_1))$ and $\solutions_\cw(\Pi, (X,\largest,\phi_1))$ are  defined as follows:
\begin{align*}
    \solutions_\cw(\Pi, (X,\smallest,\phi_1))  = \set{T \subseteq \Pi(\unidisvar) \mid{}&{} \Pi[X\mapsto T] \models_\cw \phi_1\\ 
    {}&{}\text{ and } \Pi[X\mapsto T'] \not\models_\cw \phi_1\text{ for all }T' \subsetneq T} \\
     \solutions_\cw(\Pi, (X,\largest,\phi_1))= \set{T \subseteq \Pi(\unidisvar) \mid{}&{}  \Pi[X\mapsto T] \models_\cw \phi_1\\ 
    {}&{}\text{ and }  \Pi[X\mapsto T'] \not\models_\cw \phi_1 \text{ for all } T \subsetneq   T' \subseteq\Pi(\unidisvar)  }.
\end{align*}
Note that $\solutions_\cw(\Pi, (X,\smalar,\phi_1))$ may still be empty, may be a singleton, or may contain multiple sets, but all sets in it are now incomparable subsets of $\Pi(\unidisvar)$.

A $\sohyltlfp$ formula is a sentence if it does not have any free variables except for $\univar$ and $ \unidisvar$ (also in the guards). 
Models are defined as for $\sohyltl$.

\begin{prop}[Proposition~1 of \cite{DBLP:conf/cav/BeutnerFFM23}]
\label{prop_fp2classical}
Every $\sohyltlfp$ sentence~$\phi$ can be translated in polynomial time (in $\size{\varphi}$) into a $\sohyltl$ sentence~$\phi'$ such that for all sets~$T$ of traces we have that $T \models \phi$  if and only if  $T \models \phi'$.\footnote{The polynomial-time claim is not made in \cite{DBLP:conf/cav/BeutnerFFM23}, but follows from the construction when using appropriate data structures for formulas.} 
\end{prop}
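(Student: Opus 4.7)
The plan is to encode the minimality/maximality quantifiers of $\sohyltlfp$ via ordinary second-order quantification of $\sohyltl$ by attaching an explicit extremality guard. This works because $\sohyltl$ can internally compare two second-order variables by inclusion: setting $\psi_{\subseteq}(X',X) := \forall \pi \in X'.\ \pi \tracein X$, one defines $\psi_{\subsetneq}(X',X) := \psi_{\subseteq}(X',X) \wedge \neg\psi_{\subseteq}(X,X')$.

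Concretely, I would define a recursive translation $\tau$ from $\sohyltlfp$ to $\sohyltl$ that is homomorphic on Boolean connectives, temporal operators, and first-order trace quantifiers. For second-order quantification (with a fresh auxiliary variable $X'$), set
\[
\tau(\exists (X,\smallest,\phi_1).\ \phi_2) := \exists X.\ \tau(\phi_1) \wedge \bigl(\forall X'.\ \psi_{\subsetneq}(X',X) \rightarrow \neg \tau(\phi_1)[X/X']\bigr) \wedge \tau(\phi_2),
\]
obtain the $\largest$ clause by flipping the guard to $\psi_{\subsetneq}(X,X')$, and handle the universal variants dually via $\forall X.\ (\tau(\phi_1) \wedge \text{extremality guard}) \rightarrow \tau(\phi_2)$. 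The output can then be brought into prenex normal form in the standard way, since no quantifier sits under a temporal operator.

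Correctness is by induction on formula structure. The key step is that the guard $\tau(\phi_1) \wedge \forall X'.\ \psi_{\subsetneq}(X',X) \rightarrow \neg\tau(\phi_1)[X/X']$ holds under $\Pi[X\mapsto T]$ precisely when $T \in \solutions(\Pi,(X,\smallest,\phi_1))$: the first conjunct encodes $\Pi[X\mapsto T] \models \phi_1$ by the induction hypothesis, while the second conjunct is a direct syntactic rendering of ``no $T' \subsetneq T$ satisfies $\phi_1$'' via the renamed copy $\tau(\phi_1)[X/X']$. A standard capture-avoiding renaming convention ensures that the other free variables of $\phi_1$ still refer to the enclosing assignment; the $\largest$ case is symmetric.

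The main obstacle is the polynomial-time bound. Each translated minimality/maximality quantifier creates two syntactic occurrences of $\tau(\phi_1)$, so a naive tree-style expansion would blow up exponentially with nesting depth. The resolution exploits the size measure ``number of distinct subformulas'' declared in the preliminaries together with a DAG representation of formulas: the renamed copy $\tau(\phi_1)[X/X']$ shares all subformula structure with $\tau(\phi_1)$ except for the relabelled bound variable, so only a linear number of new nodes is introduced per encoded quantifier. Processing the input in a single bottom-up pass over its DAG, with memoisation mapping each subformula to its translation (and to its renamed variant), then produces an output whose DAG size is linear in the size of the input formula.
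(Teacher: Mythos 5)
Your semantic construction is correct and is exactly the intended one: the extremality constraint is transcribed into $\sohyltl$ by conjoining the translated guard with a universally quantified fresh set variable $X'$ and the inclusion test $\psi_{\subseteq}(X',X) = \forall \pi \in X'.\ \pi \tracein X$, and your four quantifier cases directly mirror the definition of $\solutions(\Pi,(X,\smalar,\phi_1))$. The induction you sketch goes through.

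The gap is in the polynomial-size argument. Your accounting (\myquot{only a linear number of new nodes per encoded quantifier}) is local to a single quantifier and ignores that the renamed guard copies compound under nesting. Guards in $\sohyltlfp$ may themselves contain second-order quantifiers whose guards mention enclosing set variables. If $\phi_1$, the guard of $X$, contains a quantifier $(Y,\smalar,\phi_1')$ with $X$ free in $\phi_1'$, then $\tau(\phi_1)$ already contains both $\tau(\phi_1')$ and $\tau(\phi_1')[Y/Y']$, and the copy $\tau(\phi_1)[X/X']$ contributes the further, pairwise distinct, variants $\tau(\phi_1')[X/X']$ and $\tau(\phi_1')[Y/Y'][X/X']$. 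With $d$ nested second-order quantifiers whose innermost guard mentions all enclosing set variables (such a sentence has size polynomial in $d$), the innermost guard occurs in $2^d$ distinct renamings, so the number of distinct subformulas of the output is exponential in $\size{\phi}$ even with full DAG sharing and memoisation; the subsequent conversion to prenex form only aggravates this. To rescue the bound you need a device that avoids re-substituting into already-translated guards --- for instance, re-binding $X$ itself inside the minimality check and constraining the new $X$ to coincide (via $\psi_{\subseteq}$ in both directions) with the universally quantified strict subset $X'$, so that $\tau(\phi_1)$ is reused verbatim --- together with an argument that the resulting sharing is not destroyed when the formula is brought into prenex normal form.
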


The same claim is also true for closed-world semantics, using the same proof.

\begin{rem}
\label{remark_fp2classical_cw}
Every $\sohyltlfp$ sentence~$\phi$ can be translated in polynomial time (in $\size{\varphi}$) into a $\sohyltl$ sentence~$\phi'$ such that for all sets~$T$ of traces we have that $T \models_\cw \phi$  if and only if  $T \models_\cw \phi'$.
\end{rem}

Thus, every complexity upper bound for $\sohyltl$ also holds for $\sohyltlfp$ and every lower bound for $\sohyltlfp$ also holds for $\sohyltl$.

\subsection{Satisfiability, Finite-State Satisfiability, and Model-Checking} 

In this subsection, we settle the complexity of satisfiability, finite-state satisfiability, and model-checking for the fragments~$\sohyltlfpmax$ and $\sohyltlfpmin$ using only second-order quantification over maximal respectively minimal sets satisfying a given guard.
We will show, for both semantics, that all three problems have the same complexity as the corresponding problems for full $\sohyltl$, i.e., they are equivalent to truth in third-order arithmetic.
Thus, contrary to the design goal of $\sohyltlfp$, it is in general not more feasible than full $\sohyltl$.

Due to Proposition~\ref{prop_fp2classical} and Remark~\ref{remark_fp2classical_cw}, the upper bounds already hold for full $\sohyltl$, hence the remainder of this subsection is concerned with lower bounds. 
We show that one can translate each $\sohyltl$ sentence~$\phi$ (over $\ap$) into a $\sohyltlfpmax$ sentence~$\phi^\largest$ and into a $\sohyltlfpmin$ sentence~$\phi^\smallest$ (both over some $\ap' \supsetneq \ap$) and each $T \subseteq (\pow{\ap})^\omega$ into a $T' \subseteq (\pow{\ap'})^\omega$ such that $T\models_\cw \phi$ if and only if $T'\models_\cw \phi^\largest$ and $T\models_\cw \phi$ if and only if $T'\models_\cw \phi^\smallest$.
This translation allows us to reduce $\sohyltl$ satisfiability, finite-state satisfiability, and model-checking to their $\sohyltlfpmax$ and $\sohyltlfpmin$ counterparts.
As closed-world semantics can be reduced to standard semantics, this suffices to prove the result for both semantics.

Intuitively, $\phi^\largest$ and $\phi^\smallest$ mimic the quantification over arbitrary sets in $\phi$ by quantification over maximal and minimal sets that satisfy a guard~$\phi_1$ that is only satisfied by uncountable sets.
In the proof of Theorem~\ref{thm_modelsizelowerbound}, we have constructed 
a $\sohyltl$ formula~$\phi_1$ with this property. 
Here, we show that similar formulas can also be written in $\sohyltlfpmax$ and $\sohyltlfpmin$.
With these formulas as guards (which use fresh propositions in $\ap' \setminus \ap$), we mimic arbitrary set quantification via quantification of sets of traces over $\ap'$ that are uncountable (enforced by the guards) and then consider their $\ap$-projections.
This approach works for all sets but the empty set, as projecting an uncountable set cannot result in the empty set. 
For this reason, we additionally mark some traces in the uncountable set and only project the marked ones, but discard the unmarked ones.
Thus, by marking no trace, the projection is the empty set.

\begin{thm}
\label{thm_hyltlmm_complexity}
\hfill
\begin{enumerate}
    \item $\sohyltlfpmax$ satisfiability, finite-state satisfiability, and model-checking (both under standard semantics and under closed-world semantics) are polynomial-time equivalent to truth in third-order arithmetic. The lower bounds for standard semantics already hold for $\univar$-free sentences.

    \item $\sohyltlfpmin$ satisfiability, finite-state satisfiability, and model-checking (both under standard semantics and under closed-world semantics) are polynomial-time equivalent to truth in third-order arithmetic. The lower bounds for standard semantics already hold for $\univar$-free sentences.

\end{enumerate}
\end{thm}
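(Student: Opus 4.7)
The upper bounds follow from Proposition~\ref{prop_fp2classical} and Remark~\ref{remark_fp2classical_cw}: any $\sohyltlfp$ sentence can be translated in polynomial time to an equivalent $\sohyltl$ sentence, so Theorems~\ref{thm_satcomplexity}, \ref{thm_finsatcomplexity}, and \ref{thm_mccomplexity} together with their closed-world corollaries transfer directly. All the remaining work lies in the lower bounds, and the plan is to obtain them uniformly by reducing the corresponding $\sohyltl$ problems (already equivalent to truth in third-order arithmetic, even for $\univar$-free sentences) to their $\sohyltlfpmax$ and $\sohyltlfpmin$ counterparts.

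The heart of the reduction is a polynomial-time translation $\phi \mapsto \phi^\largest$ (for $\sohyltlfpmax$) and $\phi \mapsto \phi^\smallest$ (for $\sohyltlfpmin$) from $\univar$-free $\sohyltl$ sentences over $\ap$ to $\univar$-free sentences of the target fragment over $\ap' = \ap \cup \set{\posprop, \negprop}$, together with a model transformation $T \mapsto T'$ such that $T \models \phi$ iff $T' \models \phi^\largest$ (analogously for $\phi^\smallest$). Since $\phi^\largest$ and $\phi^\smallest$ will be $\univar$-free, the equivalence under closed-world semantics then follows by Lemma~\ref{lem_cwtoclassical}. Each trace $t$ over $\ap$ comes with two enriched copies $t^\posprop$ and $t^\negprop$ over $\ap'$, carrying $\posprop$ respectively $\negprop$ at every position; $T'$ is built to contain both copies of each relevant trace, and every second-order quantifier in $\phi$ is replaced by a guarded quantifier whose solutions behave like characteristic functions of arbitrary subsets.

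For $\sohyltlfpmax$, the guard $\psi_{\mathrm{guard}}^\largest(X)$ demands that $X$ contains only enriched traces and that no two distinct members of $X$ agree on $\ap$ while disagreeing on their $\posprop/\negprop$-type. A maximal such $X$ therefore picks, for each $t$, exactly one of $t^\posprop, t^\negprop$, and the subset encoded is $\set{t \mid t^\posprop \in X}$. The original trace-membership test $\pi \tracein X$ is translated into the check that the positively enriched copy of $\pi$ lies in $X$. For $\sohyltlfpmin$, the dual guard requires that \emph{at least} one of $t^\posprop, t^\negprop$ lies in $X$ for every relevant $t$; the coverage clause is expressed by a universal first-order trace quantifier ranging over a fixed ambient subset of $T'$ that plays the role of $T_\allsets$ from Theorem~\ref{thm_modelsizelowerbound}. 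A routine structural induction on $\phi$ then establishes the desired equivalence, handling satisfiability, finite-state satisfiability, and model-checking simultaneously (in the last case after transporting the fixed finite transition system along $T \mapsto T'$).

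The main obstacle will be the $\sohyltlfpmin$ case: its coverage clause is naturally a universal statement over an uncountable trace space yet must stay (i) inside the $\smallest$-only fragment, (ii) $\univar$-free, and (iii) compatible with closed-world semantics, which confines set quantification to subsets of $T'$. This dictates how $T$ is augmented with auxiliary traces so that every required witness is already present and yet no spurious minimal solution arises. For finite-state satisfiability and model-checking, we finally observe that the augmentation needed on $T$ corresponds to a fixed finite transition system which can be taken as a disjoint union with the one for $T$, preserving finite-statedness of $T'$ and thus closing the reductions.
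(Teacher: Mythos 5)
Your overall architecture is the same as the paper's: upper bounds via Proposition~\ref{prop_fp2classical} and Remark~\ref{remark_fp2classical_cw}, lower bounds by translating $\sohyltl$ into the fragments so that guarded maximal/minimal set quantification over marked copies of traces simulates arbitrary set quantification, plus a shape-preserving model transformation. Your encoding device is genuinely different, though: you encode a subset of the model by a \emph{total choice function} selecting one of two labelled copies~$t^\posprop,t^\negprop$ per trace, and let maximality (resp.\ minimality plus a coverage clause) force totality of the choice. The paper instead builds guards (Lemmas~\ref{lemma_uncountmodels} and~\ref{lemma_guardcorrectness}) pinning the $\ap_\allsets$-projection of the quantified set to the fixed uncountable language~$T_\allsets$ and forcing traces to be uniquely identified by that projection, so that \emph{all} guard-satisfying sets are pairwise incomparable and hence lie in the solution set regardless of polarity; the subset is then read off from an independent marking. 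Your route avoids the uncountability machinery and the bijection of Lemma~\ref{lemma_encodingrichness}, at the price of needing different guards for the two polarities; if carried out correctly it would be a legitimate simplification.

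There is, however, a concrete gap in the closed-world/standard bookkeeping, and it is not cosmetic. Lemma~\ref{lem_cwtoclassical} turns a closed-world question about a sentence into a standard-semantics question about a modified sentence; it therefore transfers closed-world \emph{lower} bounds to standard semantics, not the other way around, and Remark~\ref{rem_cwvsclassicalforfullsetoftraces} equates the two semantics only for the full trace set. So proving $T\models\phi$ iff $T'\models\phi^\smalar$ under standard semantics does not yield the closed-world claim. Worse, the standard-semantics reduction itself breaks in the $\sohyltlfpmin$ case: there the source quantifier $\exists X$ ranges over arbitrary subsets of $(\pow{\ap})^\omega$, so your coverage clause would have to force a copy of \emph{every} trace over $\ap$ into $X$; being $\univar$-free it can only quantify over $\unidisvar$ and thus only covers traces of the model, so the minimal guard-satisfying sets encode the wrong family of subsets --- exactly the obstacle you flag but do not resolve. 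The resolution is the paper's ordering: prove $T\models_\cw\phi$ iff $T'\models_\cw\phi^\smalar$ first (closed-world $\sohyltl$ is already hard by Corollaries~\ref{cor_satcomplexity_cw}--\ref{cor_mccomplexity}, and under closed-world semantics coverage over $\unidisvar$ is exactly what is needed), then lift to standard semantics by conjoining the $\psi_{\subseteq\unidisvar}$ guard. Two smaller omissions: the primitive quantifiers $\exists\pi\in X$ and $\forall\pi\in X$ (not just the shorthand $\pi\tracein X$) must be relativized to the $\posprop$-marked copies, and plain satisfiability needs a shape-enforcing conjunct (the analogue of the paper's $\phi_\extend^\smalar$) so that arbitrary models of the translated sentence decode back to models of $\phi$.
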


Recall that we only need to prove the lower bounds, as the upper bounds already hold for full $\sohyltl$.
Furthermore, as closed-world semantics can be reduced to standard semantics, we only need to prove the lower bounds for closed-world semantics.

We begin by constructing the desired guards that have only uncountable models.
As a first step, we modify the formulas constructed in the proof of Theorem~\ref{thm_modelsizelowerbound}: we show that $\sohyltlfpmax$ and $\sohyltlfpmin$ have formulas that require the interpretation of a free second-order variable to be uncountable.
To this end, fix $\ap_\allsets = \set{\posprop, \negprop, \setprop,\inprop}$ and consider the language
\begin{align*}
T_\allsets ={}&{} \set{ \set{\posprop}^\omega,\set{\negprop}^\omega} \cup\\
&{}\set{\set{\posprop}^n \set{\inprop,\posprop} \set{\posprop}^\omega \mid n\in\nats} \cup\\
&{}\set{\set{\negprop}^n \set{\inprop,\negprop} \set{\negprop}^\omega \mid n\in\nats}{} \cup\\
&{}\set{ (t(0) \cup \set{\setprop})(t(1) \cup \set{\setprop})(t(2) \cup \set{\setprop}) \cdots \mid t \in (\pow{\set{\inprop}})^\omega},
\end{align*}
which is an uncountable subset of $(\pow{\ap_\allsets})^\omega$.
Note that this definition differs from the one in the proof of Theorem~\ref{thm_modelsizelowerbound}, as we have added the two traces~$\set{\posprop}^\omega,\set{\negprop}^\omega$.
This allows us to construct a finite transition system that has exactly these traces (which is not possible without these two traces): Figure~\ref{fig_tallsets} depicts a transition system~$\tsys_\allsets$ satisfying $\traces(\tsys_\allsets) = T_\allsets$.

\begin{figure}[h]
    \centering

        \begin{tikzpicture}[thick]
        \def\d{1.25}
            \node[plainnode] (1p) at (0,0) {\small$\set{\posprop}$};
            \node[plainnode] (2p) at (3,0) {\small$\set{\inprop,\posprop}$};
            \node[plainnode] (3p) at (6,0) {\small$\set{\posprop}$};

            \path[->] 
            (-.85,0) edge (1p)
            (1p) edge[loop above] ()
            (1p) edge (2p)
            (3,.85) edge (2p)
            (2p) edge (3p) 
            (3p) edge[loop above] ()
            ;

            \node[plainnode] (1n) at (0,-\d) {\small$\set{\negprop}$};
            \node[plainnode] (2n) at (3,-\d) {\small$\set{\inprop,\negprop}$};
            \node[plainnode] (3n) at (6,-\d) {\small$\set{\negprop}$};

            \path[->] 
            (-.85,-\d) edge (1n)
            (1n) edge[loop below] ()
            (1n) edge (2n)
            (3,-\d-.85) edge (2n)
            (2n) edge (3n) 
            (3n) edge[loop below] ()
            ;

            \node[plainnode] (1s) at (9,0) {\small$\set{\inprop,\setprop}$};
            \node[plainnode] (2s) at (9,-\d) {\small$\set{\setprop}$};

            \path[->] 
            (8.15,-0) edge (1s)
            (8.15,-\d) edge (2s)
            (1s) edge[loop above] ()
            (1s) edge[bend left] (2s)
            (2s) edge[bend left] (1s)
            (2s) edge[loop below] ()
            ;

        \end{tikzpicture}
    
    \caption{The transition system $\tsys_\allsets$ with $\traces(\tsys_\allsets) = T_\allsets$.}
    \label{fig_tallsets}
\end{figure}

\begin{lem}\hfill
\label{lemma_uncountmodels}
\begin{enumerate}
    \item\label{lemma_uncountmodels_largest}
    There exists a $\sohyltlfpmax$ formula~$\phi_\allsets^\largest$ over $\ap_\allsets$ with a single free (second-order) variable~$Z$ such that $\Pi \models_\cw \varphi_\allsets^\largest$ if and only if the $\ap_\allsets$-projection of $\Pi(Z)$ is $T_\allsets$.

    \item\label{lemma_uncountmodels_smallest}
    There exists a $\sohyltlfpmin$ formula~$\phi_\allsets^\smallest$ over $\ap_\allsets$ with a single free (second-order) variable~$Z$ such that $\Pi \models_\cw \varphi_\allsets^\smallest$ if and only if the $\ap_\allsets$-projection of $\Pi(Z)$ is $T_\allsets$.
\end{enumerate}
\end{lem}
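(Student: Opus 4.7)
The plan is to construct $\phi_\allsets^\largest$ and $\phi_\allsets^\smallest$ as conjunctions of three kinds of requirements on the free variable~$Z$: (i) every trace of $Z$ has $\ap_\allsets$-projection in $T_\allsets$; (ii) $Z$'s projection contains the countable ``skeleton'' of $T_\allsets$ --- the two pure traces $\set{\posprop}^\omega$ and $\set{\negprop}^\omega$ together with the positive and negative encodings $\set{\posprop}^n\set{\inprop,\posprop}\set{\posprop}^\omega$ and $\set{\negprop}^n\set{\inprop,\negprop}\set{\negprop}^\omega$ of every $n \in \nats$; and (iii) $Z$'s projection contains a type-$\setprop$ trace corresponding to every $S \subseteq \nats$. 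Conditions (i) and (ii) sit entirely within $\hyltl$: (i) is a single universal clause checking that each trace in $Z$ has one of the allowed shapes, and (ii) mirrors the base-case and successor-closure clauses of $\psi_2,\psi_3$ from Example~\ref{example_hyltlcardlb}, duplicated for the positive and negative polarities. These clauses can be reused verbatim in both $\sohyltlfpmax$ and $\sohyltlfpmin$.

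All of the work lies in condition (iii), which must universally range over~$\pow{\nats}$. For $\phi_\allsets^\largest$, I will employ a guard~$\phi_1^\largest(X)$ asserting that (a) every trace in $X$ belongs to $Z$ (using $\pi \tracein Z$) and is a positive or negative encoding, and (b) $X$ is \emph{contradiction-free} in the sense of the proof of Theorem~\ref{thm_modelsizelowerbound}, that is, no~$n$ has both its positive and its negative encoding in $X$. Once (ii) is in force, $Z$ contains every such encoding, so the maximal subsets of $\Pi(\unidisvar)$ satisfying $\phi_1^\largest(X)$ are precisely those that pick, for each $n$, exactly one polarity --- yielding a bijection with $\pow{\nats}$. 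The body of $\phi_\allsets^\largest$ then contains a clause of the shape
\[\forall (X,\largest,\phi_1^\largest(X)).\ \exists \pi'' \in Z.\ \setprop_{\pi''} \wedge \forall \pi \in X.\ \bigl(\posprop_\pi \rightarrow \F(\inprop_\pi \wedge \inprop_{\pi''})\bigr) \wedge \bigl(\negprop_\pi \rightarrow \F(\inprop_\pi \wedge \neg\inprop_{\pi''})\bigr),\]
which forces a matching type-$\setprop$ trace in $Z$ for every subset of $\nats$.

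For $\phi_\allsets^\smallest$, I will dually use a guard~$\phi_1^\smallest(X)$ stating that every trace in $X$ lies in $Z$ and is a positive or negative encoding, \emph{and} that for every positive encoding trace~$\pi'$ in $Z$ (equivalently, once (ii) is in force, for every $n$), $X$ contains at least one encoding at the position where $\pi'$ has $\inprop$. Minimal subsets of $\Pi(\unidisvar)$ satisfying $\phi_1^\smallest(X)$ then pick, for each $n$, exactly one encoding, again yielding a bijection with $\pow{\nats}$; a clause of the same shape as the one above, with $\largest$ replaced by $\smallest$, completes the construction. The main point of care, and where I expect a subtle argument to be needed, is that the $\largest$ and $\smallest$ quantifiers under closed-world semantics range over subsets of $\Pi(\unidisvar)$ rather than of $Z$, so the restriction $\pi \tracein Z$ inside the guards is essential: without it, traces in $\Pi(\unidisvar) \setminus Z$ could alter the lattice of solutions and destroy the bijection with $\pow{\nats}$. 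Neither formula mentions $\univar$, so the claimed closed-world characterization follows.
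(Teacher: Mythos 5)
Your proposal is correct and follows essentially the same route as the paper: the same decomposition into a $\hyltl$-expressible skeleton (your (i)--(ii), the paper's $\phi_0$--$\phi_3$) plus a guarded second-order clause forcing a consistent type-$\setprop$ trace for every contradiction-free selection of encodings, with maximal contradiction-free sets for $\largest$ and ``covering'' sets for $\smallest$ to avoid the empty set being the unique minimal solution. The only (immaterial) difference is that your $\smallest$ guard states the covering condition directly by quantifying over the positive encodings in $Z$ and lets minimality enforce contradiction-freeness, whereas the paper keeps contradiction-freeness explicit and phrases completeness inductively via a base trace and a successor clause ($\phi_\complete$).
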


\begin{proof}
\ref{lemma_uncountmodels_largest}.)
Consider $\phi_\allsets^\largest = \phi_0 \wedge \cdots \wedge \phi_4$ where
\begin{itemize}
    \item $\phi_0 = \forall \pi \in Z.\ \bigvee_{\proposition \in \set{\posprop, \negprop,\setprop}} \G (\proposition_\pi \wedge \bigwedge_{\proposition' \in \set{\posprop, \negprop,\setprop} \setminus \set{\proposition}} \neg \proposition'_\pi )$ expresses that on each trace in the $\ap_\allsets$-projection of $\Pi(Z)$ exactly one of the propositions in $\set{\posprop, \negprop,\setprop}$ holds at each position and the other two at none. 
    In the following, we speak therefore about type~$\proposition$ traces for $\proposition \in \set{\posprop, \negprop,\setprop}$,

    \item $\phi_1 = \forall \pi \in Z.\  (\posprop_\pi \vee \negprop_\pi) \rightarrow ( (\G \neg \inprop_\pi) \vee (\neg \inprop_\pi \U ( \inprop_\pi \wedge \X \G \neg \inprop_\pi )))$
    expresses that $\inprop$ appears at most once on each type~$\proposition$ trace in the $\ap_\allsets$-projection of $\Pi(Z)$, for both $\proposition \in \set{\posprop,\negprop}$,

    \item $\phi_2 = \bigwedge_{\proposition \in \set{\posprop,\negprop}} (\exists \pi \in Z.\ \proposition_\pi\wedge \G\neg \inprop_\pi) \wedge (\exists\pi \in Z.\ \proposition_\pi \wedge \inprop_\pi)$ expresses that the type~$\proposition$ traces~$\set{\proposition}^\omega$ and $\set{\proposition}^0 \set{\inprop,\proposition} \set{\proposition}^\omega$ are in the $\ap_\allsets$-projection of $\Pi(Z)$, for both $\proposition \in \set{\posprop,\negprop}$, and

    \item $\phi_3 = \bigwedge_{\proposition \in \set{\posprop,\negprop}}\forall \pi \in Z.\ \exists\pi'\in Z.\ (\proposition_\pi \wedge \F \inprop_\pi) \rightarrow ( \proposition_{\pi'} \wedge  \F (\inprop_\pi \wedge \X \inprop_{\pi'}))$ expresses that for each type~$\proposition$ trace of the form~$\set{\proposition}^n \set{\inprop,\proposition} \set{\proposition}^\omega$ in the $\ap_\allsets$-projection of $\Pi(Z)$, the trace~$\set{\proposition}^{n+1} \set{\inprop,\proposition} \set{\proposition}^\omega$ is also in the $\ap_\allsets$-projection of $\Pi(Z)$, for both $\proposition \in \set{\posprop,\negprop}$.
    
\end{itemize}
Intuitively, the formulas~$\phi_0$ to $\phi_3$ are obtained from the analogous formulas in the proof of Theorem~\ref{thm_modelsizelowerbound} by replacing $\unidisvar$ by the free variable~$Z$ and by allowing for the two additional traces.

The conjunction of these first four formulas requires that the $\ap_\allsets$-projection of $\Pi(Z) $ (and thus, under closed-world semantics, also the model) contains at least the traces in
\begin{equation}
    \label{eq_nats}
\set{ \set{\posprop}^\omega,\set{\negprop}^\omega} \cup \set{\set{\posprop}^n \set{\inprop,\posprop} \set{\posprop}^\omega \mid n\in\nats} \cup \set{\set{\negprop}^n \set{\inprop,\negprop} \set{\negprop}^\omega \mid n\in\nats},
\end{equation}
and no other type~$\posprop$ or type~$\negprop$ traces.

We say that a set~$T$ of traces is contradiction-free if its $\ap_\allsets$-projection only contains traces of the form~$\set{\posprop}^n \set{\inprop,\posprop} \set{\posprop}^\omega$ or $\set{\negprop}^n \set{\inprop,\negprop} \set{\negprop}^\omega$ and if there is no $n \in \nats$ such that $\set{\posprop}^n \set{\inprop,\posprop} \set{\posprop}^\omega$ and $\set{\negprop}^n \set{\inprop,\negprop} \set{\negprop}^\omega$ are in the $\ap_\allsets$-projection of $T$.
A trace~$t$ is consistent with a contradiction-free~$T$ if the following two conditions are satisfied:
\begin{description}
    \item[(C1)] If $\set{\posprop}^n \set{\inprop,\posprop} \set{\posprop}^\omega $ is in the $\ap_\allsets$-projection of $T$ then $\inprop\in t(n)$.
    \item[(C2)] If $\set{\negprop}^n \set{\inprop,\negprop} \set{\negprop}^\omega $ is in the $\ap_\allsets$-projection of $T$ then $\inprop\notin t(n)$.
\end{description}
Note that $T$ does not necessarily specify the truth value of $\inprop$ in every position of $t$, i.e., in those positions~$n\in\nats$ where neither $\set{\posprop}^n \set{\inprop,\posprop} \set{\posprop}^\omega$ nor $\set{\negprop}^n \set{\inprop,\negprop} \set{\negprop}^\omega$ are in the $\ap_\allsets$-projection of $T$.
Nevertheless, due to (\ref{eq_nats}), for every trace~$t$ over $\set{\inprop}$ there exists a contradiction-free subset~$T$ of the $\ap_\allsets$-projection of $\Pi(Z)$ such that the $\set{\inprop}$-projection of every trace~$t'$ that is consistent with $T$ is equal to $t$. 
Here, we can, without loss of generality, restrict ourselves to maximal contradiction-free sets, i.e., sets that stop being contradiction-free if more traces are added (note that such sets do specify the truth value of $\inprop$ for every position of a consistent trace).

Thus, each of the uncountably many traces over $\set{\inprop}$ is induced by some maximal contradiction-free subset of the $\ap_\allsets$-projection of $\Pi(Z)$.
\begin{itemize}
\item Hence, we define $\phi_4$ as the formula
\begin{align*}
\forall {}&{}(X, \largest, \overbrace{\forall \pi \in X.\ (\pi\tracein Z \wedge (\posprop_\pi \vee \negprop_\pi)) \wedge \forall \pi \in X.\ \forall \pi' \in X.\ (\posprop_\pi \wedge \negprop_{\pi'}) \rightarrow \neg \F(\inprop_{\pi} \wedge \inprop_{\pi'}) }^{\text{$X$ is contradiction-free}}).\\
&{}\exists \pi'' \in Z.\ \forall \pi''' \in X.\ \setprop_{\pi''} \wedge  \underbrace{((\posprop_{\pi'''}\wedge \F\inprop_{\pi'''}) \rightarrow \F(\inprop_{\pi'''} \wedge \inprop_{\pi''}))}_{\text{(C1)}}
\wedge \\
&{}\phantom{\exists \pi'' \in Z.\ \forall \pi''' \in X.\ \setprop_{\pi''} \wedge  }\underbrace{((\negprop_{\pi'''} \wedge \F\inprop_{\pi'''}) \rightarrow \F(\inprop_{\pi'''} \wedge \neg\inprop_{\pi''}))}_{\text{(C2)}},
\end{align*} 
expressing that for every maximal contradiction-free set of traces~$T$, there exists a type~$\setprop$ trace~$t''$ in the $\ap_\allsets$-projection of $\Pi(Z)$ that is consistent with $T$.
\end{itemize}

Thus, if the $\ap_\allsets$-projection of $\Pi(Z)$ is $T_\allsets$, then $\Pi\models_\cw\phi_\allsets^\largest$.
Dually, we can conclude that $T_\allsets$ must be a subset of the $\ap_\allsets$-projection of $\Pi(Z)$ whenever $\Pi\models_\cw \phi_\allsets^\largest$, and that the $\ap_\allsets$-projection of $\Pi(Z)$ cannot contain other traces (due to $\phi_0$ and $\phi_1$).
Hence, $\Pi \models_\cw \varphi^\largest$ if and only if the $\ap_\allsets$-projection of $\Pi(Z)$ is $T_\allsets$.

\ref{lemma_uncountmodels_smallest}.)
Here, we will follow a similar approach, but have to overcome one obstacle: there exists a unique minimal contradiction-free set, i.e., the empty set.
Hence, we cannot naively replace quantification over maximal contradiction-free sets in $\phi_4$ above by quantification over minimal contradiction-free sets. 
Instead, we will quantify over minimal contradiction-free sets that have, for each $n$, either the trace~$\set{\posprop}^n \set{\inprop,\posprop} \set{\posprop}^\omega$ or the trace~$\set{\negprop}^n \set{\inprop,\negprop} \set{\negprop}^\omega$ in their $\ap_\allsets$-projection. 
The minimal sets satisfying this constraint are still \emph{rich} enough to enforce every possible trace over $\set{\inprop}$.

Formally, we replace $\phi_4$ by $\phi_4'$, which is defined as 
\begin{align*}
\forall {}&{}(X, \smallest, \phi_\guard) .\\
&{}\exists \pi'' \in Z.\ \forall \pi''' \in X.\ \setprop_{\pi''} \wedge  {(\posprop_{\pi'''} \rightarrow \F(\inprop_{\pi'''} \wedge \inprop_{\pi''}))}
\wedge 
{(\negprop_{\pi'''} \rightarrow \F(\inprop_{\pi'''} \wedge \neg\inprop_{\pi''}))},
\end{align*} 
where $\phi_\guard$ is the formula
\[
 \phi_\complete  \wedge \forall \pi \in X.\ (\pi\tracein Z \wedge (\posprop_\pi \vee \negprop_\pi)) \wedge \forall \pi \in X.\ \forall \pi' \in X.\ (\posprop_\pi \wedge \negprop_{\pi'}) \rightarrow \neg \F(\inprop_{\pi} \wedge \inprop_{\pi'})
\]
and $\phi_\complete$ is the formula
\[
\exists \pi \in X.\ (\inprop_\pi \wedge (\posprop_\pi \vee \negprop_\pi)) \wedge \forall \pi \in X.\ \exists \pi' \in X.\ (\posprop_\pi \vee \negprop_\pi) \rightarrow ((\posprop_{\pi'} \vee \negprop_{\pi'}) \wedge \F( \inprop_\pi \wedge \X\inprop_{\pi'})),
\]
i.e., the only changes are the change of the polarity from $\largest$ to $\smallest$ and the addition of $\phi_\complete$ in the guard.
This ensures that $\phi_\allsets^\smallest = \phi_0 \wedge \cdots \wedge \phi_3 \wedge \phi_4'$ has the desired properties.
\end{proof}

Now, we can begin with the translation of full $\sohyltl$ into $\sohyltlfpmax$ and $\sohyltlfpmin$.
Let us fix a $\sohyltl$ sentence~$\varphi$ over a set~$\ap$ of propositions that is, without loss of generality, disjoint from $\ap_\allsets$.
Hence, satisfaction of $\phi$ only depends on the projection of traces to $\ap$, i.e., if $T_0$ and $T_1$ have the same $\ap$-projection, then $T_0 \models_\cw \phi$ if and only if $T_1 \models_\cw\phi$. 
We assume without loss of generality that each variable is quantified at most once in $\phi$ and that $\univar$ and $\unidisvar$ are not bound by a quantifier in $\phi$, which can always be achieved by renaming variables. 
Let $X_0, \ldots, X_{k-1}$ be the second-order variables quantified in $\phi$. 
We can assume $k>0$, as there is nothing to show otherwise: Without second-order quantifiers, $\phi$ is already a $\sohyltlfpmax$ sentence and a $\sohyltlfpmin$ sentence.
To simplify our notation, we define $[k] = \set{0,1,\ldots, k-1}$.
For each $i \in [k]$, we introduce a fresh proposition~$\marker_{i}$ so that we can define $\ap'$ as the pairwise disjoint union of $\ap$, $\set{\marker_{i} \mid i \in [k]}$, and $\ap_\allsets$.

Let $i \in [k]$ and consider the formula
\begin{align*}
\phi_\parti^{i} ={}&{} \forall \pi \in X_i.\ ((\G ({\marker_{i}})_\pi) \vee (\G \neg ({\marker_{i}})_\pi) \wedge \bigwedge_{i' \in [k]\setminus\set{i}}\G\neg(\marker_{i'})_\pi ) \wedge \\
{}&{}\forall \pi \in X_i.\ \forall \pi' \in X_i.\ (\equals{\pi}{\pi'}{\ap_\allsets }) \rightarrow (\equals{\pi}{\pi'}{\ap'}).   
\end{align*}
Note that $\phi_\parti^{i}$ has a single free variable, i.e., $X_i$, and that it is both a formula of $\sohyltlfpmin$ and $\sohyltlfpmax$.
Intuitively, $\phi_\parti^{i}$ expresses that each trace in $\Pi(X_i)$ is either marked by $\marker_{i}$ (if ${\marker_{i}}$ holds at every position) or it is not marked (if ${\marker_{i}}$ holds at no position), it is not marked by any other $\marker_{i'}$, and that there may \emph{not} be two distinct traces~$t \neq t'$ in $\Pi(X_i)$ that have the same $\ap_\allsets$-projection.
The former condition means that $\Pi(X_i)$ is partitioned into two (possibly empty) parts, the subset of marked traces and the subset of unmarked traces; the latter condition implies that each trace in $\Pi(X_i)$ is uniquely identified by its $\ap_\allsets$-projection. Said differently, if two traces in $\Pi(X_i)$ have the same $\ap_\allsets$-projection, then they also have the same marking. 

Fix some $i \in [k]$ and some $\smalar \in \set{\largest,\smallest}$, and define 
\[\phi_\guard^{i,\smalar} = \phi_\allsets^{\smalar}[Z/X_i] \wedge \phi_\parti^{i}, \]
where $\phi_\allsets^{\smalar}[Z/X_i]$ is the formula obtained from $\phi_\allsets^{\smalar}$ by replacing each occurrence of $Z$ by $X_i$.
The only free variable of $\phi_\guard^{i,\smalar}$ is $X_i$.

The following lemma shows that no strict superset of a set satisfying $\phi_\guard^{i,\largest}$ satisfies $\phi_\guard^{i,\largest}$ and no strict subset of a set satisfying $\phi_\guard^{i,\smallest}$ satisfies $\phi_\guard^{i,\smallest}$, i.e., sets satisfying these guards are in the corresponding solution sets.

\begin{lem}
\label{lemma_guardcorrectness}
Let $\Pi_0'$ and $\Pi_1'$ be two variable assignments with $\Pi_0'(X_i) \subseteq (\pow{\ap'})^\omega$ and  $\Pi_1'(X_i) \subseteq (\pow{\ap'})^\omega$.
\begin{enumerate}
    
    \item\label{lemma_guardcorrectness_largest}
    If $\Pi_0' \models_\cw \phi_\guard^{i,\largest}$ and $\Pi_1'(X_i) \supsetneq \Pi_0'(X_i)$, then $\Pi_1' \not\models_\cw \phi_\guard^{i,\largest}$.

    \item\label{lemma_guardcorrectness_smallest}
    If $\Pi_0' \models_\cw \phi_\guard^{i,\smallest}$ and $\Pi_1'(X_i) \subsetneq \Pi_0'(X_i)$, then $\Pi_1' \not\models_\cw \phi_\guard^{i,\smallest}$.

\end{enumerate}
\end{lem}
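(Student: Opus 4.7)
The approach is to read off a tight structural description of any variable assignment satisfying the guard $\phi_\guard^{i,\smalar}$ and then argue that any strict perturbation in the appropriate direction breaks it. By Lemma~\ref{lemma_uncountmodels}, the conjunct $\phi_\allsets^{\smalar}[Z/X_i]$ forces the $\ap_\allsets$-projection of $\Pi(X_i)$ to be exactly $T_\allsets$. The conjunct $\phi_\parti^{i}$ additionally forbids two distinct traces in $\Pi(X_i)$ from sharing the same $\ap_\allsets$-projection. Taken together, these two clauses say that the $\ap_\allsets$-projection restricted to $\Pi(X_i)$ is a \emph{bijection} onto $T_\allsets$; this is the key invariant I would establish first.

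For part~\ref{lemma_guardcorrectness_largest}, I would pick an arbitrary $t \in \Pi_1'(X_i) \setminus \Pi_0'(X_i)$ and do a two-case analysis on its $\ap_\allsets$-projection~$s$. If $s \notin T_\allsets$, then the $\ap_\allsets$-projection of $\Pi_1'(X_i)$ strictly contains $T_\allsets$, so Lemma~\ref{lemma_uncountmodels}\ref{lemma_uncountmodels_largest} directly rules out $\Pi_1' \models_\cw \phi_\allsets^{\largest}[Z/X_i]$. Otherwise $s \in T_\allsets$, and by the bijection property of $\Pi_0'$ there is some $t_0 \in \Pi_0'(X_i) \subseteq \Pi_1'(X_i)$ with the same projection~$s$; since $t \neq t_0$ (as $t \notin \Pi_0'(X_i)$), the pair $t, t_0$ falsifies the uniqueness clause of $\phi_\parti^{i}$ on $\Pi_1'$. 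Either way, $\Pi_1' \not\models_\cw \phi_\guard^{i,\largest}$.

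Part~\ref{lemma_guardcorrectness_smallest} is the dual. I would pick $t \in \Pi_0'(X_i) \setminus \Pi_1'(X_i)$ and let $s$ be its $\ap_\allsets$-projection, which lies in $T_\allsets$. By the bijection property of $\Pi_0'$, $t$ is the \emph{unique} preimage of $s$ in $\Pi_0'(X_i)$, so removing it also removes $s$ from the $\ap_\allsets$-projection. Hence the $\ap_\allsets$-projection of $\Pi_1'(X_i)$ is a strict subset of $T_\allsets$, and Lemma~\ref{lemma_uncountmodels}\ref{lemma_uncountmodels_smallest} yields $\Pi_1' \not\models_\cw \phi_\allsets^{\smallest}[Z/X_i]$, and thus $\Pi_1' \not\models_\cw \phi_\guard^{i,\smallest}$. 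I do not anticipate any real obstacle: the two cases are essentially symmetric, and everything reduces to unfolding the bijection imposed by the guard and observing that either adding or deleting a single trace breaks one of its two clauses. The only mild point to verify is that satisfaction of $\phi_\guard^{i,\smalar}$ under closed-world semantics depends only on the interpretation of $X_i$, which is immediate since $X_i$ is the sole free variable of the guard.
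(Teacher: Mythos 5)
Your proof is correct and rests on exactly the same two ingredients as the paper's: Lemma~\ref{lemma_uncountmodels} forces the $\ap_\allsets$-projection of $\Pi(X_i)$ to be $T_\allsets$, and $\phi_\parti^{i}$ makes that projection injective, so together the guard pins down a bijection onto $T_\allsets$ that any strict extension or strict restriction must break. The paper phrases both parts as proofs by contradiction (deriving, via $\phi_\parti^{i}$, that the witnessing trace was already in the smaller set), while you argue the contrapositive directly and in part~\ref{lemma_guardcorrectness_smallest} instead exhibit a missing element of the projection; these are only cosmetic differences.
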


\begin{proof}
\ref{lemma_guardcorrectness_largest}.) 
Towards a contradiction, assume we have $\Pi_0' \models_\cw \phi_\guard^{i,\largest}$ and $\Pi_1'(X_i) \supsetneq \Pi_0'(X_i)$, but also $\Pi_1' \models_\cw \phi_\guard^{i,\largest}$.
Then, there exists a trace~$t_1 \in \Pi_1'(X_i) \setminus \Pi_0'(X_i)$. 
Due to $\Pi_0'\models_\cw \phi_\allsets^\largest$ and Lemma~\ref{lemma_uncountmodels}.\ref{lemma_uncountmodels_largest}, there exists a trace~$t_0 \in \Pi_0'(X_i)$ that has the same $\ap_\allsets$-projection as $t_1$. 
Hence, $\varphi_\parti^{i}$ implies that $t_0$ and $t_1$ have the same $\ap'$-projection, i.e., they are the same trace (here we use $\Pi_0'(X_i) \subseteq (\pow{\ap'})^\omega$ and  $\Pi_1'(X_i) \subseteq (\pow{\ap'})^\omega$).
Thus, $t_1 = t_0$ is in $\Pi_0'(X_i)$, i.e., we have derived a contradiction.

\ref{lemma_guardcorrectness_smallest}.)
The argument here is similar, we just have to swap the roles of the two sets~$\Pi_0'(X_i)$ and 
$\Pi_1'(X_i)$.

Towards a contradiction, assume we have $\Pi_0' \models_\cw \phi_\guard^{i,\smallest}$ and $\Pi_1'(X_i) \subsetneq \Pi_0'(X_i)$, but also $\Pi_1' \models_\cw \phi_\guard^{i,\smallest}$.
Then, there exists a trace~$t_0 \in \Pi_0'(X_i) \setminus \Pi_1'(X_i)$. 
Due to $\Pi_1'\models_\cw \phi_\allsets^\smallest$ and Lemma~\ref{lemma_uncountmodels}.\ref{lemma_uncountmodels_smallest}, there exists a trace~$t_1 \in \Pi_1'(X_i)$ that has the same $\ap_\allsets$-projection as $t_0$. 
Hence, $\varphi_\parti^{i}$ implies that $t_1$ and $t_0$ have the same $\ap'$-projection, i.e., they are the same trace.
Thus, $t_0 = t_1$ is in $\Pi_1'(X_i)$, i.e., we have derived a contradiction.
\end{proof}

Recall that our goal is to show that quantification over minimal/maximal subsets of $(\pow{\ap'})^\omega$ satisfying $\phi_\guard^{i,\smalar}$ mimics quantification over subsets of $(\pow{\ap})^\omega$.
Now, we can make this statement more formal.
Let $T' \subseteq (\pow{\ap'})^\omega$. We define $\enc_i(T')$ as the set
\[
\set{
t \in (\pow{\ap})^\omega \mid t \text{ is the $\ap$-projection of some $t' \in T'$ whose $\set{\marker_{i}}$-projection is $\set{\marker_{i}}^\omega$}
}.
\]
Now, every $\enc_i(T')$ is, by definition, a subset of $(\pow{\ap})^\omega$. 
Our next result shows that, conversely, every subset of $(\pow{\ap})^\omega$ 
can be obtained as an encoding of some $T'$ that additionally satisfies the guard formulas.

\begin{lem}
\label{lemma_encodingrichness}
Let $T \subseteq (\pow{\ap})^\omega$, $i \in [k]$, and $\smalar \in \set{\largest,\smallest}$.
There exists a $T' \subseteq (\pow{\ap'})^\omega$ such that
\begin{itemize}
    \item $T = \enc_i(T')$, and 
    \item for all $\Pi$ with $\Pi(X_i) = T'$, we have $\Pi \models_\cw \phi_\guard^{i,\smalar}$.
\end{itemize}
\end{lem}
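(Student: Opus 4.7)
The plan is to build $T'$ explicitly by pairing each $t \in T$ with a distinct ``identifier'' trace drawn from the uncountable set $T_\allsets$, and then padding with unmarked traces that cover the remaining identifiers so that the overall $\ap_\allsets$-projection of $T'$ is exactly $T_\allsets$. This works because $\size{T} \le \contcard = \size{T_\allsets}$, so such a pairing exists.

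Concretely, I would first fix an injection $f \colon T \to T_\allsets$. For each $t \in T$ I form the marked trace~$t'$ over $\ap'$ by setting $t'(n) = t(n) \cup f(t)(n) \cup \set{\marker_i}$, and collect these as $T'_1$. For each $s \in T_\allsets \setminus f(T)$ I form the unmarked trace~$t_s$ over $\ap'$ with $t_s(n) = s(n)$, so that no proposition outside $\ap_\allsets$ is ever set (in particular no marker $\marker_{i'}$); collect these as $T'_0$. Finally, take $T' = T'_0 \cup T'_1$.

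The verification then breaks into two checks. For $T = \enc_i(T')$: only the traces in $T'_1$ are marked by $\marker_i$ at every position, and their $\ap$-projections are precisely the elements of $T$, so $\enc_i(T') = T$. For the guard $\phi_\guard^{i,\smalar} = \phi_\allsets^\smalar[Z/X_i] \wedge \phi_\parti^i$: the $\ap_\allsets$-projection of $T'$ equals $f(T) \cup (T_\allsets \setminus f(T)) = T_\allsets$ by construction, so $\phi_\allsets^\smalar[Z/X_i]$ holds by Lemma~\ref{lemma_uncountmodels}; and $\phi_\parti^i$ holds because each trace in $T'$ is uniformly marked by $\marker_i$ or uniformly unmarked (and never carries another $\marker_{i'}$), and because the sets $f(T)$ and $T_\allsets \setminus f(T)$ are disjoint with $f$ injective, so distinct traces in $T'$ necessarily have distinct $\ap_\allsets$-projections.

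There is no real obstacle beyond this cardinality-matching argument; the only point worth flagging is that the same construction works uniformly for both $\smalar = \largest$ and $\smalar = \smallest$. This is because the polarity flag only affects how $\phi_\guard^{i,\smalar}$ is used inside a second-order quantifier (through $\solutions$, to select minimal or maximal witnesses), while \emph{satisfaction} of the guard formula as such depends only on $\phi_\allsets^\smalar$ and $\phi_\parti^i$, both of which are verified above. Hence a single $T'$ witnesses the lemma in both cases.
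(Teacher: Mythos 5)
Your proof is correct and takes essentially the same route as the paper's: fix an injection of $T$ into $T_\allsets$, mark the paired traces with $\marker_{i}$, and pad so that the $\ap_\allsets$-projection is exactly $T_\allsets$ with pairwise distinct projections, so that $\phi_\allsets^{\smalar}[Z/X_i]$ follows from Lemma~\ref{lemma_uncountmodels} and $\phi_\parti^{i}$ from injectivity. The only (immaterial) difference is the padding: the paper fixes a bijection $f\colon (\pow{\ap})^\omega \rightarrow T_\allsets$ and adds the unmarked traces $t \merge f(t)$ for $t \in (\pow{\ap})^\omega \setminus T$, whereas you add bare traces from $T_\allsets \setminus f(T)$.
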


\begin{proof}
Fix a bijection~$f \colon (\pow{\ap})^\omega \rightarrow T_\allsets$, which can, e.g., be obtained by applying the Schröder-Bernstein theorem.
Now, we define 
\[T' = \set{t\merge f(t) \merge\set{\marker_{i}}^\omega \mid t \in T} \cup \set{t\merge f(t) \mid t \in (\pow{\ap})^\omega \setminus T}. \]
By definition, we have $\enc_i(T') = T$, satisfying the first requirement on $T'$.
Furthermore, the $\ap_\allsets$-projection of $T'$ is $T_\allsets$, each trace in $T'$ is either marked by $\marker_{i}$ at every position or at none, the other markers do not appear in traces in $T'$, and, due to $f$ being a bijection, there are no two traces in $T'$ with the same $\ap_\allsets$-projection.
Hence, due to Lemma~\ref{lemma_uncountmodels}, $T'$ does indeed satisfy $\phi_\guard^{i,\smalar}$.
\end{proof}

Now, we explain how to mimic quantification over arbitrary sets via quantification over maximal or minimal sets satisfying the guards we have constructed above.
Let $\smalar \in \set{\smallest, \largest}$ and let $\varphi^{\smalar}$ be the $\sohyltlfpsmalar$-sentence obtained from $\phi$ by inductively replacing 
\begin{itemize}
    \item each $\exists X_i.\ \psi$ by $\exists (X_i, \smalar, \varphi_\guard^{i,\smalar}).\ \psi$, 
    \item each $\forall X_i.\ \psi$ by $\forall (X_i, \smalar, \varphi_\guard^{i,\smalar}).\ \psi$, 
    \item each $\exists \pi \in X_i.\ \psi$ by $\exists \pi \in X_i.\ (\marker_{i})_{\pi} \wedge \psi$, and 
    \item each $\forall \pi \in X_i.\ \psi$ by $\forall \pi \in X_i.\ (\marker_{i})_{\pi} \rightarrow \psi$.
\end{itemize}

We show that $\phi$ and $\phi^{\smalar}$ are in some sense equivalent. 
Obviously, they are not equivalent in the sense that they have the same models, as $\phi^{\smalar}$ uses in general the additional propositions in $\ap' \setminus \ap$, which are not used by $\phi$.
But, by \emph{extending} a model of $\phi$ we obtain a model of $\phi^{\smalar}$.
Similarly, by \emph{ignoring} the additional propositions (i.e., the inverse of the extension) in a model of $\phi^{\smalar}$, we obtain a model of $\phi$.

As the model is captured by the assignment to the variable~$\unidisvar$, this means the interpretation of $\unidisvar$ when evaluating $\phi$ and the  interpretation of $\unidisvar$ when evaluating $\phi^{\smalar}$ have to satisfy the extension property to be defined.
However, we show correctness of our translation by induction.
Hence, we need to strengthen the induction hypothesis and require that the variable assignments for $\varphi$ and $\phi^{\smalar}$ satisfy the extension property for all free variables.

Formally, let $\Pi$ and $\Pi'$ be two variable assignments such that 
\begin{itemize}
    \item $\Pi(\pi) \in (\pow{\ap})^\omega$ and $\Pi(X) \subseteq (\pow{\ap})^\omega$ for all $\pi$ and $X$ in the domain of $\Pi$, and 
    \item $\Pi'(\pi) \in (\pow{\ap'})^\omega$ and $\Pi'(X) \subseteq (\pow{\ap'})^\omega$ for all $\pi$ and $X$ in the domain of $\Pi'$.
\end{itemize}
We say that $\Pi'$ extends $\Pi$ if they have the same domain (which must contain $\unidisvar$, but not $\univar$ as this variable may not be used under closed-world semantics) and we have that
\begin{itemize}
    \item for all $\pi$ in the domain, the $\ap$-projection of $\Pi'(\pi)$ is $\Pi(\pi)$, 
    \item for all $X_i$ for $i \in [k]$, in the domain, $\Pi(X_i) = \enc_i(\Pi'(X_i))$, and
    \item $\Pi'(\unidisvar) = \extend(\Pi(\unidisvar))$ where
    \[\extend(T) = \set{t \merge t'{} \merge \set{\marker_{i}} \mid t \in T, t' \in T_\allsets, \text{ and } i \in [k]} \cup \set{t \merge t' \mid t \in T \text{ and } t' \in T_\allsets},\]
    which implies that the $\ap$-projection of $\Pi'(\unidisvar)$ is $\Pi(\unidisvar)$.
\end{itemize}

Note that if $T$ is a subset of $\Pi(\unidisvar)$, then there exists a subset~$T'$ of $\Pi'(\unidisvar)$ such that $\enc_i(T') = T$, i.e., $\Pi'(\unidisvar)$ contains enough traces to mimic the quantification of subsets of $\Pi(\unidisvar)$ using our encoding.
Furthermore, if $T' \subseteq (\pow{\ap'})^\omega$  has the form~$T' = \extend(T)$ for some $T \subseteq (\pow{\ap})^\omega$, then this $T$ is unique.

The following lemma states that our translation of $\sohyltl$ into $\sohyltlfpmax$ and $\sohyltlfpmin$ is correct. 

\begin{lem}
\label{lemma_mmcorrectness}
Let $\Pi'$ extend $\Pi$. Then, $\Pi \models_\cw \phi$ if and only if $\Pi' \models_\cw \varphi^{\smalar}$.
\end{lem}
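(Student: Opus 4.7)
The plan is to prove the equivalence by structural induction on $\phi$, generalizing the statement to arbitrary subformulas rather than only sentences: I will show that whenever $\Pi'$ extends $\Pi$ (with both assignments defined on the free variables of the subformula under consideration), the subformula is satisfied under $\models_\cw$ if and only if its translation is. First I would observe that the extension relation is preserved under suffixing, i.e., $\suffix{\Pi'}{j}$ extends $\suffix{\Pi}{j}$, because the $\ap$-projection commutes with suffixing and neither the second-order component of $\Pi,\Pi'$ nor the interpretations of $\unidisvar$ are touched. This reduces the temporal and Boolean cases to routine appeals to the induction hypothesis. The atomic case $\prop_\pi$ with $\prop \in \ap$ follows because the $\ap$-projection of $\Pi'(\pi)$ equals $\Pi(\pi)$.

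For the first-order case $\exists \pi \in X_i.\ \psi$, which translates to $\exists \pi \in X_i.\ (\marker_{i})_\pi \wedge \psi^\smalar$, the key step is to exhibit the $\ap$-projection as a bijection between $\Pi(X_i)$ and the $\marker_{i}$-marked traces of $\Pi'(X_i)$. Surjectivity is immediate from $\Pi(X_i) = \enc_i(\Pi'(X_i))$, and injectivity is enforced by the conjunct $\varphi_\parti^{i}$ of the guard, which forbids two distinct traces of $\Pi'(X_i)$ from sharing an $\ap_\allsets$-projection and hence forces them to also agree on their $\ap'$-projection only if they are equal. Once a matched pair $(t,t')$ is fixed, $\Pi'[\pi \mapsto t']$ still extends $\Pi[\pi \mapsto t]$, so the induction hypothesis closes the case; the universal variant is symmetric.

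For the second-order case $\exists X_i.\ \psi$, which translates to $\exists (X_i, \smalar, \phi_\guard^{i,\smalar}).\ \psi^\smalar$, the forward direction starts from a witness $T \subseteq \Pi(\unidisvar)$ and constructs $T' \subseteq \Pi'(\unidisvar)$ with $\enc_i(T') = T$ and $T' \models \phi_\guard^{i,\smalar}$, adapting the construction of Lemma~\ref{lemma_encodingrichness}: fix a bijection $f \colon (\pow{\ap})^\omega \to T_\allsets$, take the marked trace $t \merge f(t) \merge \set{\marker_{i}}^\omega$ for each $t \in T$, and take an unmarked trace $t_0 \merge t_a$ for each $t_a \in T_\allsets \setminus f(T)$, where $t_0$ is some fixed element of $\Pi(\unidisvar)$ (the subcase $\Pi(\unidisvar) = \emptyset$ is handled separately, since then $\Pi'(\unidisvar) = \emptyset$ and both sides of the biconditional reduce to triviality). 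The definition of $\extend$ ensures that every such trace lies in $\Pi'(\unidisvar)$, so $T' \subseteq \Pi'(\unidisvar)$. Applying Lemma~\ref{lemma_guardcorrectness} promotes ``$T' \models_\cw \phi_\guard^{i,\smalar}$'' to ``$T' \in \solutions_\cw(\Pi',(X_i,\smalar,\phi_\guard^{i,\smalar}))$'' at no cost. Since $\Pi'[X_i \mapsto T']$ extends $\Pi[X_i \mapsto T]$, the induction hypothesis gives the desired satisfaction of $\psi^\smalar$. For the backward direction, given any $T' \in \solutions_\cw(\Pi',(X_i,\smalar,\phi_\guard^{i,\smalar}))$, set $T = \enc_i(T') \subseteq \Pi(\unidisvar)$; then $\Pi'[X_i \mapsto T']$ extends $\Pi[X_i \mapsto T]$, and the induction hypothesis applies in reverse. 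The universal case is handled dually.

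The main obstacle I expect is the forward direction of the second-order case under closed-world semantics: Lemma~\ref{lemma_encodingrichness} as stated yields a $T' \subseteq (\pow{\ap'})^\omega$ with no guarantee that $T' \subseteq \Pi'(\unidisvar)$, so the construction must be slightly refined to draw all constituents from $\Pi'(\unidisvar)$ while still realizing the full $\ap_\allsets$-projection $T_\allsets$ demanded by the guard. The definition of $\extend$ is tailored for exactly this purpose, providing both a marked and an unmarked copy of $t \merge t_a$ for every $t \in \Pi(\unidisvar)$ and $t_a \in T_\allsets$; once this availability is verified, both directions reduce cleanly to the induction hypothesis combined with Lemmas~\ref{lemma_guardcorrectness} and~\ref{lemma_encodingrichness}.
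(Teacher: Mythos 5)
Your proof follows essentially the same route as the paper's: structural induction over subformulas, the atomic/Boolean/temporal cases via the $\ap$-projection property and the observation that extension is preserved under suffixing, the first-order cases via $\Pi(X_i)=\enc_i(\Pi'(X_i))$, and the second-order cases by producing a witness set through (a variant of) Lemma~\ref{lemma_encodingrichness} and promoting it into $\solutions_\cw$ through Lemma~\ref{lemma_guardcorrectness}. The one place where you deviate is also the one place where the paper is too quick, and your instinct is correct: the set~$T'$ constructed in Lemma~\ref{lemma_encodingrichness} contains $t\merge f(t)$ for \emph{every} $t\in(\pow{\ap})^\omega\setminus T$, so it is in general not a subset of $\Pi'(\unidisvar)=\extend(\Pi(\unidisvar))$, which membership in $\solutions_\cw$ requires under closed-world semantics. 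Your refinement --- taking the unmarked traces to be $t_0\merge t_a$ for a fixed $t_0\in\Pi(\unidisvar)$ and $t_a$ ranging over $T_\allsets\setminus f(T)$ --- keeps $\enc_i(T')=T$, still realizes all of $T_\allsets$ as the $\ap_\allsets$-projection, respects $\phi_\parti^{i}$, and stays inside $\Pi'(\unidisvar)$; that is exactly the repair this step needs.

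Two smaller points. First, your claim that the subcase $\Pi(\unidisvar)=\emptyset$ ``reduces to triviality on both sides'' does not hold: on the $\phi$ side the empty set is still a legitimate value for $X_i$ and may satisfy $\psi$, whereas on the $\phi^{\smalar}$ side $\solutions_\cw$ is empty (no subset of $\emptyset$ can have $\ap_\allsets$-projection $T_\allsets$), so an existential second-order quantifier can hold on the left and fail on the right. This degenerate case is not addressed in the paper either, but it needs to be excluded or handled explicitly rather than dismissed. Second, in the first-order case you do not actually need the $\ap$-projection to be a bijection onto the marked part of $\Pi'(X_i)$ --- $\phi_\parti^{i}$ only gives injectivity of the $\ap_\allsets$-projection, not of the $\ap$-projection --- but nothing in your argument uses more than surjectivity onto $\Pi(X_i)$, which $\enc_i$ provides, so this is an inaccuracy in the narration rather than a gap.
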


\begin{proof}
By induction over the subformulas~$\psi$ of $\phi$.

First, let us consider the case of atomic propositions, i.e., $\psi = \proposition_\pi$ for some $\proposition \in \ap$ and some $\pi$ in the domains of $\Pi$ and $\Pi'$. We have 
\[
\Pi\models_\cw\proposition_\pi \Leftrightarrow 
\proposition \in \Pi(\pi)(0) \Leftrightarrow 
\proposition \in \Pi'(\pi)(0) \Leftrightarrow 
\Pi'\models_\cw\proposition_\pi,
\]
as $\Pi(\pi)$ and $\Pi'(\pi)$ have the same $\ap$-projection due to $\Pi'$ extending $\Pi$.

The cases of Boolean and temporal operators are straightforward applications of the induction hypothesis. So, it only remains to consider the quantifiers. 

Let $\psi = \exists X_i.\ \psi_0$ for some $i \in [k]$. Then, we have $\varphi^{\smalar} = \exists (X_i, \smalar, \varphi_\guard^{i,\smalar}).\ \psi_0^{\smalar}$.
By induction hypothesis, we have for all $T \subseteq (\pow{\ap})^\omega$ and $T' \subseteq (\pow{\ap'})^\omega$ with $\enc_i(T') = T$, the equivalence
\[\Pi[X_i \mapsto T] \models_\cw \psi_0 \Leftrightarrow \Pi'[X_i \mapsto T' ] \models_\cw \psi_0^{\smalar}.\]
Now, we have 
\begin{align*}
&{}\Pi\models_\cw \psi\\
 \Leftrightarrow{} &{}\text{there exists a $T \subseteq \Pi(\unidisvar)$ such that } \Pi[X_i \mapsto T] \models_\cw \psi_0 \\
\xLeftrightarrow{\ast}{}&{}\text{there exists a $T' \subseteq \Pi'(\unidisvar)$ s.t. } \Pi'[X_i \mapsto T'] \models_\cw \psi_0^{\smalar} \text{ and  } T' \in \solutions_\cw(\Pi,( X_i, \smalar, \varphi_\guard^{i,\smalar} ))
\\
\Leftrightarrow{}&{}\Pi'\models_\cw \psi^{\smalar}.
\end{align*}
Here, the equivalence marked with $\ast$ is obtained by applying the induction hypothesis:
\begin{itemize}
    \item For the left-to-right direction, given $T$, we pick $T'$ such that $\enc_i(T') = T$, which is always possible due to Lemma~\ref{lemma_encodingrichness}. Furthermore, $T'$ is in $\solutions_\cw(\Pi,( X_i, \smalar, \varphi_\guard^{i,\smalar} )$, due to Lemma~\ref{lemma_encodingrichness} and Lemma~\ref{lemma_guardcorrectness}.
    \item For the right-to-left direction, given $T'$, we pick $T = \enc_i(T')$.
\end{itemize}
Thus, in both directions, $\Pi'[X_i \mapsto T']$ extends $\Pi[X_i \mapsto T] $, i.e., the induction hypothesis is indeed applicable.
The argument for universal set quantification is dual.

So, it remains to consider trace quantification. 
First, let $\psi= \exists \pi \in X_i.\ \psi_0$ for some $i \in [k]$, which implies
$\psi^{\smalar} = \exists \pi \in X_i.\ (\marker_{i})_{\pi} \wedge \psi'$.
Now,
\begin{align*}
{}&{}\Pi \models_\cw \psi \\
\Leftrightarrow{}&{}
\text{there exists a $t \in \Pi(X_i)$ such that } \Pi[\pi \mapsto t] \models_\cw \psi_0 \\
\xLeftrightarrow{\ast}{}&{}
\text{there exists a $t' \in \Pi'(X_i)$ such that } \Pi'[\pi \mapsto t'] \models_\cw (\marker_{i})_\pi \wedge\psi_0^{\smalar} \\
\Leftrightarrow{}&{}
\Pi' \models_\cw  \psi^{\smalar}.
\end{align*}
Here, the equivalence marked by $\ast$ follows from the induction hypothesis: 
\begin{itemize}
    \item For the left-to-right direction, given $t$, we can pick a $t'$ with the desired properties as $\Pi'$ extends $\Pi$, which implies that $\Pi(X_i) = \enc_i(\Pi'(X_i))$, which in turn implies that $\Pi'(X_i)$ contains a trace~$t'$ marked by $\marker_{i}$ whose $\ap$-projection is $t$.
    \item For the right-to-left direction, given $t'$, we pick $t$ as the $\ap$-projection of $t'$, which is in $\Pi(X_i)$, as $\Pi(X_i) = \enc_i(\Pi'(X_i))$ and $t'$ is marked by $\marker_i$ due to $\Pi'[\pi \mapsto t']  \models_\cw (\marker_{i})_\pi$.
\end{itemize}
Thus, in both directions, $\Pi'[\pi \mapsto t']$ extends $\Pi[\pi \mapsto t] $, i.e., the induction hypothesis is indeed applicable.
The argument for universal trace quantification is again dual. 

So, it only remains to consider trace quantification over $\unidisvar$.
Consider $\psi = \exists \pi \in \unidisvar.\ \psi_0$, which implies $\psi^{\smalar} = \exists \pi \in \unidisvar.\ \psi_0^{\smalar}$.
Then, we have
\begin{align*}
{}&{}\Pi\models_\cw \psi \\
\Leftrightarrow{}&{} \text{there exists a $t \in \Pi(\unidisvar)$ such that $\Pi[\pi \mapsto t] \models_\cw \psi_0$} \\
\xLeftrightarrow{\ast}{}&{} \text{there exists a $t' \in \Pi'(\unidisvar)$ such that $\Pi'[\pi \mapsto t'] \models_\cw \psi_0^{\smalar}$}\\
\Leftrightarrow{}&{} \Pi'\models_\cw \psi^{\smalar}
,    
\end{align*}
where the equivalence marked with $\ast$ follows from the induction hypothesis, which is applicable as the $\ap$-projection of $\Pi'(\unidisvar)$ is equal to $\Pi(\unidisvar)$, which implies that we can choose $t'$ from 
$t$ for the left-to-right direction (and choose $t$ from 
$t'$ for the right-to-left direction) such that $\Pi'[\pi \mapsto t']$ extends $\Pi[\pi \mapsto t]$.
Again, the argument for universal quantification is dual.
\end{proof}

Recall that $\phi$ is satisfied by some set~$T \subseteq (\pow{\ap})^\omega$ of traces (under closed-world semantics) if $\Pi_\emptyset[\unidisvar\mapsto T] \models_\cw \varphi$.
Similarly, $\phi^\smalar$ is satisfied by some set~$T' \subseteq (\pow{\ap'})^\omega$ of traces (under closed-world semantics) if $\Pi_\emptyset[\unidisvar\mapsto T'] \models_\cw \varphi^\smalar$.
The following corollary of Lemma~\ref{lemma_mmcorrectness} holds as $\Pi_\emptyset[\unidisvar\mapsto \extend(T)]$ extends $\Pi_\emptyset[\unidisvar\mapsto T]$.

\begin{cor}
\label{coro_mmcorrectness}
Let $T\subseteq (\pow{\ap})^\omega$, and $\smalar \in \set{\largest,\smallest}$. Then, $T \models_\cw \phi$ if and only if $\extend(T) \models_\cw \phi^\smalar$.
\end{cor}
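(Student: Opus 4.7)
The plan is to derive this corollary as an immediate specialization of Lemma~\ref{lemma_mmcorrectness} to the distinguished variable assignments used in the definition of closed-world satisfaction for sentences. Unfolding the definition, the statement $T \models_\cw \phi$ is just $\Pi_\emptyset[\unidisvar \mapsto T] \models_\cw \phi$, and likewise $\extend(T) \models_\cw \phi^\smalar$ unfolds to $\Pi_\emptyset[\unidisvar \mapsto \extend(T)] \models_\cw \phi^\smalar$. So the goal reduces to invoking Lemma~\ref{lemma_mmcorrectness} with $\Pi := \Pi_\emptyset[\unidisvar \mapsto T]$ and $\Pi' := \Pi_\emptyset[\unidisvar \mapsto \extend(T)]$.

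The only thing to check is that $\Pi'$ extends $\Pi$ in the precise sense spelled out just before Lemma~\ref{lemma_mmcorrectness}. Both assignments have domain $\set{\unidisvar}$ (recall we work under closed-world semantics, so $\univar$ is forbidden, and $\phi$ is a sentence so no other free variables remain), which matches the first requirement. The conditions on first-order variables~$\pi$ in the domain and on second-order variables~$X_i$ with $i\in[k]$ in the domain are both vacuous, since the common domain contains neither kind of variable. The remaining condition~$\Pi'(\unidisvar) = \extend(\Pi(\unidisvar))$ is precisely the equality~$\extend(T)=\extend(T)$, which is trivial.

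With the extension property verified, Lemma~\ref{lemma_mmcorrectness} yields $\Pi \models_\cw \phi$ iff $\Pi' \models_\cw \phi^\smalar$, i.e., $T \models_\cw \phi$ iff $\extend(T) \models_\cw \phi^\smalar$, as required. There is no real obstacle here; the hard work has already been done in Lemma~\ref{lemma_mmcorrectness} (and in Lemmas~\ref{lemma_uncountmodels}, \ref{lemma_guardcorrectness}, and~\ref{lemma_encodingrichness} that fuel its inductive cases). The corollary is essentially a bookkeeping step that lifts the inductive statement on variable assignments to the level of sentences and models.
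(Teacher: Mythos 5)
Your proposal is correct and matches the paper's own argument exactly: the paper likewise derives the corollary by observing that $\Pi_\emptyset[\unidisvar\mapsto \extend(T)]$ extends $\Pi_\emptyset[\unidisvar\mapsto T]$ and then invoking Lemma~\ref{lemma_mmcorrectness}. Your explicit verification that the extension conditions for trace variables and for the $X_i$ are vacuous (since the common domain is just $\set{\unidisvar}$) is a faithful elaboration of what the paper leaves implicit.
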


So, to conclude the construction, we need to ensure that models of $\phi^\smalar$ have the form~$\extend(T)$ for some $T \subseteq (\pow{\ap})^\omega$.
For the two satisfiability problems, we do so using a sentence that only has such models, while for the model-checking problem, we can directly transform the transition system we are checking so that it satisfies this property.

First, we construct $\phi_\extend^{\smalar}$ for $\smalar \in \set{\largest,\smallest}$ such that $\Pi'\models_\cw \phi_\extend^{\smalar}$ if and only if $\Pi'(\unidisvar) = \extend(T)$ for some $T \subseteq (\pow{\ap})^\omega$:
\begin{align*}
\phi_\extend^{\smalar} ={}&{} (\forall \pi \in \unidisvar.\ (\proposition_\pi \wedge \neg \proposition_\pi)) \vee \\
{}&{}\phi_\allsets^{\smalar}[Z/\unidisvar] \wedge \forall \pi \in \unidisvar.\ \forall \pi' \in \unidisvar.\ \\
{}&{}\quad
\left(
\bigwedge_{i \in [k]} \exists \pi'' \in \unidisvar.\ 
\equals{\pi}{\pi''}{\ap_\allsets} \wedge 
\equals{\pi'}{\pi''}{\ap} \wedge 
\G (\marker_i)_{\pi''} \wedge \bigwedge_{i' \in [k]\setminus\set{i}} \G\neg(\marker_{i'})_{\pi''}
\right) \wedge\\
{}&{}\quad
\left( 
 \exists \pi'' \in \unidisvar.\ 
\equals{\pi}{\pi''}{\ap_\allsets} \wedge 
\equals{\pi'}{\pi''}{\ap} \wedge 
\bigwedge_{i \in [k]} \G\neg(\marker_{i})_{\pi''}
\right)
\end{align*}
where $\phi_\allsets^{\smalar}[Z/\unidisvar]$ is the formula obtained from $\phi_\allsets^{\smalar}$ by replacing each occurrence of $Z$ by $\unidisvar$.

The first disjunct of $\phi_\extend^{\smalar}$ is for the special case of the empty $T$ (where $\proposition$ is an arbitrary proposition) while the second one expresses intuitively that for each $t \in T_\allsets$ (bound to $\pi$) and each $t'$ in the $\ap$-projection of the interpretation of $\unidisvar$ (bound to $\pi'$), the traces~$t\merge t'$ and the traces~$t\merge t'{}\merge \set{\marker_i}^\omega$, for each $i \in [k]$, are in the interpretation of $\unidisvar$.

Finally, let us consider the transformation of transition systems: Given a transition system~$\tsys = (V, E, I, \lambda)$ we construct a transition system~$\extend(\tsys)$ satisfying $\traces(\extend(\tsys)) = \extend(\traces(\tsys))$.
Recall the transition system~$\tsys_\allsets$ depicted in Figure~\ref{fig_tallsets}, which satisfies $\traces(\tsys_\allsets) = T_\allsets$. 
Let $\tsys_\allsets = (V_a, E_a, I_a,\lambda_a)$.
We define $\extend(\tsys) = (V', E', I', \lambda')$ where 
\begin{itemize}
    \item $V' = V \times V_a \times [k] \times\set{0,1} $,
    \item $E' = \set{((v,v_a,i,b),(v',v_a',i,b)) \mid (v,v') \in E, (v_a, v_a') \in E_a, i \in [k] \text{ and } b \in \set{0,1} }$,
    \item $I' = I \times I_a \times [k] \times\set{0,1} $, and
    \item $\lambda'(v,v_a,i,b) = 
    \begin{cases}
        \lambda(v) \cup \lambda_a(v_a) \cup \set{\marker_i} &\text{ if } b=1,\\
        \lambda(v) \cup \lambda_a(v_a)  &\text{ if } b=0.
    \end{cases}$
\end{itemize}
Note that we indeed have $
\traces(\extend(\tsys)) = \extend(\traces(\tsys))$.
Hence, if $\Pi(\unidisvar) = \traces(\tsys)$ and $\Pi'(\unidisvar) = \traces(\extend(\tsys))$, then $\Pi$ and $\Pi'$ satisfy the requirement spelled out in Corollary~\ref{coro_mmcorrectness}.

\begin{lem}
\label{lem_so2sofp}
Let $\phi$ be a $\sohyltl$ sentence and $\phi^{\smalar}$ as defined above (for some $\smalar \in \set{\largest,\smallest}$), and let $\tsys$ be a transition system.
\begin{enumerate}
    \item\label{lem_so2sofp_sat} $\phi$ is satisfiable under closed-world semantics if and only if $\phi^{\smalar} \wedge \phi_\extend^{\smalar}$ is satisfiable under closed-world semantics.
    \item\label{lem_so2sofp_fssat} $\phi$ is finite-state satisfiable under closed-world semantics if and only if $\phi^{\smalar} \wedge \phi_\extend^{\smalar}$ is finite-state satisfiable under closed-world semantics. 
    \item\label{lem_so2sofp_mc} $\tsys \models_\cw \phi$ if and only if $\extend(\tsys) \models_\cw \phi^{\smalar}$.  
\end{enumerate}
\end{lem}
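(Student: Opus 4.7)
The strategy is to chain together Corollary~\ref{coro_mmcorrectness}, which relates $\phi$-models to their $\extend$-extensions, with the semantic characterization of $\phi_\extend^\smalar$ (whose models~$\Pi'$ are exactly those with $\Pi'(\unidisvar)=\extend(T)$ for some $T$) and the equality $\traces(\extend(\tsys)) = \extend(\traces(\tsys))$ established above. Part~\ref{lem_so2sofp_mc} falls out immediately; parts~\ref{lem_so2sofp_sat} and~\ref{lem_so2sofp_fssat} require combining these ingredients in the two directions, with part~\ref{lem_so2sofp_fssat} needing a small additional construction on the backward direction to recover a finite transition system for $T$ from a finite transition system for $\extend(T)$.

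For part~\ref{lem_so2sofp_mc}, assume $\tsys \models_\cw \phi$, i.e., $\Pi_\emptyset[\unidisvar \mapsto \traces(\tsys)] \models_\cw \phi$. Corollary~\ref{coro_mmcorrectness} yields $\Pi_\emptyset[\unidisvar \mapsto \extend(\traces(\tsys))] \models_\cw \phi^\smalar$, and $\extend(\traces(\tsys)) = \traces(\extend(\tsys))$ gives $\extend(\tsys) \models_\cw \phi^\smalar$. The converse direction is symmetric. For part~\ref{lem_so2sofp_sat}, the forward direction witnesses satisfiability of $\phi^\smalar \wedge \phi_\extend^\smalar$ by $\extend(T)$ whenever $T \models_\cw \phi$: Corollary~\ref{coro_mmcorrectness} takes care of $\phi^\smalar$, while $\extend(T)$ satisfies $\phi_\extend^\smalar$ directly by construction of the latter formula. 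Conversely, any model~$T'$ of $\phi^\smalar \wedge \phi_\extend^\smalar$ must, by the defining property of $\phi_\extend^\smalar$, be of the form $\extend(T)$ for some $T \subseteq (\pow{\ap})^\omega$; applying Corollary~\ref{coro_mmcorrectness} then gives $T \models_\cw \phi$.

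For part~\ref{lem_so2sofp_fssat}, the forward direction is identical to that of part~\ref{lem_so2sofp_sat} but uses $\extend(\tsys)$, which is a finite transition system, together with $\traces(\extend(\tsys)) = \extend(\traces(\tsys))$, and the fact that $\extend(\tsys) \models_\cw \phi_\extend^\smalar$ follows from the shape of its trace set. The backward direction is where the one genuinely new argument is needed: given a finite $\tsys'$ with $\tsys' \models_\cw \phi^\smalar \wedge \phi_\extend^\smalar$, we know $\traces(\tsys') = \extend(T)$ for some $T$, and we must exhibit a finite transition system whose trace set equals $T$. Here the key observation is that every trace in $\extend(T)$ is either marker-free or carries the same marker~$\marker_i$ at every position; hence the unmarked traces of $\tsys'$ are precisely $\set{t \merge t' \mid t \in T,\, t' \in T_\allsets}$, whose $\ap$-projection is $T$ (using $T_\allsets \neq \emptyset$). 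The obstacle is therefore purely syntactic: build $\tsys_T$ from $\tsys'$ by first removing all vertices on which some $\marker_i$ holds and all vertices from which no infinite marker-free path exists (both operations are computable on finite transition systems and preserve finiteness), then relabeling each remaining vertex $v$ by $\lambda'(v) \cap \ap$. A straightforward path-lifting argument shows $\traces(\tsys_T) = T$, completing the proof.
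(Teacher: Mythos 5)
Your proposal is correct and follows the same overall strategy as the paper's proof: all three parts are obtained by chaining Corollary~\ref{coro_mmcorrectness} with the characterization of models of $\phi_\extend^{\smalar}$ as sets of the form~$\extend(T)$ and with the identity~$\traces(\extend(\tsys)) = \extend(\traces(\tsys))$. The only place where you diverge is the backward direction of part~\ref{lem_so2sofp_fssat}, where you build $\tsys_T$ by deleting all marked vertices and all vertices without an infinite marker-free path before projecting the labels. This works, but it is more machinery than needed: the paper simply removes the propositions in $\ap' \setminus \ap$ from the vertex labels of $\tsys'$, which yields a finite transition system whose trace set is the $\ap$-projection of $\traces(\tsys') = \extend(T)$ --- and that projection is already $T$, since every trace of $\extend(T)$, marked or not, has the form $t \merge t'$ or $t \merge t' \merge \set{\marker_i}^\omega$ with $t \in T$ and $T_\allsets \neq \emptyset$ guarantees every $t \in T$ is hit. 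So your vertex-deletion step (and the accompanying path-lifting argument) can be dropped entirely; restricting to the unmarked traces buys nothing here because the marked traces project to the same set.
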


\begin{proof}
\ref{lem_so2sofp_sat}.) 
We have
\begin{align*}
{}&{} \text{$\phi$ is satisfiable under closed-world semantics}\\
\Leftrightarrow {}&{} \text{there exists a $T \subseteq (\pow{\ap})^\omega$ such that $ T \models_\cw \phi$}\\
\xLeftrightarrow{\ast} {}&{} \text{there exists a $T' \subseteq (\pow{\ap'})^\omega$ such that $ T' \models_\cw \phi^{\smalar}\wedge \phi_\extend^{\smalar}$}\\
\Leftrightarrow {}&{} \text{$\phi^{\smalar} \wedge \phi_\extend^{\smalar}$ is satisfiable under closed-world semantics.}
\end{align*}
Here, the equivalence marked with $\ast$ follows from Corollary~\ref{coro_mmcorrectness}:
\begin{itemize}
    \item For the left-to-right direction, we pick $T' = \extend(T)$. 
    \item For the right-to left direction, we pick $T$ to be the unique subset of $(\pow{\ap})^\omega$ such that $T' = \extend(T)$, which is well-defined as $T'$ satisfies $\phi_\extend^{\smalar}$. 
\end{itemize}

\ref{lem_so2sofp_fssat}.)
We have
\begin{align*}
{}&{} \text{$\phi$ is finite-state satisfiable under closed-world semantics}\\
\Leftrightarrow {}&{} \text{there exists a transition system~$\tsys$ over $\pow{\ap}$ such that $\traces(\tsys) \models_\cw \phi$}\\
\xLeftrightarrow{\ast} {}&{} \text{there exists a transition system~$\tsys'$ over $\pow{\ap'}$ such that $\traces(\tsys') \models_\cw \phi^{\smalar}\wedge \phi_\extend^{\smalar}$}\\
\Leftrightarrow {}&{} \text{$\phi^{\smalar} \wedge \phi_\extend^{\smalar}$ is finite-state satisfiable under closed-world semantics.}
\end{align*}
Here, the equivalence marked with $\ast$ follows from Corollary~\ref{coro_mmcorrectness}:
\begin{itemize}
    \item For the left-to-right direction, we pick $\tsys' = \extend(\tsys)$, which satisfies $\traces(\extend(\tsys)) = \extend(\traces(\tsys))$.
    \item For the right-to left direction, we pick $\tsys$ to be the transition system obtained from $\tsys'$ by removing all propositions in $\ap' \setminus \ap$ from the vertex labels.
    This implies that $\traces(\tsys')$ is equal to $\extend(\traces(\tsys))$, as $\traces(\tsys')$ satisfies $\phi_\extend^{\smalar}$.
\end{itemize}

\ref{lem_so2sofp_mc}.) 
Due to Corollary~\ref{coro_mmcorrectness} and $\traces(\extend(\tsys)) = \extend(\traces(\tsys))$, we have $\tsys \models_\cw \phi$ if and only if $\extend(\tsys)\models_\cw \phi^\smalar$.
\end{proof}

Finally, Theorem~\ref{thm_hyltlmm_complexity} is now a direct consequence of Lemma~\ref{lem_so2sofp}, the fact that closed-world semantics can be reduced to standard semantics, and the fact that all three problems for $\sohyltl$ (under closed-world semantics) are equivalent to truth in third-order arithmetic~\cite{sohypercomplexity}.

\section{\texorpdfstring{The Least Fixed Point Fragment of $\sohyltlfp$}{Second-order HyperLTL with Least Fixed Points}}
\label{sec_lfp}

We have seen that even restricting second-order quantification to smallest/largest sets that satisfy a guard formula is essentially as expressive as full $\sohyltl$, and thus as difficult.
However, Beutner et al.~\cite{DBLP:conf/cav/BeutnerFFM23} note that applications like common knowledge and asynchronous hyperproperties do not even require quantification over smallest/largest sets satisfying arbitrary guards, they \myquot{only} require quantification over least fixed points of $\hyltl$ definable functions.
This finally yields a fragment with (considerably) lower complexity: 
we show that under standard semantics, satisfiability is $\Sigma_1^2$-complete while finite-state satisfiability and model-checking are both equivalent to truth in second-order arithmetic.
On the other hand, under closed-world semantics, satisfiability is $\Sigma_1^1$-complete while the complexity of the other two problems is not affected, i.e., finite-state satisfiability and model-checking are still equivalent to truth in second-order arithmetic.
Note that this means that $\lfpsohyltlfp$ satisfiability under closed-world semantics is as hard as  $\hyltl$ satisfiability.

The intuitive reason that satisfiability under closed-world semantics is simpler is that we show that every satisfiable sentence has a countable model. 
Thus, not all subsets of natural numbers can be encoded in such a model, which restricts the complexity of the problem.
On the other hand, in, e.g., model-checking, the model is part of the input, i.e., one has to potentially evaluate a formula over a set of traces that encodes all subsets of natural numbers. 
This makes the problem much harder. 

A $\sohyltlfp$ sentence using only minimality constraints has the form 
\[
\phi = 
\gamma_1.\ \quant_1(Y_1, \smallest, \phi_1^\con).\ 
\gamma_2.\ \quant_2(Y_2, \smallest, \phi_2^\con).\ 
\ldots 
\gamma_k.\ \quant_k(Y_k, \smallest, \phi_k^\con).\
\gamma_{k+1}.\
\psi
\]
satisfying the following properties: 
\begin{itemize}
    \item Each $\gamma_j$ is a block~$
    \gamma_j = \quant_{\ell_{j-1}+1} \pi_{\ell_{j-1}+1} \in X_{\ell_{j-1}+1} \cdots \quant_{\ell_{j}} \pi_{\ell_{j}} \in X_{\ell_{j}} 
    $
    of trace quantifiers (with $\ell_0 = 0$). As $\phi$ is a sentence, we must have $\set{X_{\ell_j+1}, \ldots, X_{\ell_{j}}} \subseteq \set{\univar,\unidisvar, Y_1, \ldots, Y_{j-1}}$.

    \item The free variables of $\phi_j^\con$ are among the trace variables quantified in the $\gamma_{j'}$ with $j' \le j$ and $\univar,\unidisvar, Y_1, \ldots, Y_j$.
    
    \item $\psi$ is a quantifier-free formula. Again, as $\phi$ is a sentence, the free variables of $\psi$ are among the trace variables quantified in the $\gamma_j$.
\end{itemize}

Now, $\phi$ is an $\lfpsohyltlfp$ sentence\footnote{Our definition here differs slightly from the one of  \cite{DBLP:conf/cav/BeutnerFFM23} in that we allow to express the existence of some traces in the fixed point (via the subformulas~$\overdot{\pi}_i\tracein Y_j$). All examples and applications of~\cite{DBLP:conf/cav/BeutnerFFM23} are also of this form.}, if additionally each $\phi_j^\con$ has the form 
    \[ \phi_j^\con = \overdot{\pi}_1\tracein Y_j \wedge \cdots \wedge \overdot{\pi}_n \tracein Y_j \wedge   \forall \overdotdot{\pi}_1 \in Z_1.\ \ldots \forall \overdotdot{\pi}_{n'} \in Z_{n'}.\ \psi^\step_j \rightarrow \overdotdot{\pi}_m \tracein Y_j \]
 for some $n\ge 0$, $n' \ge 1$, where $1 \le m \le {n'}$, and where we have
\begin{itemize}
    \item $\set{\overdot{\pi}_1,\ldots, \overdot{\pi}_n} \subseteq \set{\pi_1, \ldots, \pi_{\ell_j} }$,
    \item $\set{Z_1, \ldots, Z_{n'}} \subseteq \set{\univar, \unidisvar, Y_1, \ldots, Y_j}$, and
    \item $\psi^\step_j$ is quantifier-free with free variables among $\overdotdot{\pi}_1, \ldots, \overdotdot{\pi}_{n'}, \pi_1, \ldots, \pi_{\ell_j}$.
\end{itemize} 
As always, $\phi_j^\con$ can be brought into the required prenex normal form.

Let us give some intuition for the definition. To this end, fix some~$j \in \set{1,2,\ldots, k}$ and a variable assignment~$\Pi$ whose domain contains at least all variables quantified before $Y_j$, i.e., all $Y_{j'}$ and all variables in the $\gamma_{j'}$ for $j' \le j$, as well as $\univar$ and $\unidisvar$. 
Then, 
\[ \phi_j^\con= \overdot{\pi}_1\in Y_j \wedge \cdots \wedge \overdot{\pi}_n \in Y_j \wedge \left(  \forall \overdotdot{\pi}_1 \in Z_1.\ \ldots \forall \overdotdot{\pi}_{n'} \in Z_{n'}.\ \psi^\step_j \rightarrow \overdotdot{\pi}_m \tracein Y_j\right) \]
induces the monotonic function~$f_{\Pi,j} \colon \pow{(\pow{\ap})^\omega} \rightarrow \pow{(\pow{\ap})^\omega}$ defined as
\begin{multline*}
f_{\Pi,j}(S) = S \cup \set{\Pi(\overdot{\pi}_1), \ldots, \Pi(\overdot{\pi}_n)} \cup  \set{\Pi'(\overdotdot{\pi}_m) \mid \Pi' = \Pi[\overdotdot{\pi}_1 \mapsto t_1, \ldots, \overdotdot{\pi}_{n'} \mapsto t_{n'}]\\
 \text{ for } t_i \in \Pi(Z_i) \text{ if } Z_i \neq Y_j \text{ and }t_i \in S \text{ if } Z_i = Y_j \text{ s.t. }\Pi' \models \psi^\step_j}.    
\end{multline*}
We define $S_0 = \emptyset$, $S_{\ell+1} =  f_{\Pi,j}(S_\ell)$, and 
\[\lfp(\Pi,j) = \bigcup\nolimits_{\ell\in\nats} S_\ell,\] 
which is the least fixed point of $f_{\Pi,j}$. Due to the minimality constraint on $Y_j$ in $\phi$, $\lfp(\Pi,j)$ is the unique set in $\solutions(\Pi, (Y_j,\smallest,\phi^\con_j))$. Hence, an induction shows that $\lfp(\Pi,j)$ only depends on the values~$\Pi(\pi)$ for trace variables~$\pi$ quantified before $Y_j$ as well as the values~$\Pi(\unidisvar)$ and $\Pi(\univar)$, but not on the values~$\Pi(Y_{j'})$ for $j' < j$ (as they are unique).

Thus, as $\solutions(\Pi, (Y_j,\smallest,\phi^\con_j))$ is a singleton, it is irrelevant whether $\quant_j$ is an existential or a universal quantifier. Instead of interpreting second-order quantification as existential or universal, here one should understand it as a deterministic least fixed point computation: choices for the trace variables and the two distinguished second-order variables uniquely determine the set of traces that a second-order quantifier assigns to a second-order variable.

\begin{rem}
\label{remark_lfpcw}
Note that the traces that are added to a fixed point assigned to $Y_j$ either come from another $Y_{j'}$ with $j' < j$, from the model (via $\unidisvar$), or from the set of all traces (via $\univar$). 
Thus, for $\univar$-free formulas, all second-order quantifiers range over (unique) subsets of the model, i.e., there is no need for an explicit definition of closed-world semantics. The analogue of closed-world semantics for $\lfpsohyltlfp$ is to restrict oneself to $\univar$-free sentences.
\end{rem}

In the remainder of this section, we study the complexity of $\lfpsohyltlfp$. 

\subsection{\texorpdfstring{Satisfiability for the \boldmath$\univar$-free Fragment}{Satisfiability for the Xa-free Fragment}}
\label{subsec_lfpsatcw}
For satisfiability of $\univar$-free sentences, the key step is again to study the size of models of satisfiable sentences. 
For $\univar$-free $\lfpsohyltlfp$, as for $\hyltl$, we are able to show that each satisfiable sentence has a countable model.
The following result is proven by generalizing the proof for the analogous result for $\hyltl$~\cite{FZ17} showing that every model~$T$ of a $\hyltl$ sentence~$\phi$ contains a countable $R\subseteq T$ that is closed under the application of Skolem functions. This implies that $R$ is also a model of $\phi$.

\begin{lem}
\label{lemma_lfpcwmodelsize}
Every satisfiable $\univar$-free $\lfpsohyltlfp$ sentence has a countable model.
\end{lem}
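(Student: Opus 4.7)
My plan is to adapt the Löwenheim–Skolem style construction of Finkbeiner and Zimmermann~\cite{FZ17} for $\hyltl$, whose core idea is to extract a countable submodel closed under Skolem functions. Two new ingredients must be handled: the second-order fixed-point quantifiers, and first-order trace quantifiers ranging over their fixed points.

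Let $T$ be a model of $\phi$. I would first Skolemize each existential first-order trace quantifier, obtaining a Skolem function that returns a witness trace given the values of the previously bound first-order variables. The second-order quantifiers themselves need no Skolemization, because $\solutions(\Pi,(Y_j,\smallest,\phi_j^\con))$ is the singleton $\set{\lfp(\Pi,j)}$ and is uniquely determined by $\Pi$.

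Next I would inductively build a chain $R_0 \subseteq R_1 \subseteq \cdots \subseteq T$ of countable sets, starting from an arbitrary nonempty countable $R_0$. Given $R_n$, I would form $R_{n+1}$ by adding (i) every trace obtained by applying a Skolem function to an admissible tuple of traces drawn from $R_n$ together with the fixed-point sets computed with $\unidisvar$ interpreted by $R_n$; (ii) every trace appearing in one of those fixed-point sets; and, crucially, (iii) every trace occurring in a derivation (inside the corresponding $T$-fixed point) of a trace added by~(i) or~(ii), so that such witnesses can be re-derived from traces already in $R_{n+1}$. Because $\phi$ is $\univar$-free, every $\Pi(Z_i)$ arising in a fixed-point computation is either the current universe $R_n$ or a previously computed fixed point. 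Since each $f_{\Pi,j}$ is $\omega$-continuous and reaches its least fixed point from $\emptyset$ in $\omega$ steps, an induction on $j$ shows that every fixed-point set encountered at step $n$ is countable; each derivation tree is finite; and only countably many tuples need to be considered at each round. Hence $R_{n+1}$ is countable, and so is $R := \bigcup_{n\in\nats} R_n$.

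Finally I would prove $R \models \phi$ by structural induction: for any variable assignment $\Pi$ whose values come from $R$ (or from fixed-point sets computed with universe $R$), $\phi$ evaluates to the same truth value under universe $R$ as under universe $T$. The atomic, Boolean, temporal and universal cases are routine. For an existential $\exists \pi \in X$, the Skolem witness $t$ lies in $R$ by~(i); when $X$ is a fixed-point variable $Y_j$, closure~(iii) guarantees that the derivation of $t$ in the $T$-fixed point can be replayed inside $R$, so $t$ also belongs to the corresponding $R$-fixed point. Continuity of $f_{\Pi,j}$ in its universe argument yields that the fixed point computed with $\unidisvar$ interpreted by $R$ equals the union over $n$ of the fixed points computed with $\unidisvar$ interpreted by $R_n$, all of which are contained in $R$ by~(ii), so trace quantifiers over fixed points stay inside $R$. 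The main obstacle will be controlling the mutual dependence between Skolem witnesses and fixed-point derivations: closing under one step creates new obligations for the other, and these obligations cascade through the earlier fixed-point layers $Y_{j'}$ with $j' < j$, which is precisely what forces the $\omega$-fold iteration.
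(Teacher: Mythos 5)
Your proposal is correct and follows essentially the same route as the paper's proof: Skolemize the existential trace quantifiers, observe that the second-order quantifiers are deterministic, and build a countable union $R=\bigcup_n R_n$ closed both under the Skolem functions and under the (finite) derivations of fixed-point membership taken with respect to the original model, then verify $R\models\phi$ via monotonicity of the fixed points in the universe and replay of those derivations. The paper formalizes your ``derivations'' as witness trees and proves the two facts you rely on (finiteness/correctness and replayability under shrinking the universe) as a separate lemma, but the construction and the verification are the same.
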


A crucial step to prove this upper bound is to characterize membership of a trace in some least fixed point by the existence of a finite trace-labeled tree. These trees model the stage-wise definition of the semantics of second-order quantification in $\lfpsohyltlfp$.

To formalize this intuition, we need to introduce some additional notation and concepts. 
Let \[
\phi = 
\gamma_1.\ \quant_1(Y_1, \smallest, \phi_1^\con).\ 
\gamma_2.\ \quant_2(Y_2, \smallest, \phi_2^\con).\ 
\ldots 
\gamma_k.\ \quant_k(Y_k, \smallest, \phi_k^\con).\
\gamma_{k+1}.\
\psi
\]
be an $\univar$-free $\lfpsohyltlfp$ sentence.
We assume without loss of generality that each trace variable is quantified at most once in $\varphi$, which can always be achieved by renaming variables. This implies that for each trace variable~$\pi$ quantified in some $\gamma_j$, there is a unique second-order variable~$X_\pi$ such that $\pi$ ranges over $X_\pi$.
Furthermore, we assume that each $Y_j$ is different from $\unidisvar$ and $\univar$, which can again be achieved by renaming variables, if necessary.

Recall that set quantifiers in $\lfpsohyltlfp$ \myquot{range} over unique least fixed points that do not depend on the choices for the other set variables, but only on the choice for the universe of discourse and the choices for the trace variables quantified before the set variable.

We say that a variable assignment~$\Pi$ is $(\phi,j)$-sufficient, for some $j \in \set{1, \ldots, k}$, if its domain contains $\unidisvar,Y_1, \ldots, Y_{j-1}$, if its domain contains the variables quantified in the $\gamma_{j'}$ with $j' \le j$, and if $\Pi(\pi) \in \Pi(X_\pi)$ for all $\pi$ quantified in some $\gamma_{j'}$ with $j' \le j$.
A $(\phi,j)$-sufficient assignment~$\Pi$ contains all information necessary to compute~$\lfp(\Pi, j)$.
Note that we do not have any requirements on the $\Pi(Y_{j'})$ being the right sets, i.e., being the least fixed points.
This will be taken into account later. 

Our first results states that the computation of least fixed points is monotonic in the model we are starting with (given by the interpretation of $\unidisvar$) and the values for the previously quantified $Y_{j'}$.

\begin{lem}
\label{lemma:lfpmono}
Let $\Pi$ and $\Pi'$ be $(\phi, j)$-sufficient assignments with $\Pi'(\unidisvar) \subseteq \Pi(\unidisvar)$, $\Pi'(Y_{j'}) \subseteq \Pi(Y_{j'})$ for all $1 \le j' < j$, and $\Pi(\pi) = \Pi'(\pi)$ for all trace variables appearing in the $\gamma_{j'}$ with $1 \le j' \le j$.
Then, $\lfp(\Pi',j) \subseteq \lfp(\Pi,j)$.
\end{lem}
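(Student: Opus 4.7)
The plan is to prove by induction on $\ell$ that the $\ell$-th stage $S_\ell'$ of the fixed-point iteration computing $\lfp(\Pi',j)$ is contained in the $\ell$-th stage $S_\ell$ of the iteration computing $\lfp(\Pi,j)$; taking the union over $\ell$ then yields the conclusion.

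The base case $S_0' = \emptyset \subseteq \emptyset = S_0$ is immediate. For the inductive step, I would take an arbitrary trace $t \in S'_{\ell+1} = f_{\Pi',j}(S'_\ell)$ and show $t \in f_{\Pi,j}(S_\ell) = S_{\ell+1}$ by cases on which of the three parts in the definition of $f_{\Pi',j}$ puts $t$ into $S'_{\ell+1}$. If $t \in S'_\ell$, the induction hypothesis gives $t \in S_\ell \subseteq S_{\ell+1}$. If $t = \Pi'(\overdot{\pi}_i)$ for some $i$, the hypothesis $\Pi(\pi) = \Pi'(\pi)$ on trace variables quantified in the blocks $\gamma_1, \ldots, \gamma_j$ gives $t = \Pi(\overdot{\pi}_i)$, so $t \in f_{\Pi,j}(S_\ell)$.

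The only interesting case is when $t = \Pi''(\overdotdot{\pi}_m)$ arises from the universally guarded part, where $\Pi'' = \Pi'[\overdotdot{\pi}_1 \mapsto t_1, \ldots, \overdotdot{\pi}_{n'} \mapsto t_{n'}]$ for witnesses $t_i$ satisfying $t_i \in \Pi'(Z_i)$ whenever $Z_i \neq Y_j$ and $t_i \in S'_\ell$ when $Z_i = Y_j$, and additionally $\Pi'' \models \psi^\step_j$. I would reuse the very same witnesses $t_1, \ldots, t_{n'}$ for $\Pi$. Since $Z_i \in \{\unidisvar, Y_1, \ldots, Y_{j-1}\}$ when $Z_i \neq Y_j$ (recall $\phi$ is $\univar$-free), the hypotheses $\Pi'(\unidisvar)\subseteq\Pi(\unidisvar)$ and $\Pi'(Y_{j'})\subseteq\Pi(Y_{j'})$ for $j'<j$ yield $t_i \in \Pi(Z_i)$; and when $Z_i = Y_j$ the induction hypothesis gives $t_i \in S'_\ell \subseteq S_\ell$. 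Let $\Pi''_\new = \Pi[\overdotdot{\pi}_1 \mapsto t_1, \ldots, \overdotdot{\pi}_{n'} \mapsto t_{n'}]$.

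The one subtle point, and the main thing worth checking, is that $\Pi''_\new \models \psi^\step_j$. Since $\psi^\step_j$ is quantifier-free, its satisfaction depends only on the traces assigned to its free variables $\overdotdot{\pi}_1, \ldots, \overdotdot{\pi}_{n'}, \pi_1, \ldots, \pi_{\ell_j}$. The $\overdotdot{\pi}_i$ are mapped to the same $t_i$ in both $\Pi''$ and $\Pi''_\new$, and each $\pi_r$ with $r \le \ell_j$ is a trace variable quantified in some $\gamma_{j'}$ with $j' \le j$, so $\Pi(\pi_r) = \Pi'(\pi_r)$ by assumption. A routine induction on the structure of $\psi^\step_j$ then shows $\Pi''_\new \models \psi^\step_j \Leftrightarrow \Pi'' \models \psi^\step_j$, and hence $t = \Pi''_\new(\overdotdot{\pi}_m) \in f_{\Pi,j}(S_\ell) = S_{\ell+1}$, completing the induction.
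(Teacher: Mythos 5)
Your proof is correct and is exactly the argument the paper intends: the paper's own proof is a one-line remark that ``an induction shows $S'_\ell \subseteq S_\ell$'' over the stages of the two fixed-point iterations, and your case analysis (reusing the same witnesses, using the containment hypotheses for $Z_i \in \set{\unidisvar, Y_1,\ldots,Y_{j-1}}$, the induction hypothesis for $Z_i = Y_j$, and the agreement of $\Pi$ and $\Pi'$ on the trace variables to transfer satisfaction of the quantifier-free $\psi^\step_j$) is precisely the detail being elided.
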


\begin{proof}
An induction shows that $S'_\ell \subseteq S_\ell$, where $\lfp(\Pi, j) = \bigcup_{\ell\ge 0} S_\ell$ and $\lfp(\Pi', j) = \bigcup_{\ell\ge 0} S'_\ell$ are the stages of the fixed points.
\end{proof}

Now, let $\Pi$ be a variable assignment whose domain contains at least $\unidisvar$ and all trace variables in the $\gamma_j$ for $1 \le j \le k$.
We define $\Pi_0 = \Pi$ and $\Pi_j = \Pi_{j-1}[Y_j \mapsto \lfp(\Pi_{j-1},j)]$, i.e., we iteratively compute and assign the least fixed points to the set variables in $\phi$.
Note that fixed points are uniquely determined by just the values~$\Pi(\unidisvar)$ and $\Pi(\pi)$ for $\pi$ in the $\gamma_j$ with $1\le j \le k$, but we still need the previous fixed points to determine the $j$-th one. 
Now, we define $\itereval(\Pi) = \Pi_k$, which coincides with $\Pi$ on all variables but the $Y_j$, which are mapped to the least fixed points induced by $\Pi$.

\begin{rem}
\label{remark:lfpmonoiter}
Note that each $\Pi_{j-1}$ with $1 \le j \le k$ is $(\phi,j)$-sufficient.
Hence, if we have variable assignments~$\Pi$ and $\Pi'$ whose domains contain at least $\unidisvar$ and all trace variables in the $\gamma_j$ for $1 \le j \le k$ and such that $\Pi'(\unidisvar) \subseteq \Pi(\unidisvar)$ and $\Pi(\pi) = \Pi'(\pi)$ for all trace variables appearing in the $\gamma_{j}$ with $1 \le j \le k$, then Lemma~\ref{lemma:lfpmono} implies that we have $\itereval(\Pi')(Y_j) \subseteq \itereval(\Pi)(Y_j)$ for all $j$.
\end{rem}

Next, we define a tree structure that witnesses the membership of traces in least fixed points assigned to the variables~$Y_j$, which make the inductive construction of the stages of the least fixed points explicit.
Intuitively, consider the formula
\begin{equation}
\label{eq_phicon}
\phi_j^\con = \overdot{\pi}_1\tracein Y_j \wedge \cdots \wedge \overdot{\pi}_n \tracein Y_j \wedge   \forall \overdotdot{\pi}_1 \in Z_1.\ \ldots \forall \overdotdot{\pi}_{n'} \in Z_{n'}.\ \psi^\step_j \rightarrow \overdotdot{\pi}_m \tracein Y_j.    
\end{equation}
inducing the unique least fixed point that $Y_j$ ranges over. 
It expresses that a trace~$t$ is in the fixed point either because it is of the form~$\Pi(\overdot{\pi}_i)$ for some $i \in \set{1,\ldots, n}$ where $\overdot{\pi}_i$ is a trace variable quantified before the quantification of $Y_j$, or $t$ is in the fixed point because there are traces~$t_1,\ldots, t_{n'}$ such that assigning them to the $\overdotdot{\pi}_i$ satisfies $\psi^\step_j$ and $t = t_m$. 
Thus, the traces~$t_1,\ldots, t_{n'}$ witness that $t$ is in the fixed point. 
However, each $t_i$ must be selected from $\Pi(Z_i)$, which, if $Z_i = Y_{j'}$ for some $j'$, again needs a witness.
Thus, a witness is in general a tree labeled by traces and indices from $\set{1, \ldots, k}$ denoting the fixed point variable the tree proves membership in.
Note that $\psi^\step_j$ may contain free variables that are quantified in the $\gamma_{j'}$ for $1 \le j' \le j$. 
The membership of such variables~$\pi$ in $\Pi(X_\pi)$ will not be witnessed by this witness tree, as their values are already determined by $\Pi$.
Instead, they will be taken care of by their own witness trees.

Formally, let us fix a $j^* \in \set{1,\ldots,k}$, a $(\phi,j^*)$-sufficient assignment~$\Pi$, and a trace~$t^*$. 
A $\Pi$-witness tree for $(t^*,j^*)$ (which witnesses~$t^* \in\lfp(\Pi,j^*)$) is an ordered finite tree whose vertices are labeled by pairs~$(t,j)$ where $t$ is a trace from $\Pi(\unidisvar)$ and where $j $ is in $\set{1,\ldots, k} \cup \set{\bot}$ such that the following are satisfied:
\begin{itemize}
    \item The root is labeled by $(t^*, j^*)$.
        \item If a vertex is labeled with $(t,\bot)$ for some trace~$t$, then it must be a leaf and $t \in \Pi(\unidisvar)$.
        
        \item Let $(t,j)$ be the label of some vertex~$v$ with $j\in\set{1,\ldots,k}$ and let $\phi_j^\con$ as in (\ref{eq_phicon}).
If $v$ is a leaf, then we must have $t = \Pi(\overdot{\pi}_i)$ for some $i \in \set{1, \ldots, n}$.
If $v$ is an internal vertex, then it must have $n'$ successors labeled by $(t_1,j_1),\ldots,(t_{n'},j_{n'})$ (in that order) such that $\Pi[\overdotdot{\pi}_1 \mapsto t_1, \ldots, \overdotdot{\pi}_{n'} \mapsto t_{n'}] \models \psi^\step_j $, $t = t_m$, and
such that the following holds for all $i \in \set{1, \ldots, n'}$:
\begin{itemize}
    \item If $Z_i = \unidisvar$, then the $i$-th successor of $v$ is a leaf, $t_i \in \Pi(\unidisvar)$, and $j_i = \bot$.
    \item If $Z_i = Y_{j'}$ for some $j'$ (which satisfies $1 \le j' \le j$ by definition of $\lfpsohyltlfp$), then we must have that $ j_i = j'$ and that the subtree rooted at the $i$-th successor of $v$ is a $\Pi$-witness tree for $(t_i,j_i)$.
\end{itemize}
\end{itemize}

Given a $\Pi$-witness tree~$\tree$, let $\traces(\tree)$ denote the set of traces labeling the vertices of $\tree$, which is a finite subset of $\Pi(\unidisvar)$.

\begin{lem}
\label{lemma:witnesstreeprops}
Let $\Pi$ be a $(\phi,j)$-sufficient assignment with $\Pi(Y_{j'}) = \itereval(\Pi)(Y_{j'})$ for all $j' < j$.
\begin{enumerate}
    \item \label{lemma:witnesstreeprops_correctness}
    We have $t \in \lfp(\Pi, j)$ if and only if there is a $\Pi$-witness tree for $(t,j)$.

    \item\label{lemma:witnesstreeprops_restr}
    Let $\Pi'$ be another $(\phi, j)$-sufficient assignment with $\Pi'(Y_{j'}) = \itereval(\Pi')(Y_{j'})$ for all $j' < j$ and with $\Pi(\pi) = \Pi'(\pi)$ for all trace variables appearing in the $\gamma_{j'}$ with $j' \le j$.
    If $\Pi'(\unidisvar) \supseteq \traces(\tree)$ for a $\Pi$-witness tree~$\tree$ for $(t,j)$,
    then $\tree$ is also a $\Pi'$-witness tree for $(t,j)$.

\end{enumerate}
\end{lem}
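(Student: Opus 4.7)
My plan is to prove part~\ref{lemma:witnesstreeprops_correctness} by a nested induction: outer on the index~$j \in \set{1,\ldots,k}$, and inner (for the forward direction) on the stage~$\ell$ at which $t$ first enters $\lfp(\Pi,j) = \bigcup_\ell S_\ell$. The backward direction will follow by a straightforward induction on the height of the witness tree. The outer induction is needed because a witness tree for $(t,j)$ may embed witness subtrees for $(t_i, j')$ with $j' < j$; the hypothesis $\Pi(Y_{j'}) = \itereval(\Pi)(Y_{j'})$ ensures that $\Pi$ is also $(\phi,j')$-sufficient with the corresponding $\itereval$-equality for all $j'' < j'$, so the outer induction hypothesis applies at every such $j'$.

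For the forward direction's inductive step, suppose $t \in S_{\ell+1} \setminus S_\ell$. By definition of $f_{\Pi,j}$, either $t = \Pi(\overdot{\pi}_i)$ for some $i \le n$ (yielding a leaf labeled $(t,j)$), or there are witnesses $t_1, \ldots, t_{n'}$ satisfying $\psi^\step_j$ with $t = t_m$, $t_i \in S_\ell$ whenever $Z_i = Y_j$, and $t_i \in \Pi(Z_i)$ otherwise. I then build an internal vertex labeled $(t,j)$ whose $i$-th child is either a leaf $(t_i, \bot)$ (when $Z_i = \unidisvar$), a subtree furnished by the outer induction hypothesis (when $Z_i = Y_{j'}$ with $j' < j$, using $t_i \in \Pi(Y_{j'}) = \lfp(\Pi,j')$), or a subtree furnished by the inner induction hypothesis (when $Z_i = Y_j$, since then $t_i \in S_\ell$ appears at a strictly earlier stage). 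The backward direction reads this construction in reverse: leaves justify membership via the first conjunct of $\phi_j^\con$, while internal vertices justify membership via the second conjunct, with the recursive calls supplied by induction on the height.

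For part~\ref{lemma:witnesstreeprops_restr}, I would perform structural induction on the witness tree $\tree$, interleaved with an outer induction on $j$ to handle subtrees targeting earlier fixed points. At each vertex, all local conditions are preserved under the change from $\Pi$ to $\Pi'$: labels of the form $\Pi(\overdot{\pi}_r)$ remain correct because $\Pi$ and $\Pi'$ agree on every trace variable quantified in any $\gamma_{j''}$ with $j'' \le j$; leaf conditions $t_i \in \Pi'(\unidisvar)$ follow from the containment $\traces(\tree) \subseteq \Pi'(\unidisvar)$; and the step formula $\Pi'[\overdotdot{\pi}_1 \mapsto t_1, \ldots, \overdotdot{\pi}_{n'} \mapsto t_{n'}] \models \psi^\step_j$ holds because $\psi^\step_j$ is quantifier-free with free variables only among the $\overdotdot{\pi}_i$ (fixed by the tree itself) and the trace variables quantified in earlier blocks (on which $\Pi$ and $\Pi'$ agree). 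For subtrees witnessing $(t_i, j')$ with $j' < j$, the outer induction hypothesis applies since the subtree's trace set is contained in $\traces(\tree)$, hence in $\Pi'(\unidisvar)$.

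The main obstacle is coordinating the two inductions and keeping track of which $(\phi, j')$-sufficiency and $\itereval$-equality hypotheses remain available each time the outer induction hypothesis is invoked on a subtree. For part~\ref{lemma:witnesstreeprops_correctness}, this reduces to the observation that $\Pi$ already satisfies $\Pi(Y_{j''}) = \itereval(\Pi)(Y_{j''})$ for every $j'' < j' \le j$, so the same $\Pi$ serves as the ambient assignment at every lower index. For part~\ref{lemma:witnesstreeprops_restr}, the assumption $\Pi'(Y_{j'}) = \itereval(\Pi')(Y_{j'})$ for $j' < j$ plays the analogous role and is precisely what makes the outer induction hypothesis on subtrees applicable, which explains why the hypothesis is phrased that way.
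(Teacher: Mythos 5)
Your proposal is correct and follows essentially the same route as the paper: a lexicographic induction on $(j,\ell)$ (index, fixed-point stage) for the forward direction of part~\ref{lemma:witnesstreeprops_correctness}, a lexicographic induction on $(j,\text{height})$ for the backward direction, and a structural induction on the tree for part~\ref{lemma:witnesstreeprops_restr}, using exactly the observation that the witness-tree conditions depend only on $\traces(\tree) \subseteq \Pi'(\unidisvar)$ and the values of the trace variables quantified in the $\gamma_{j'}$ with $j' \le j$. The only cosmetic difference is that the paper measures height as the longest root path through vertices labeled with the same index~$j$, which your outer induction on $j$ achieves equivalently.
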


\begin{proof}
\ref{lemma:witnesstreeprops_correctness}. 
Let $\lfp(\Pi, j) = \bigcup_{\ell \ge 0} S_\ell$. The direction from left to right is proven by an induction over $(j,\ell)$ (with lexicographic ordering) constructing for each trace~$t$ in $\lfp(\Pi,j)$ a $\Pi$-witness tree for $(t,j)$.
So, let $t \in \lfp(\Pi, j)$. As we have $S_0 = \emptyset$, there is a minimal $\ell$ such that $t \in S_\ell \setminus S_{\ell-1}$, which satisfies $\ell \ge 1$.

If $\ell = 1$, then there are two cases how $t$ was added to $S_1$:
\begin{itemize}
    \item Either we have $t = \Pi(\overdot{\pi}_i)$ for some $1 \le i \le n$. Then, a tree consisting of a single vertex labeled by $(t,j)$ is a $\Pi$-witness tree for $(t,j)$.
    \item Otherwise, we must have $Z_i \neq Y_j$ for all $1 \le i \le n'$ and there must be traces~$t_1, \ldots, t_{n'}$ such that each $t_i$ is in $\Pi(Z_i)$, $t = t_m$, and 
    \[
    (((\Pi[\overdotdot{\pi}_1 \mapsto t_1])[\overdotdot{\pi}_2 \mapsto t_2]) \cdots )[\overdotdot{\pi}_{n'} \mapsto t_{n'}]
    \models \psi_j^\step,\]
    i.e., we update all variables of the form~$\overdotdot{\pi}_{j'}$ with $t_{j'}$ in $\Pi$ before evaluating $\psi_j^\step$.
    
    If $Z_i = \unidisvar$, then let $\tree_i$ be a tree consisting of a single vertex labeled by $(t_i,\bot)$,
    if $Z_i = Y_{j'}$ for some $j' < j$ (this is the only other possibility), then let $\tree_i$ be a $\Pi$-witness tree for $(t_i,j')$, which exists by induction hypothesis.
    Then,
\begin{center}
\begin{tikzpicture}[thick]
\node (root) at (0,0) {$(t,j)$};

\draw[rounded corners] (-3,-1) -- (-4,-2) -- (-2,-2) -- cycle;
\draw[rounded corners] (-.5,-1) -- (-1.5,-2) -- (.5,-2) -- cycle;
\draw[rounded corners] (3,-1) -- (4,-2) -- (2,-2) -- cycle;
\node at (1.25,-1.6) {$\cdots$};
\node at (-3,-1.6) {$\tree_1$};
\node at (-.5,-1.6) {$\tree_2$};
\node at (3,-1.6) {$\tree_{n'}$};

\path[-stealth]
(root) edge (-3,-1)
(root) edge (-.5,-1)
(root) edge (3,-1);

\end{tikzpicture}
\end{center}
    is a $\Pi$-witness tree for $(t,j)$.
\end{itemize}

Now, consider the case $\ell > 1$. Here, we must have that there is at least one $1 \le i \le n'$ such that $Z_i = Y_j$. Again, there must be traces~$t_1, \ldots, t_{n'}$ such that each $t_i$ is in $\Pi(Z_i)$ if $Z_i \neq Y_j$, $t_i \in S_{\ell-1}$ if $Z_i = Y_j$, $t = t_m$, and 
    \[
        (((\Pi[\overdotdot{\pi}_1 \mapsto t_1])[\overdotdot{\pi}_2 \mapsto t_2]) \cdots )[\overdotdot{\pi}_{n'} \mapsto t_{n'}]
    \models \psi_j^\step
    \]
For $i$ with $Z_i = \unidisvar$ or $Z_i = Y_{j'}$ with $j' < j$, we define $\tree_i$ as in the second case of the induction start. 
For $Z_i = Y_{j}$ we define $\tree_i$ to be a $\Pi$-witness tree for $(t_i,j)$, which exists by induction hypothesis applied to $(j, \ell-1)$.
Then,
\begin{center}
\begin{tikzpicture}[thick]
\node (root) at (0,0) {$(t,j)$};

\draw[rounded corners] (-3,-1) -- (-4,-2) -- (-2,-2) -- cycle;
\draw[rounded corners] (-.5,-1) -- (-1.5,-2) -- (.5,-2) -- cycle;
\draw[rounded corners] (3,-1) -- (4,-2) -- (2,-2) -- cycle;
\node at (1.25,-1.6) {$\cdots$};
\node at (-3,-1.6) {$\tree_1$};
\node at (-.5,-1.6) {$\tree_2$};
\node at (3,-1.6) {$\tree_{n'}$};

\path[-stealth]
(root) edge (-3,-1)
(root) edge (-.5,-1)
(root) edge (3,-1);

\end{tikzpicture}
\end{center}
    is a $\Pi$-witness tree for $(t,j)$.

The direction from right to left is proven by an induction over $(j,h)$ (with lexicographic ordering) showing that for each height~$h$ $\Pi$-witness tree for some $(t,j)$ we have $t \in S_h \subseteq \lfp(\Pi, j)$, where $\lfp(\Pi,j)$ is again $\bigcup_{\ell \ge 0}S_\ell$.
Here, we measure the height of a tree whose root is labeled by $(t,j)$ as the length (in the number of vertices) of a longest path starting at the root that visits only vertices that have a $j$ in the second component of their label.

So, let $\tree$ be a $\Pi$-witness tree for some $(t,j)$.
If $\tree$ has height~$1$, then there are two cases:
\begin{itemize}
    \item Either, $t = \Pi(\overdot{\pi}_i)$ for some $1 \le i \le n$. Then, $t \in S_1 \subseteq \lfp(\pi,j)$.
    \item Otherwise, we must have $Z_i \neq Y_j$ for all $1 \le i \le n'$ and there must be traces~$t_1, \ldots, t_{n'}$ such that each $t_i$ is in $\Pi(Z_i)$, $t = t_m$, and 
    \[    (((\Pi[\overdotdot{\pi}_1 \mapsto t_1])[\overdotdot{\pi}_2 \mapsto t_2]) \cdots )[\overdotdot{\pi}_{n'} \mapsto t_{n'}]
    \models \psi_j^\step.\]
    
    Let $\tree_i$ be the subtree of $\tree$ rooted at the $i$-th successor of the root, which is labeled by $(t_i,j')$ if $Z_i = Y_{j'}$ for some $j' < j$ or by $(t_i, \bot)$, if $Z_i = \unidisvar$.
    In the latter case, we have $t_i \in \Pi(\unidisvar)$ by the requirement that a $\Pi$-witness tree is only labeled by traces from $\Pi(\unidisvar)$.
    In the former case, $\tree_i$ is by definition a $\Pi$-witness tree for $(t_i,j')$.
    Hence, the induction hypothesis yields $t_i \in \lfp(\Pi, j') = \Pi(Y_{j'})$.

    Thus, we can apply the definition of $S_1 = f_{\Pi,j}(\emptyset)$ and obtain $t \in S_1 = S_h$.
\end{itemize}

Now, assume $\tree$ has height~$h > 1$. Then, there must be at least one $1 \le i \le n'$ such that 
$Z_i = Y_j$. Again, there must be traces~$t_1, \ldots, t_{n'}$ such that each $t_i$ is in $\Pi(Z_i)$ if $Z_i \neq Y_j$, $t_i \in S_{\ell-1}$ if $Z_i = Y_j$, $t = t_m$, and 
    \[    (((\Pi[\overdotdot{\pi}_1 \mapsto t_1])[\overdotdot{\pi}_2 \mapsto t_2]) \cdots )[\overdotdot{\pi}_{n'} \mapsto t_{n'}]
    \models \psi_j^\step.\]
For $i$ with $Z_i = Y_{j'}$ for some $j < j'$ and for $i$ with $Z_i = \unidisvar$ we can argue as in the second case of the induction start that we have $t_i \in \Pi(Z_i)$.
So, consider the remaining $i$, which satisfy $Z_i = Y_j$. 
By definition, the subtree~$\tree_i$ rooted at the $i$-th successor of the root is a $\Pi$-witness tree for $(t_i,j)$ of height at most $h-1$. 
Hence, the induction hypothesis applied to $(j,h-1)$ yields $t_i \in S_{h-1}$. 
    Thus, we can apply the definition of $S_h = f_{\Pi,j}(S_{h-1})$ and obtain $t \in S_h$.

\ref{lemma:witnesstreeprops_restr}.
This can be shown by induction over the height of witness trees exploiting the fact that $\tree$ being a $\Pi^*$-witness tree only depends on $\Pi^*(\unidisvar)$ containing $\traces(\tree)$ and the values~$\Pi^*(\pi)$ for trace variables appearing in the $\gamma_{j'}$ with $j' \le j$.
\end{proof}

Now we have collected all the prerequisites to prove Lemma~\ref{lemma_lfpcwmodelsize} which states that every satisfiable $\univar$-free $\lfpsohyltlfp$-sentence has a countable model.

\begin{proof}[Proof of Lemma~\ref{lemma_lfpcwmodelsize}]
Let \[
\phi = 
\gamma_1.\ \quant_1(Y_1, \smallest, \phi_1^\con).\ 
\gamma_2.\ \quant_2(Y_2, \smallest, \phi_2^\con).\ 
\ldots 
\gamma_k.\ \quant_k(Y_k, \smallest, \phi_k^\con).\
\gamma_{k+1}.\
\psi
\]
be a satisfiable $\univar$-free $\lfpsohyltlfp$ sentence.
Again, we assume without loss of generality that each trace variable is quantified at most once in $\varphi$, which implies that for each trace variable~$\pi$ quantified in some $\gamma_j$, there is a unique second-order variable~$X_\pi$ such that $\pi$ ranges over $X_\pi$.
Furthermore, we assume that each $Y_j$ is different from $\unidisvar$ and $\univar$, which can again be achieved by renaming variables, if necessary.

As $\varphi$ is satisfiable, there exists a set~$T$ of traces such that $T \models\phi$. 
In the following, we assume that $T$ is uncountable (and therefore nonempty), as the desired result otherwise holds trivially.
We prove that there is a countable~$R \subseteq T$ with $R\models\varphi$.
Intuitively, we show that the smallest set~$R$ that is closed under the application of the Skolem functions and that contains the traces labeling witness trees (for the fixed points computed w.r.t.\ $T$) for the traces in $R$ has the desired properties.

So, let $E$ be the set of existentially quantified trace variables in the $\gamma_j$ and let, for each $\pi \in E$, $k_\pi$ denote the number of trace variables that are universally quantified before $\pi$ (in any block~$\gamma_j$).
As $T$ satisfies $\phi$, there is a Skolem function~$f_{\pi} \colon T^{k_\pi} \rightarrow T$ for each  $\pi \in E$, i.e., functions witnessing $T\models \varphi$ in the following sense: $\Pi\models\psi$ for every variable assignment~$\Pi$ where 
\begin{itemize}
    \item $\unidisvar$ is mapped to $T$,
    \item each second-order variable~$Y_j$ is mapped to $\itereval(\Pi)(Y_j)$,
    \item for each universally quantified trace variable~$\pi$ that appears in a $\gamma_j$ we have $\Pi(\pi) \in \Pi(X_\pi)$, and
    \item for each existentially quantified trace variable~$\pi$ that appears in a $\gamma_j$, we have $\Pi(\pi) = f_{\pi}(\Pi(\pi_1), \ldots, \Pi(\pi_{k_\pi}))$, which has to be in $\Pi(X_\pi)$. Here, $\pi_1, \ldots, \pi_{k_\pi}$ are the universally quantified trace variables  that appear in some $\gamma_j$ before $\pi$.
\end{itemize}
We fix such Skolem functions for the rest of the proof. 

Given a set~$T' \subseteq T$, we define 
\[f_{\pi}(T') = \set{f_\pi(t_1, \ldots, t_{k_{\pi}}) \mid t_1, \ldots, t_{k_\pi} \in T' }.\]
If $T'$ is finite, then $f_{\pi}(T')$ is also finite. 

For each $(\phi,j)$-sufficient assignment~$\Pi$ with $\Pi(\unidisvar) = T$ and each $t \in \lfp(\Pi,j)$, we fix a $\Pi$-witness tree~$\tree(\Pi, t, j)$ for $(t,j)$. 

Now, we fix some arbitrary trace~$t_0 \in T$ and define $R_0 = \set{t_0}$,
\[
R_{\ell+1} = R_\ell \cup 
\bigcup_{\pi \in E} f_\pi(R_\ell) 
\]
for even $\ell \ge 0$ as well as
\[
R_{\ell+1} = R_\ell \cup \bigcup_{\Pi}\bigcup_{j=1}^k \bigcup_{t \in \Pi(Y_j) \cap R_\ell} \traces(\tree(\Pi, t, j)),
\]
for odd $\ell \ge 0$, where $\Pi$ ranges over all assignments that satisfy the following requirements:
\begin{enumerate}
    \item\label{admiss:domain} The domain of $\Pi$ contains exactly $\unidisvar$, the set variables~$ Y_1, \ldots, Y_k$ appearing in $\varphi$, and the trace variables quantified in the $\gamma_j$.
    \item\label{admiss:disc} $\Pi(\unidisvar) = T$.
    \item\label{admiss:set} $\Pi(Y_j) = \itereval(\Pi)(Y_j)$ for all $Y_j$.
    \item\label{admiss:univ} $\Pi(\pi) \in \Pi(X_\pi) \cap R_\ell$ for all $\pi$ that are universally quantified in some $\gamma_j$. 
    \item\label{admiss:ex} $\Pi(\pi) = f_\pi(\Pi(\pi_1), \ldots, \Pi(\pi_{k_\pi})) \in \Pi(X_\pi)$, where $\pi_1, \ldots, \pi_{k_\pi}$ are again the universally quantified trace variables that appear in some $\gamma_j$ before $\pi$.    
\end{enumerate}
We call a $\Pi$ with this property admissible at $\ell$.

We first show that each $R_\ell$ is a finite subset of $T$. This is obviously true for $\ell = 0$ and if it is true for even $\ell$, then it is also true for $\ell+1$, as applying the finitely many Skolem functions to tuples of traces from $R_\ell$ (of which there are, by induction hypothesis, only finitely many) adds only finitely many traces from $T$. 
So, let us consider an odd $\ell$ and consider the $\Pi$ that are admissible at $\ell$: The values~$\Pi(\unidisvar)$, $\Pi(Y_j)$ for all $j$, and $\Pi(\pi)$ for $\pi$ existentially quantified in some $\gamma_j$ are all determined, so the only degree of freedom are the values~$\Pi(\pi)$ for $\pi$ universally quantified in some $\gamma_j$.
However, as these must come from $R_\ell$, which is finite by induction hypothesis, there are only finitely many $\Pi$ that are admissible at $\ell$. Now, we can conclude that $R_{\ell+1}$ is obtained by adding the traces of finitely many witness trees to $R_\ell$, each of which contains finitely many traces from $T$. Thus, $R_\ell$ is a finite subset of $T$ as well.

Now, define $R = \bigcup_{\ell \ge 0} R_\ell$, which is a countable subset of $T$, as it is a countable union of finite subsets of $T$.
Recall that the Skolem functions~$f_\pi$ above witness $T \models \phi$.
We show that they also witness $R \models \phi$, which completes the proof.

So, let $\Pi'$ be a variable assignment such that
\begin{itemize}
    \item $\unidisvar$ is mapped to $R$,
    \item each second-order variable~$Y_j$ is mapped to $\itereval(\Pi')(Y_j)$,
    \item for each universally quantified trace variable~$\pi$ that appears in a $\gamma_j$ we have $\Pi'(\pi) \in \Pi'(X_\pi)$, and
    \item for each existentially quantified trace variable~$\pi$ that appears in a $\gamma_j$, we have $\Pi'(\pi) = f_{\pi}(\Pi'(\pi_1), \ldots, \Pi'(\pi_{k_\pi}))$, where $\pi_1, \ldots, \pi_{k_\pi}$ are the universally quantified trace variables that appear in some $\gamma_j$ before $\pi$. Note that we do not claim or require (yet) that $\Pi'(\pi)$ is in $\Pi'(X_\pi)$.
\end{itemize}
We show that $\Pi' \models \psi$ and that $\Pi'(\pi)$ is indeed in $\Pi'(X_\pi)$. 
If this is true for every such $\Pi'$, then we can conclude that the Skolem functions witness $R \models \varphi$.

To this end, let the variable assignment~$\Pi$ be obtained by defining
\begin{itemize}
    \item $\Pi(\unidisvar) = T$, 
    \item $\Pi(Y_j) = \itereval(\Pi)(Y_j)$, 
    \item $\Pi(\pi) = \Pi'(\pi)$ for all trace variables~$\pi$ that appear in some $\gamma_j$, and
    \item $\Pi$ is undefined for all other variables.
\end{itemize}

Due to Remark~\ref{remark:lfpmonoiter}, we have $\Pi(Y_j) \supseteq \Pi'(Y_j)$.
Hence, in $\Pi$, each universal variable~$\pi$ is mapped to a trace in $\Pi(X_\pi)$: If $X_\pi = \unidisvar$, then we have \[\Pi(\pi) = \Pi'(\pi) \in \Pi'(X_\pi) = \Pi'(\unidisvar) = R \subseteq T = \Pi(\unidisvar) = \Pi(X_\pi),\]
and if $X_\pi = Y_j$, then we have
\[
\Pi(\pi) = \Pi'(\pi) \in \Pi'(X_\pi) = \Pi'(Y_j) \subseteq \Pi(Y_j) = \Pi(X_\pi).
\]
Thus, as the traces for existentially quantified variables are obtained via the Skolem functions witnessing $T \models \varphi$, we conclude $\Pi \models \psi$.
Now, as $\psi$ is quantifier-free, $\Pi\models\psi$ only depends on the restriction of $\Pi$ to trace variables. 
As $\Pi$ and $\Pi'$ coincide on the these variables, we have $\Pi'\models\psi$.

This almost concludes the proof, we just have to argue that we indeed have that $\Pi'(\pi)= f_{\pi}(\Pi'(\pi_1), \ldots, \Pi'(\pi_{k_\pi}))$ is in $\Pi'(X_\pi) $.
Note that the traces~$\Pi(\pi)$ for universally quantified $\pi$'s are all in $R$ and thus there is an even $\ell^*$ such that all the traces~$\Pi'(\pi)$ for all such $\pi$ are in $R_{\ell^*}$. 
This is in particular true for the traces~$\Pi'(\pi_1), \ldots, \Pi'(\pi_{k_\pi})$. 
We now consider two cases.

If $X_\pi = \unidisvar$, then we have \[\Pi'(\pi) \in R_{\ell^*+1} \subseteq R = \Pi'(\unidisvar) = \Pi'(X_\pi)\] by definition, as $R_{\ell^*+1}$ contains $f_\pi(R_{\ell^*})$, which contains $f_{\pi}(\Pi'(\pi_1), \ldots, \Pi'(\pi_{k_\pi}))$, as all these are universally quantified traces.

For the case $X_\pi \neq \unidisvar$ we need some more preparation.
As just argued, we have $\Pi'(\pi) = \Pi(\pi) \in R_{\ell^*+1}$.
We show that $\Pi$ as defined above is admissible at $\ell^*+1$ by verifying that all five requirements are satisfied:
\begin{enumerate}
    \item The domain requirement is satisfied by definition.
    \item We have $\Pi(\unidisvar) = T$ by definition.
    \item By definition, we have $\Pi(Y_{j'}) = \itereval(\Pi)(Y_{j'})$ for all $Y_{j'}$.
    \item We have $\Pi(\pi) \in \Pi(X_\pi)$ for all universally quantified $\pi$ as argued above and we have $\Pi(\pi) \in R_{\ell^*}$ by the choice of $\ell^*$.
    \item For existentially quantified $\pi$, we have $\Pi(\pi) = f_\pi(\Pi'(\pi_1), \ldots, \Pi'(\pi_{k_\pi}))$ by definition, which is furthermore in $\Pi(X_\pi)$, as $f_\pi$ is a Skolem function.
\end{enumerate}
Thus, all requirements on $\Pi$ for being admissible at $\ell^*+1$ are satisfied.

Hence, we can conclude that the traces of a $\Pi$-witness tree for $(\Pi(\pi),j_\pi)$ are contained in $R_{\ell^*+2} \subseteq R$ for each existentially quantified~$\pi$, where $j_\pi$ is defined such that $X_\pi = Y_{j_\pi}$.
So, there is a $\Pi$-witness tree~$B$ for $(\Pi(\pi),j_\pi)$ for each such $\pi$, as $R$ is a subset of $T = \Pi(\unidisvar)$.
Thus, Lemma~\ref{lemma:witnesstreeprops}.\ref{lemma:witnesstreeprops_restr} yields that $B$ is also a $\Pi'$-witness tree for $(\Pi(\pi),j_\pi) = (\Pi'(\pi),j_\pi)$ for each such $\pi$.
Hence, Lemma~\ref{lemma:witnesstreeprops}.\ref{lemma:witnesstreeprops_correctness} yields that $\Pi'(\pi) \in \lfp(\Pi',j_\pi) = \Pi'(Y_{j_\pi})$ for all such $\pi$. 
\end{proof}

Before we continue with our complexity results, let us briefly mention that the formula from Remark~\ref{remark_unsatisfactory} on Page~\pageref{remark_unsatisfactory} shows that the restriction to $\univar$-free sentences is essential to obtain the upper bound above.

With this upper bound, we can express the existence of (without loss of generality) countable models of a given $\univar$-free $\lfpsohyltlfp$ sentence~$\phi$ via arithmetic formulas that only use existential quantification of type~$1$ objects (sets of natural numbers), which are rich enough to express countable sets~$T$ of traces and objects (e.g., Skolem functions) witnessing that $T$ satisfies $\phi$.
This places satisfiability in $\Sigma_1^1$ while the matching lower bound already holds for $\hyltl$~\cite{hyperltlsatconf}.

\begin{thm}\label{thm_lfpsatcomplexity}
$\lfpsohyltlfp$ satisfiability for $\univar$-free sentences is $\Sigma_1^1$-complete.
\end{thm}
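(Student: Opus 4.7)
The plan separates a simple lower bound from a more involved upper bound. For the lower bound, I would observe that $\hyltl$ satisfiability embeds trivially into $\univar$-free $\lfpsohyltlfp$ satisfiability: every $\hyltl$ sentence is syntactically an $\lfpsohyltlfp$ sentence with no second-order fixed-point blocks (i.e., $k=0$) and all trace quantifiers ranging over $\unidisvar$. Since $\hyltl$ satisfiability is $\Sigma_1^1$-hard~\cite{hyperltlsatconf}, the lower bound transfers.

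For the upper bound, the key ingredient is Lemma~\ref{lemma_lfpcwmodelsize}, which lets me restrict attention to countable models. A countable set of traces can be encoded as a single subset of $\nats$ by pairing a trace index, a position, and a proposition code, extending the encoding from the proof of Theorem~\ref{thm_satcomplexity}. I would then build a $\Sigma_1^1$ arithmetic formula that existentially quantifies (i)~such an encoding of a candidate countable model $T$; (ii)~the graphs of Skolem functions $f_\pi$ for the existentially quantified trace variables of $\phi$; and (iii)~for each existentially quantified $\pi$ whose range $X_\pi$ is a fixed-point variable $Y_j$, an additional \emph{Skolem tree function} that produces, for each tuple of preceding universally quantified traces, a finite $\Pi$-witness tree certifying $f_\pi(\cdots) \in \lfp(\Pi,j)$. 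By Lemma~\ref{lemma:witnesstreeprops}.\ref{lemma:witnesstreeprops_correctness}, such witness trees exist precisely when the target trace lies in the corresponding least fixed point.

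The residual verification that these guessed objects witness $T \models \phi$ is then expressible using only arithmetic (type-$0$) quantifiers. The quantifier-free body $\psi$ translates directly into a first-order formula over the trace encodings as in the proof of Theorem~\ref{thm_satcomplexity}. Existential trace quantifiers ranging over some $Y_j$ are resolved by the Skolem function and its accompanying tree function; universal trace quantifiers $\forall \pi \in Y_j$ are replaced by universal first-order quantification over pairs (trace index $t$, encoded candidate tree $B$), guarded by the arithmetic predicate ``$B$ is a valid $\Pi$-witness tree for $(t,j)$''. Since witness trees are finite, they are encodable as natural numbers; checking validity amounts to recursively verifying, at each internal node, that the quantifier-free step formula $\psi_j^\step$ holds for the assignment to trace variables given by the children, which is again first-order after the standard translation of temporal operators used in Theorem~\ref{thm_satcomplexity}.

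The main obstacle is the careful bookkeeping of dependencies: each Skolem function and each Skolem tree function takes as input the tuple of \emph{all} universally quantified trace variables preceding it in the outer quantifier prefix, including ones ranging over earlier fixed-point variables $Y_{j'}$; and the validity predicate for a witness tree must recursively descend into subtrees labeled by indices $j' < j$, each of which is itself interpreted in terms of the current Skolem assignments. Once a uniform natural-number encoding of finite labeled trees is fixed, however, all of these checks become first-order arithmetic statements over the existentially guessed type-$1$ objects, placing the overall translation into $\Sigma_1^1$.
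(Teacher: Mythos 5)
Your proposal is correct, and the lower bound is exactly the paper's (embed $\hyltl$ via Remark~\ref{rem_hyltlisfragment} and invoke $\Sigma_1^1$-hardness of $\hyltl$ satisfiability). The upper bound, however, takes a genuinely different route through the fixed points. The paper also starts from Lemma~\ref{lemma_lfpcwmodelsize} and guesses a countable model, Skolem functions, and an expansion function, but it then \emph{materializes} each least fixed point: for every $Y_j$ it existentially quantifies a type-$1$ function~$T_j$ recording, per outer variable assignment and trace name, both membership in $\lfp(\Pi,j)$ and the stage of the iteration at which the trace enters, and it verifies first-order (using these stage annotations) that $T_j$ really encodes the least fixed point. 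You instead never represent the fixed points at all: membership is certified pointwise by finite witness trees à la Lemma~\ref{lemma:witnesstreeprops}.\ref{lemma:witnesstreeprops_correctness}, existential claims $\exists\pi\in Y_j$ are Skolemized by tree-producing functions, and $\forall\pi\in Y_j$ becomes universal type-$0$ quantification over (name, candidate tree) pairs guarded by a first-order validity predicate — which is sound since a trace is in the fixed point iff \emph{some} valid tree exists for it. This is essentially the technique the paper reserves for the harder $\Sigma_1^2$ upper bound of Theorem~\ref{thm_satcomplexity_lfp_ss}, where the model may be uncountable and the fixed points cannot be type-$1$ objects; so your argument is the more uniform one (it specializes the general machinery), while the paper's stage-based encoding is the more direct exploitation of countability. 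One point to state carefully: a witness tree is encodable as a single natural number only because, in the countable $\univar$-free setting, its vertex labels can be taken to be trace \emph{names} into the guessed enumeration of $T$ rather than traces themselves — with genuine trace labels it is a type-$1$ object, and a universal type-$1$ quantifier over candidate trees would leave $\Sigma_1^1$. Likewise, the evaluation of $\psi_j^\step$ at internal nodes must use the direct first-order unfolding of the temporal operators from the proof of Theorem~\ref{thm_satcomplexity} (as you indicate), not an existentially quantified expansion function, since the validity check sits under universal quantifiers. With those two details pinned down, the argument goes through.
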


\begin{proof}
The $\Sigma_1^1$ lower bound already holds for $\hyltl$ satisfiability~\cite{hyperltlsatconf}, which is a fragment of $\univar$-free $\lfpsohyltlfp$ (see Remark~\ref{rem_hyltlisfragment}). 
Hence, we focus in the following on the upper bound, which is a generalization of the corresponding upper bound for $\hyltl$~\cite{hyperltlsatconf}.

Throughout this proof, we fix an $\univar$-free $\lfpsohyltlfp$ sentence
\[
\varphi =
\gamma_1.\ \quant_1 (Y_1, \smallest, \phi_1^\con).\ 
\gamma_2.\ \quant_2(Y_2, \smallest, \phi_2^\con).\ 
\ldots 
\gamma_k.\ \quant_k(Y_k, \smallest, \phi_k^\con).\
\gamma_{k+1}.\
\psi
,\] where $\psi$ is quantifier-free, and let $\Phi$ denote the set of quantifier-free subformulas of~$\varphi$ (including those in the guards). As before, we assume without loss of generality that each trace variable is quantified at most once in $\varphi$, i.e., for each trace variable~$\pi$ quantified in some $\gamma_j$ or in some $\phi_j^\con$, there is a unique second-order variable~$X_\pi$ such that $\pi$ ranges over $X_\pi$.

Due to Lemma~\ref{lemma_lfpcwmodelsize}, $\varphi$ is satisfiable if and only if it has a countable model~$T$. Thus, not only is $T$ countable, but the second-order quantifiers range over subsets of $T$, i.e., over countable sets. 
Finally, recall that the least fixed point assigned to $Y_j$ depends on the variable assignment to trace variables in the blocks~$\gamma_1,\ldots, \gamma_j$, but not on the second-order variables~$Y_{j'}$ with $j' < j$, as their values are also uniquely determined by the variables in the blocks~$\gamma_1,\ldots, \gamma_{j'}$.

To prove that the $\lfpsohyltlfp$ satisfiability problem for $\univar$-free sentences is in $\Sigma_1^1$, we express, for a given $\univar$-free $\lfpsohyltlfp$ sentence~$\phi$ (encoded as a natural number), the existence of a countable set~$T$ of traces and a witness that $T$ is indeed a model of $\varphi$.
As we work with second-order arithmetic we can quantify only over natural numbers (type $0$ objects) and sets of natural numbers (type $1$ objects). To simplify our notation, we note that there is a bijection between finite sequences over $\nats$ and $\nats$ itself and that one can encode functions mapping natural numbers to natural numbers as sets of natural numbers via their graphs. As both can be implemented in arithmetic, we will freely use vectors of natural numbers, functions mapping natural numbers to natural numbers, and combinations of both. 

Furthermore, as we only have to work with countable sets of traces, we can use natural numbers to \myquot{name} the traces. Hence, the restriction of a variable assignment to trace variables can be encoded by a vector of trace names, listing (in some fixed order), the names of the traces assigned to the trace variables. 
Finally, the countable sets assigned to the $Y_j$ depend on the variable assignment to the trace variables quantified before $Y_j$, but not on the sets assigned to the $Y_{j'}$ with $j' < j$. Thus we do not have to consider second-order variables in variable assignments: in the following, when we speak of a variable assignment we only consider those that are undefined for all second-order variables. Let $k'$ denote the number of trace variables in $\phi$. Then, variable assignments are encoded by vectors in $\nats^{k'}$.

First, we show how to capture the semantics of quantifier-free formulas in arithmetic.
Recall that $\psi$ is the maximal quantifier-free subformula of $\psi$ and that $\Phi$ denotes the set of quantifier-free subformulas of~$\varphi$ (including those in the guards).
Let $\Pi$ be a variable assignment whose domain contains all the trace variables of $\varphi$, and therefore in particular all free variables of $\psi$.
Then, $\Pi \models \psi$ if and only if there is a function~$\expansion \colon \Phi \times \nats \rightarrow \set{0,1}$ with $\expansion(\psi, 0)=1$ satisfying the following consistency conditions:
\begin{itemize}
	\item $\expansion(a_\pi,j)=1$ if and only if $a \in \Pi(\pi)(j)$.
	\item $\expansion(\neg \psi_1,j)=1$ if and only if  $\expansion(\psi_1,j)=0$.
	\item $\expansion(\psi_1 \vee \psi_2,j)=1$ if and only if $\expansion(\psi_1,j)=1$ or $\expansion(\psi_2,j)=1$.
	\item $\expansion(\X\psi_1,j)=1$ if and only if $\expansion(\psi_1,j+1)=1$.
	\item $\expansion(\psi_1 \U \psi_2,j)=1$ if and only if there is a $j' \ge j$ such that $\expansion(\psi_2,j')=1$ and $\expansion(\psi_1,j'')=1$ for all $j''$ in the range~$j \le j'' < j'$.
\end{itemize}

In fact, there is exactly one function~$\expansion$ satisfying these consistency conditions, namely the function~$\expansion_{\phi,\Pi}$ defined as
\[
\expansion_{\phi,\Pi}(\psi',j) = \begin{cases}
    1 &\text{ if }\suffix{\Pi}{j}\models \psi',\\
    0 &\text{ otherwise.}\\
\end{cases}
\]

Now, given $\varphi$ we express the existence of the following type $1$ objects:
\begin{itemize}

    \item A countable set of traces over the propositions of $\varphi$ encoded as a function~$T$ from~$\nats \times \nats$ to $\nats$, mapping trace names (i.e., natural numbers) and positions to (encodings of) subsets of the set of propositions appearing in $\varphi$.

    \item For each $j \in \set{1,2,\ldots, k}$ a function~$T_j$ from~$\nats^{k'} \times \nats$ to $\set{0,1} \times \nats$ mapping a variable assignment~$\overline{a}$ and a trace name~$n$ to a pair~$(b,\ell)$ where the bit~$b$ encodes whether the trace named~$n$ is in the set assigned to $Y_j$ (w.r.t.\ the variable assignment encoded by $\overline{a}$) and where the natural number~$\ell$ is intended to encode in which stage of the fixed point computation the trace named $n$ was added to the fixed point (computed with respect to $\overline{a}$). That this is correct will be captured by the formula we are constructing.
 
    \item A function~$\skolem$ from $\nats \times \nats^*$
      to $\nats$ to be interpreted as Skolem functions for the existentially quantified trace variables of $\varphi$, i.e., we map a variable name and a variable assignment of the variables preceding it to a trace name. 

	\item A function~$E$ from $\nats^{k'} \times \nats \times \nats$ to $\nats$, where, for a fixed~$\overline{a} \in\nats^{k'}$ encoding a variable assignment~$\Pi$, the function~$x,j\mapsto E(\overline{a}, x, j)$ is intended to encode the function~$\expansion_{\phi,\Pi}$, i.e., $x$ encodes a subformula in $\Phi$ and $j$ is a position. That this is correct will again be captured by the formula we are constructing.
\end{itemize}

Then, we express the following properties which characterize $T$ being a model of $\varphi$:
\begin{itemize}

    \item The function~$\skolem$ is indeed a Skolem function for $\varphi$, i.e., for all variable assignments~$\overline{a} \in \nats^\ell$, if the traces assigned to universally quantified variables~$\pi$ are in the set~$X_\pi$ that $\pi$ ranges over, then the traces assigned to the existentially quantified variables~$\pi'$ are in the set~$X_{\pi'}$ that $\pi'$ ranges over. Note that $X_{\pi}$ and $X_{\pi'}$ can either be $T$ or one of the $Y_j$, for which we can check membership via the functions~$T_j$. Also note that this formula refers to traces by their names (which are natural numbers) and quantifies over variable assignments (which are encoded by natural numbers), i.e., it is a first-order formula.
    
    \item For every variable assignment~$\overline{a}$ such that the traces assigned to universally quantified variables~$\pi$ are in the set~$X_\pi$ that $\pi$ ranges over and that is consistent with the Skolem functions encoded by $\skolem$: the function~$x,j\mapsto E(\overline{a}, x, j)$ satisfies the consistency conditions characterising the expansion, and we have $E(\overline{a},x_0, 0) = 1$, where $x_0$ is the encoding of $\psi$. 
    Again, this formula is first-order.

    \item For every variable assignment~$\overline{a}$ as above, the set assigned to $Y_j$ by $T_j$ w.r.t.\ $\overline{a}$ is indeed the least fixed point of $f_{\overline{a},j}$. Here, we use the information about the stages encoded by $T_j$ and again have to use the expansion~$E$ to check that the step formula is satisfied by the selected traces. As before, this can be done in first-order arithmetic.
\end{itemize}
We leave the tedious, but straightforward, details to the reader.
\end{proof}

\subsection{\texorpdfstring{Satisfiability for full \boldmath$\lfpsohyltlfp$}{Satisfiability for full Second-order HyperLTL with Least Fixed Points}}

We just proved that $\lfpsohyltlfp$ satisfiability for $\univar$-free sentences is in $\Sigma_1^1$, relying on an upper bound on the size of models: 
This bound is obtained by showing that every model of a satisfiable sentence contains a countable subset that is also a model of the sentence.
This is correct, as in $\univar$-free sentences, only traces from the model are quantified over.
On the other hand, the $\lfpsohyltlfp$ sentence~$\forall \pi \in \univar.\ \exists \pi'\in\unidisvar.\ \equals{\pi}{\pi'}{\ap}$ has only uncountable models, if $\size{\ap}>1$.

Thus, in $\lfpsohyltlfp$ formulas with $\univar$, one can refer to all traces and thus mimic quantification over sets of natural numbers. 
Furthermore, the satisfiability problem asks for the existence of a model. This \emph{implicit} existential quantifier can be used to mimic existential quantification over sets of sets of natural numbers.
Together with the fact that one can implement addition and multiplication in $\hyltl$, we show that $\lfpsohyltlfp$ satisfiability for sentences with $\univar$ is $\Sigma_1^2$-hard.

To prove a matching upper bound, we capture the existence of a model and Skolem functions witnessing that it is indeed a model in $\Sigma_1^2$.
Here, the challenge is to capture the least fixed points when mimicking the second-order quantification of $\lfpsohyltlfp$. Naively, this requires an existential quantifier (\myquot{there exists a set that satisfies the guard}) followed by a universal one (\myquot{all strict subsets do not satisfy the guard}).
However, as traces are encoded as sets, this would require universal quantification of type~$3$ objects.
Thus, this approach is not sufficient to prove a $\Sigma_1^2$ upper bound.
Instead, we do not explicitly quantify the fixed points, but instead use witness trees for the membership of traces in the fixed points, which are type~$1$ objects.
This is sufficient, as the sets of traces quantified in $\lfpsohyltlfp$ are only used as ranges for trace quantification.

\begin{thm}
\label{thm_satcomplexity_lfp_ss}
$\lfpsohyltlfp$ satisfiability is $\Sigma^2_1$-complete.
\end{thm}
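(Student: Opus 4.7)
For the lower bound, the plan is to reduce truth of $\Sigma_1^2$-sentences to $\lfpsohyltlfp$-satisfiability by adapting the construction in the proof of Theorem~\ref{thm_satcomplexity}. Given $\Phi = \exists \mathcal{X}_1 \cdots \exists \mathcal{X}_k\, \psi$, where $\psi$ has only type~$0$ and type~$1$ quantifiers, I would build an $\lfpsohyltlfp$ sentence~$\Phi'$ whose models must contain: the arithmetic traces of $\phi_\plustimes$ from Proposition~\ref{prop_plustimesinhyperltl}; the full $T_\allsets$-block from the proof of Theorem~\ref{thm_modelsizelowerbound} (encoding every subset of $\nats$ as some trace of the model, which crucially needs $\univar$); and, for each $\mathcal{X}_i$, one distinguished type~$\setprop$-trace marking exactly the traces that represent members of $\mathcal{X}_i$. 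The implicit "does a model exist?" then plays the role of the outer block~$\exists \mathcal{X}_1 \cdots \exists \mathcal{X}_k$. Inside $\Phi'$ I would transcribe $\psi$ via a variant of the $\hyperize$-translation: type~$0$ and type~$1$ quantifiers become trace quantifiers over $\univar$ restricted to the appropriate shape, arithmetic atoms are handled via the $\phi_\plustimes$-traces, and $Y \in \mathcal{X}_i$ becomes a comparison between the trace representing~$Y$ and the marker trace for~$\mathcal{X}_i$. This direction needs neither second-order nor fixed-point quantification, so the $\Sigma_1^2$ lower bound already holds for $\hyltl$ extended with~$\univar$.

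For the upper bound, I would encode the satisfiability question of a given sentence
\[\varphi = \gamma_1.\ \quant_1(Y_1,\smallest,\phi_1^\con).\ \cdots\ \gamma_k.\ \quant_k(Y_k,\smallest,\phi_k^\con).\ \gamma_{k+1}.\ \psi\]
as a $\Sigma_1^2$ formula. The leading existential type~$2$ quantifiers would guess (i) the model $T$, a set of traces and thus type~$2$, and (ii) for every existentially quantified trace variable appearing in some $\gamma_j$, a Skolem function, which as a set of pairs of traces is again encodable as a type~$2$ object. Because each $\lfp(\Pi,j)$ is uniquely determined by $T$ and the preceding trace variables (cf.\ Lemma~\ref{lemma:lfpmono} and Remark~\ref{remark:lfpmonoiter}), no Skolem choice needs to be made for the second-order variables themselves.

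The crux is to express trace quantification over $Y_j$ without any explicit type~$2$ handle on $\lfp(\Pi,j)$: doing so naively (e.g.\ "there is a set satisfying the guard but no proper subset does") would force a universal type~$2$ quantifier, breaking the $\Sigma_1^2$ bound. To avoid this, I would invoke the witness trees of Lemma~\ref{lemma:witnesstreeprops}: $t \in \lfp(\Pi,j)$ holds iff some $\Pi$-witness tree for $(t,j)$ exists, and such a tree is a finite object whose labels are traces, so it encodes as a type~$1$ object. Moreover, "$B$ is a $\Pi$-witness tree for $(t,j)$" is first-order over $B$, $t$, $T$, and the trace values of $\Pi$, as it only tests local tree conditions together with satisfaction of the quantifier-free step formulas~$\psi_j^\step$. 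Accordingly, I would rewrite each $\forall \pi \in Y_j.\ \chi$ as $\forall \pi.\ \forall B.\ (B \text{ is a witness tree for }(\pi,j)) \to \chi$, and each existential trace variable ranging over $Y_j$ is Skolemized to output \emph{both} a trace and an accompanying witness tree. Quantifier-free satisfaction of $\psi$ under the resulting assignment is then expressed via an expansion function, in the same spirit as in the proof of Theorem~\ref{thm_lfpsatcomplexity}.

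The main obstacle, and the reason for the witness-tree detour, is precisely the least-fixed-point aspect: any attempt to quantify the sets $\lfp(\Pi,j)$ directly would either require universal type~$2$ quantification or fail to capture minimality. Replacing these sets by finite, first-orderly checkable witness trees turns the type~$2$ membership obligations into type~$1$ ones, which is what keeps the whole encoding within~$\Sigma_1^2$.
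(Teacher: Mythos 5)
Your upper bound is essentially the paper's own argument: guess the model and the Skolem functions for existentially quantified trace variables as type~$2$ objects, observe that the least fixed points need not be Skolemized because they are uniquely determined, and replace every membership test $t \in \lfp(\Pi,j)$ by the existence of a finite, type~$1$-encodable witness tree so that no universal type~$2$ quantification is ever needed. This is exactly the route the paper takes (the only detail you elide is that the witness trees of Lemma~\ref{lemma:witnesstreeprops} must first be extended with unconstrained leaves for traces drawn from $\univar$, a routine adaptation the paper carries out explicitly). That part of your proposal is correct.

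The lower bound, however, has a genuine gap in the encoding of the existentially quantified type~$2$ objects. You propose to represent each $\mathcal{X}_i \subseteq \pow{\nats}$ by \emph{one} distinguished type~$\setprop$ trace and to translate $Y \in \mathcal{X}_i$ as a comparison between the trace representing $Y$ and that single marker trace. This cannot work for cardinality reasons: the translation of $Y \in \mathcal{X}_i$ is a fixed binary relation $R$ on traces, so the collection $\set{Y \mid R(t_Y, m_i)}$ is determined by the single trace $m_i$, of which there are only $\contcard$ many, whereas $\mathcal{X}_i$ ranges over all $2^{\contcard}$ many subsets of $\pow{\nats}$. A type~$2$ object simply cannot be coded by a type~$1$ object. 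The paper's fix is to introduce a fresh proposition $\marker_i$ for each $\mathcal{X}_i$ and to mark each \emph{set-encoding trace} $t_Y$ itself (globally with $\marker_i$ or globally without it), so that the model's choice of which traces carry $\marker_i$ --- itself a type~$2$ choice, absorbed into the implicit existential over models --- encodes $\mathcal{X}_i$, and $Y \in \mathcal{X}_i$ becomes simply $(\marker_i)_{\pi_Y}$; an additional consistency formula rules out that the same subset of $\nats$ appears both marked and unmarked. Relatedly, be careful with \myquot{the full $T_\allsets$-block from Theorem~\ref{thm_modelsizelowerbound}}: its formula $\phi_4$ uses an unrestricted $\forall X$ over contradiction-free sets, which is not expressible in $\lfpsohyltlfp$; the paper instead forces a model trace for every subset of $\nats$ directly via $\forall \pi \in \univar.\ \exists \pi' \in \unidisvar.\ \ldots$, which is also what makes $\univar$ indispensable here.
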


\begin{proof}
We begin with the lower bound. Let $S \in \Sigma_1^2$, i.e., there exists a formula of arithmetic of the form
\[
\phi(x) = \exists \mathcal{Y}_1\subseteq \pow{\nats}.\ \cdots \exists \mathcal{Y}_k\subseteq \pow{\nats}. \ \psi(x, \mathcal{Y}_1, \ldots,\mathcal{Y}_k)
\]
with a single free (first-order) variable~$x$ such that $S = \set{n \in\nats \mid \natsstruct\models\phi(n) }$, where $\psi$ is a formula with arbitrary quantification over type~$0$ and type~$1$ objects (but no third-order quantifiers) and free third-order variables~$\mathcal{Y}_i$, in addition to the free first-order variable~$x$.
We present a polynomial-time translation from natural numbers~$n$ to $\lfpsohyltlfp$ sentences~$\phi_n$ such that $n \in S$ (i.e., $\natsstruct \models \phi(n)$) if and only if $\phi_n$ is satisfiable.
This implies that $\lfpsohyltlfp$ satisfiability is $\Sigma_1^2$-hard.

Intuitively, we ensure that each model of $\phi_n$ contains enough traces to encode each subset of $\nats$ by a trace (this requires the use of $\univar$). 
Furthermore, we have additional propositions~$\marker_i$, one for each third-order variable~$\mathcal{Y}_i$ existentially quantified in $\phi$, to label the traces encoding sets.
Thus, the set of traces marked by $\marker_i$ encodes a set of sets of natural numbers, i.e., we have mimicked existential third-order quantification by quantifying over potential models. 
Hence, it remains to mimic quantification over natural numbers and sets of natural numbers as well as addition and multiplication, which can all be done in $\hyltl$: quantification over traces mimics quantification over sets and singleton sets (i.e., numbers) and addition and multiplication can be implemented in $\hyltl$ (see Proposition~\ref{prop_plustimesinhyperltl}). 

To this end, let $\ap = \set{\inprop} \cup \ap_\marker \cup \ap_\arith$ with $\ap_\marker = \set{\marker_1,\ldots, \marker_k}$ and $\ap_\arith = \set{ \argone, \argtwo, \res, \add, \mult}$ and consider the following two formulas (both with a free variable~$\pi'$):
\begin{itemize}
    \item $\psi_0 = \left(\bigwedge_{\proposition \in \ap_\arith} \G\neg\proposition_{\pi'}\right) \wedge \left(\bigwedge_{i=1}^k (\G (\marker_i)_{\pi'}) \vee (\G\neg(\marker_i)_{\pi'})\right)$ expressing that the interpretation of $\pi'$ may not contain any propositions from $\ap_\arith$ and, for each $i \in [k]$, is either marked by $\marker_i$ (if $\marker_i$ holds at every position) or is not marked by $\marker_i$ (if $\marker_i$ holds at no position). Note that there is no restriction on the proposition~$\inprop$, which therefore encodes a set of natural numbers on each trace. 
    Thus, we can use trace quantification to mimic quantification over sets of natural numbers and quantification of natural numbers (via singleton sets).
    In our encoding, a trace bound to some variable~$\pi$ encodes a singleton set if and only if the formula~$(\neg \inprop_{\pi})\U(\inprop_{\pi} \wedge \X\G\neg \inprop_{\pi}) $ is satisfied.
    
    As explained above, we use the markings to encode the membership of such sets in the $\mathcal{Y}_i$, thereby mimicking the existential quantification of the $\mathcal{Y}_i$.

    However, we need to ensure that this marking is done consistently, i.e., there is no trace~$t$ over $\set{\inprop}$ in the model that is both marked by $m_i$ and not marked by $m_i$. 
    This could happen, as these are just two different traces over $\ap$. 
    However, the formula
    \[
    \psi_\consistent = \forall \pi \in \unidisvar.\ \forall \pi' \in \unidisvar.\ \equals{\pi}{\pi'}{\set{x}} \rightarrow \equals{\pi}{\pi'}{\set{\inprop} \cup \ap_\marker}
    \]
    disallows this.
    
    \item $\psi_1 = \bigwedge_{\proposition \in \set{\inprop} \cup \ap_\marker}\G\neg\proposition_{\pi'}$ expresses that the interpretation of $\pi'$ may only contain propositions in $\ap_\arith$. We use such traces to implement addition and multiplication in $\hyltl$.
\end{itemize}
Let $\phi_\plustimes$ be the sentence from Proposition~\ref{prop_plustimesinhyperltl} implementing addition and multiplication: 
The $\set{\argone, \argtwo, \res, \add, \mult}$-projection of every model of $\phi_\plustimes$ is $T_\plustimes$ as defined for Proposition~\ref{prop_plustimesinhyperltl}. 
\label{pageref_plustimesaslfp} By closely inspecting the formula~$\phi_\plustimes$, we can see that it can be brought into the form required for guards in $\lfpsohyltlfp$. Call the resulting formula~$\phi_\plustimes'$. It uses two free variables~$\pi_\add$ and $\pi_\mult$ as \myquot{seeds} for the fixed point computation and comes with another $\ltl$ formula~$\psi_s$ with free variables~$\pi_\add$ and $\pi_\mult$ that ensures that the seeds have the right format.

Now, given $n\in\nats$ we define the $\lfpsohyltlfp$ sentence
\begin{align*}
\phi_n = {}&{} \psi_\consistent \wedge \forall\pi \in\univar.\ \left(
(\exists \pi' \in \unidisvar.\ \equals{\pi}{\pi'}{\set{\inprop}} \wedge \psi_0) \wedge
(\exists \pi' \in \unidisvar.\ \equals{\pi}{\pi'}{\ap_\arith} \wedge \psi_1)\right)
\wedge \\
{}&{}\quad\exists \pi_\add \in \unidisvar.\ \exists \pi_\mult \in \unidisvar.\ \psi_s \wedge \exists(X_\arith, \smallest, \phi_\plustimes').\ \hyperize(\exists x.\ ( x = n \wedge \psi)).    
\end{align*}
Intuitively, $\phi_n$ requires that the model contains, for each subset~$S$ of $\nats$, a unique trace encoding $S$ (additionally marked with the $\marker_i$ to encode membership of $S$ in the set bound to $\mathcal{Y}_i$), contains each trace over $\ap_\arith$, the set~$X_\arith$ is interpreted with $T_\plustimes$, and the formula~$\hyperize(\exists x.\ ( x = n \wedge \psi))$ is satisfied, where the translation~$\hyperize$ is defined below.
Note that we use the constant~$n$, which is definable in first-order arithmetic (with a formula that is polynomial in $\log(n)$, using the fact that the constant~$2$ is definable in first-order arithmetic and then using powers of $2$ to define~$n$).

Now, $\hyperize$ is defined inductively as follows: 
\begin{itemize}
    \item For second-order variables~$Y$, $\hyperize(\exists Y.\ \psi) = \exists \pi_Y \in \unidisvar.\ \neg\add_{\pi_Y} \wedge \neg\mult_{\pi_Y} \wedge \hyperize(\psi)$, as only traces not being labeled by $\add$ or $\mult$ encode sets.

    \item For second-order variables~$Y$, $\hyperize(\forall Y.\ \psi) = \forall \pi_Y \in \unidisvar.\ (\neg\add_{\pi_Y} \wedge \neg\mult_{\pi_Y}) \rightarrow \hyperize(\psi)$.

    \item For first-order variables~$y$, $\hyperize(\exists y.\ \psi) = \exists \pi_y \in \unidisvar.\ \neg\add_{\pi_y} \wedge \neg\mult_{\pi_y} \wedge ((\neg \inprop_{\pi_y})\U(\inprop_{\pi_y} \wedge \X\G\neg \inprop_{\pi_y})) \wedge \hyperize(\psi)$.

    \item For first-order variables~$y$, $\hyperize(\forall y.\ \psi) = \forall \pi_y \in \unidisvar.\ (\neg\add_{\pi_y} \wedge \neg\mult_{\pi_y} \wedge (\neg \inprop_{\pi_y})\U(\inprop_{\pi_y} \wedge \X\G\neg \inprop_{\pi_y})) \rightarrow \hyperize(\psi)$.

    \item $\hyperize(\psi_1 \lor \phi_2) = \hyperize(\psi_1) \lor \hyperize(\psi_2)$.
    
    \item $\hyperize(\lnot \psi) = \lnot \hyperize(\psi) $.

    \item For third-order variables~$\mathcal{Y}_i$ and second-order variables~$Y$, $\hyperize(Y\in \mathcal{Y}_i) = (\marker_i)_{\pi_Y}$.
    
    \item For second-order variables~$Y$ and first-order variables~$y$, $\hyperize(y\in Y) = \F(\inprop_{\pi_y} \land \inprop_{\pi_Y})$.
    
    \item For first-order variables~$y,y'$, $\hyperize(y<y') = \F(\inprop_{\pi_y} \land \X\F\inprop_{\pi_{y'}})$.
    
    \item For first-order variables~$y_1,y_2,y$, $\hyperize(y_1+y_2=y) = \exists \pi \in X_\arith.\ \add_\pi \land \F(\inprop_{\pi_{y_1}}\land\argone_\pi) \land \F(\inprop_{\pi_{y_2}}\land\argtwo_\pi) \land \F(\inprop_{\pi_y}\land\res_\pi)$.
    
    \item For first-order variables~$y_1,y_2,y$, $\hyperize(y_1 \cdot y_2=y) = \exists \pi \in X_\arith.\ \mult_\pi \land \F(\inprop_{\pi_{y_1}}\land\argone_\pi) \land \F(\inprop_{\pi_{y_2}}\land\argtwo_\pi) \land \F(\inprop_{\pi_y}\land\res_\pi)$.
    
\end{itemize}
While $\phi_n$ is not in prenex normal form, it can easily be brought into prenex normal form, as there are no quantifiers under the scope of a temporal operator.
An induction shows that we indeed have that $\natsstruct\models\phi(n)$ if and only if $\phi_n$ is satisfiable.
    
For the upper bound, we show that $\lfpsohyltlfp$ satisfiability is in $\Sigma_1^2$. More formally, we show how to construct a formula of the form
\[
\theta(x) = \exists \mathcal{Y}_1\subseteq \pow{\nats}.\ \cdots \exists \mathcal{Y}_k\subseteq \pow{\nats}. \ \psi(x, \mathcal{Y}_1, \ldots,\mathcal{Y}_k)
\]
with a single free (first-order) variable~$x$ such that an $\lfpsohyltlfp$ sentence~$\phi$ is satisfiable if and only if $\natsstruct \models \theta(\encode{\phi})$.
Here, $\psi$ is a formula of arithmetic with arbitrary quantification over type~$0$ and type~$1$ objects (but no third-order quantifiers) and free third-order variables~$\mathcal{Y}_i$, in addition to the free first-order variable~$x$, and $\encode{\cdot}$ is a polynomial-time computable injective function mapping $\lfpsohyltlfp$ sentences to natural numbers.

In the following, we assume, without loss of generality, that $\ap$ is fixed, so that we can use $\size{\ap}$ as a constant in our formulas (which is definable in arithmetic). 
Here, we reuse the encoding of traces as sets of natural numbers as introduced in the proof of Theorem~\ref{thm_satcomplexity} using the pairing function~$\pair$ and a bijection~$e \colon \ap \rightarrow\set{0,1,\ldots,\size{\ap}-1}$.
Recall that we encode a trace~$t \in (\pow{\ap})^\omega$ by the set~$S_t =\set{\pair(j,e(\proposition)) \mid j \in \nats \text{ and } \proposition \in t(j)} \subseteq \nats$.
A finite collection of sets~$S_1, \ldots, S_k$ is uniquely encoded by the set~$\set{\pair(n,j) \mid n \in S_j}$, i.e., we can encode finite sets of sets by type~$1$ objects.
In particular, we can encode a variable assignment whose domain is finite and contains only trace variables by a set of natural numbers and we can write a  formula that checks whether a trace (encoded by a set) is assigned to a certain variable.

Now, the overall proof idea is to let the formula~$\theta$ of arithmetic express the existence of Skolem functions for the existentially quantified variables in the $\lfpsohyltlfp$ sentence~$\phi$ such that each variable assignment that is consistent with the Skolem functions satisfies the maximal quantifier-free subformula of $\varphi$.
For trace variables~$\pi$, a Skolem function is a type~$2$ object, i.e., a function mapping a tuple of sets of natural numbers (encoding a tuple of traces, one for each variable quantified universally before $\pi$) to a set of natural numbers (encoding a trace for $\pi$).
However, to express that the interpretation of a second-order variable~$X$ is indeed a least fixed point we need both existential quantification (\myquot{there exists a set} that satisfies the guard) and universal quantification (\myquot{every other set that satisfies the guard is larger}).
Thus, handling second-order quantification this way does not yield the desired~$\Sigma_1^2$ upper bound. 

Instead we use that fact that membership of a trace in an $\hyltl$-definable least fixed point can be witnessed by a finite tree labeled by traces (Lemma~\ref{lemma:witnesstreeprops}), i.e., by a type~$1$ object .
Thus, instead of capturing the full least fixed point in arithmetic, we verify on-the-fly for each trace quantification of the form~$\exists \pi \in Y_j$ or $\forall \pi \in Y_j$ whether the interpretation of $\pi$ is in the interpretation of $Y_j$, which only requires the quantification of a witness tree.

To this end, fix some \[
\phi = 
\gamma_1.\ \quant_1(Y_1, \smallest, \phi_1^\con).\ 
\gamma_2.\ \quant_2(Y_2, \smallest, \phi_2^\con).\ 
\ldots 
\gamma_k.\ \quant_k(Y_k, \smallest, \phi_k^\con).\
\gamma_{k+1}.\
\psi,
\]
where $\psi$ is quantifier-free.
As before, we assume without loss of generality that each trace variable is quantified at most once in $\varphi$, which can always be achieved by renaming variables. This implies that for each trace variable~$\pi$ quantified in some $\gamma_j$, there exists a unique second-order variable~$X_\pi$ such that $\pi$ ranges over $X_\pi$. 
Furthermore, we assume that each $Y_j$ is different from $\unidisvar$ and $\univar$, which can again be achieved by renaming variables, if necessary.
Recall that the values of the least fixed points are uniquely determined by the interpretations of $\unidisvar$, $\univar$, and the trace variables in the $\gamma_j$. 

We need to adapt the definition of witness trees (and related concepts) to account for the fact that we here allow the usage of the variable~$\univar$.
Hence, we say that a variable assignment $\Pi$ is $\phi$-sufficient if $\Pi$'s domain contains exactly the variables~$\unidisvar$, $\univar$, and the trace variables in the $\gamma_j$. 

The expansion of a formula~$\xi$ with respect to $\Pi$ is  still denoted by $e_{\Pi,\xi}\colon \Xi\times\nats \rightarrow \set{0,1}$ and defined as in the proof of Theorem~\ref{thm_lfpsatcomplexity}.
Here, $\Xi$ denotes the set of subformulas of $\xi$.
Let us also remark for further use that a function from $\Xi\times\nats $ to $\set{0,1}$ is a type~$1$ object (as functions from $\nats$ to $\nats$ can be encoded as subsets of $\nats$ via their graph and the pairing function introduced below) and that all five requirements on the expansion can then be expressed in first-order logic.

Now, we adapt the definition of witness trees as follows.
Let
\begin{equation}
\label{eq_phiconscnd}
\phi_j^\con = \overdot{\pi}_1\tracein Y_j \wedge \cdots \wedge \overdot{\pi}_n \tracein Y_j \wedge   \forall \overdotdot{\pi}_1 \in Z_1.\ \ldots \forall \overdotdot{\pi}_{n'} \in Z_{n'}.\ \psi^\step_j \rightarrow \overdotdot{\pi}_m \tracein Y_j.    
\end{equation}
inducing the unique least fixed point that $Y_j$ ranges over. 
In the original definition in Section~\ref{subsec_lfpsatcw}, each $Z_i$ is either $\unidisvar$ or one of the $Y_{j'}$ with $j' \le j$.
Here, it may also be $\univar$.
However, selecting a trace from the interpretation of $\univar$ does not introduce any constraints (as it is interpreted by the set of all traces, as we do not allow to re-quantify $\univar$).
Hence, such a node will be just a leaf without any constraint. 

Formally, let us fix a $j^* \in \set{1,\ldots,k}$, a $\phi$-sufficient assignment~$\Pi$, and a trace~$t^*$. 
A $\Pi$-witness tree for $(t^*,j^*)$ (which witnesses~$t^* \in\lfp(\Pi,j^*)$) is an ordered finite tree whose vertices are labeled by pairs~$(t,j)$ where $t$ is a trace and where $j $ is in $\set{1,\ldots, k} \cup \set{a,d}$ (where $a$ and $d$ are fresh symbols) such that the following are satisfied:
\begin{itemize}
    \item The root is labeled by $(t^*, j^*)$.
    
    \item If a vertex is labeled with $(t,a)$ for some trace~$t$, then it must be a leaf. Note that $t$ is in $\Pi(\univar)$, as that set contains all traces.
    
    \item If a vertex is labeled with $(t,d)$ for some trace~$t$, then it must be a leaf and $t \in \Pi(\unidisvar)$.
    
    \item Let $(t,j)$ be the label of some vertex~$v$ with $j \in \set{1,\ldots, k}$ and let $\phi_j^\con$ as in (\ref{eq_phiconscnd}).
If $v$ is a leaf, then we must have $t = \Pi(\overdot{\pi}_i)$ for some $i \in \set{1, \ldots, n}$.
If $v$ is an internal vertex, then it must have $n'$ successors labeled by $(t_1,j_1),\ldots,(t_{n'},j_{n'})$ (in that order) such that $\Pi[\overdotdot{\pi}_1 \mapsto t_1, \ldots, \overdotdot{\pi}_{n'} \mapsto t_{n'}] \models \psi^\step_j $, $t = t_m$, and
such that the following holds for all $i \in \set{1, \ldots, n'}$:
\begin{itemize}
    \item If $Z_i = \univar$, then the $i$-th successor of $v$ is a leaf and $j_i = a$.    
    
    \item If $Z_i = \unidisvar$, then the $i$-th successor of $v$ is a leaf, $t_i \in \Pi(\unidisvar)$, and $j_i = d$.
    
    \item If $Z_i = Y_{j'}$ for some $j'$ (which satisfies $1 \le j' \le j$ by definition of $\lfpsohyltlfp$), then we must have that $ j_i = j'$ and that the subtree rooted at the $i$-th successor of $v$ is a $\Pi$-witness tree for $(t_i,j_i)$.
\end{itemize}
\end{itemize}

The following proposition states that membership in the fixed points is witnessed by witness trees. It is obtained by generalizing the similar argument for $\univar$-free sentences presented in Lemma~\ref{lemma:witnesstreeprops}.

\begin{lem}
Let $\Pi$ be a $(\phi, j)$-sufficient assignment with $\Pi(Y_{j'}) = \itereval(\Pi)(Y_{j'})$ for all $j' < j$.
Then, we have $t \in \lfp(\Pi, j)$ if and only if there is a $\Pi$-witness tree for $(t,j)$.
\end{lem}

Note that a witness tree is a type~$1$ object: The (finite) tree structure can be encoded by 
\begin{itemize}
    \item a natural number~$s>0$ (encoding the number of vertices), 
    \item a function from $\set{1,\ldots, s} \times \set{1,\ldots, n''} \rightarrow \set{0,1,\ldots, s}$ encoding the child relation, i.e., $(v,j) \mapsto v'$ if and only if the $j$-th child of $v$ is $v'$ (where we use $0$ for undefined children and $n''$ is the maximum over all $n'$ in the $\phi_j^\con$), 
    \item $s$ traces over $\ap$ and $s$ values in $\set{1,\ldots, k, k+1, k+2}$ to encode the labeling (where we use $k+1$ for $a$ and $k+2$ for $d$).
\end{itemize}
Note that the function encoding the child relation can be encoded by a finite set by encoding its graph using the pairing function while all other objects can directly be encoded by sets of natural numbers, and thus can be encoded by a single set as explained above.

Furthermore, one can write a second-order formula~$\psi_{\hastree}(X_D,A, X_{t^*}, j^*)$ with free third-order variable~$X_D$ (encoding a set of traces~$T$), free second-order variables~$A$ (encoding a variable assignment~$\Pi$ whose domain contains exactly the trace variables in the $\gamma_j$) and $X_{t^*}$ (encoding a trace~$t^*$), and free first-order variable~$j^*$ that holds in $\natsstruct$ if and only if there exists a $\Pi[\unidisvar\mapsto T, \univar\mapsto (\pow{\ap})^\omega]$-witness tree for $(t^*,j^*)$.
To evaluate the formulas~$\psi_{j'}^\step$ as required by the definition of witness trees, we rely on the expansion as introduced above, which here is a mapping from vertices in the tree and subformulas of the $\psi_{j'}^\step$ to $\set{0,1}$, and depends on the set of traces encoded by $X_D$ and the variable assignment encoded by $A$.
Such a function is a type~$1$ object and can therefore be quantified in $\psi_{\hastree}(X_D,A, X_{t^*}, j^*)$.

Recall that we construct a formula~$\theta(x)$ with a free first-order variable~$x$ such that an $\lfpsohyltlfp$ sentence~$\phi$ is satisfiable if and only if $\natsstruct \models \theta(\encode{\phi})$, where $\encode{\cdot}$ is a polynomial-time computable injective function mapping $\lfpsohyltlfp$ sentences to natural numbers.

With this preparation, we can define $\theta(x)$ such that it is not satisfied in $\natsstruct$ if the interpretation of $x$ does not encode an $\lfpsohyltlfp$ sentence.
If it does encode such a sentence~$\phi$, let $\psi$ be its maximal quantifier-free subformula (which is \myquot{computable} in first-order arithmetic using a suitable encoding~$\encode{\cdot}$).
Then, $\theta$ expresses the existence of
\begin{itemize}
    \item a set~$T$ of traces (bound to the third-order variable~$X_D$ and encoded as a set of sets of natural numbers, i.e., a type~$2$ object),
    \item Skolem functions for the existentially quantified trace variables in the $\gamma_j$ (which can be encoded by functions from $(\pow{\nats})^\ell$ to $\pow{\nats}$ for some $\ell$, i.e., by a type~$2$ object), and
    \item a function~$e \colon \pow{\nats} \times \nats \times \nats \rightarrow \set{0,1}$
\end{itemize}
such that the following is true for all variable assignments~$\Pi$ (restricted to the trace variables in the $\gamma_j$ and bound to the second-order variable~$A$):
If 
\begin{itemize}
    \item $\Pi$ is consistent with the Skolem functions for all existentially quantified variables,
    \item for all universally quantified $\pi$ in some $\gamma_j$ with $X_\pi = \unidisvar$, we have $\Pi(\pi) \in T$, and
    \item for all universally quantified $\pi$ in some $\gamma_j$ with $X_\pi = Y_{j^*}$, $\psi_{\hastree}(X_D,A, \Pi(\pi), j^*)$ holds, 
\end{itemize}
then
\begin{itemize}
    \item for all existentially quantified $\pi$ in some $\gamma_j$ with $X_\pi = \unidisvar$, we have $\Pi(\pi) \in T$, and
    \item for all existentially quantified $\pi$ in some $\gamma_j$ with $X_\pi = Y_{j^*}$, $\psi_{\hastree}(X_D,A, \Pi(\pi), j^*)$ holds, and
    \item the function~$(x,y)\mapsto e(A,x,y)$ is the expansion of $\psi$ with respect to $\Pi$ and we have $e(A,\psi,0) =1$ (here we identify subformulas of $\psi$ by natural numbers).
\end{itemize}
We leave the tedious, but routine, details to the reader. 
\end{proof}

Note that in the lower bound proof, a single second-order quantifier (for the set of traces implementing addition and multiplication) suffices. 

\subsection{\texorpdfstring{Finite-State Satisfiability and Model-Checking for \boldmath$\lfpsohyltlfp$}{Finite-State Satisfiability and Model-Checking for Second-order HyperLTL with Least Fixed Points}}

In this subsection, we settle the complexity of finite-state satisfiability and model-checking for $\lfpsohyltlfp$, both for general sentences and $\univar$-free sentences.

\begin{thm}
\label{thm_fssatmccomplexity_lfp}
$\lfpsohyltlfp$ finite-state satisfiability and model-checking are poly\-nomial-time equivalent to truth in second-order arithmetic. The lower bounds already hold for $\univar$-free formulas.
\end{thm}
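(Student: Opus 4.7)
The upper bounds will follow by adapting the $\Sigma_1^2$ argument for $\lfpsohyltlfp$ satisfiability (Theorem~\ref{thm_satcomplexity_lfp_ss}). In that proof the only type-$2$ existential was the guess of the model; here it is replaced by no quantifier at all (model-checking, where the model is given) or by a type-$0$ existential (finite-state satisfiability, where the model is encoded by a finite transition system). All remaining work happens in second-order arithmetic: trace quantifiers over $\unidisvar$ become type-$1$ quantifiers over encodings of paths of the transition system; trace quantifiers over $\univar$ become unconstrained type-$1$ quantifiers over trace encodings; trace quantifiers over a fixed-point variable~$Y_j$ are paired with a type-$1$ witness tree certifying membership in $\lfp(\Pi,j)$, exactly as in the proof of Theorem~\ref{thm_satcomplexity_lfp_ss}; and the expansion attesting the truth of the quantifier-free body is a type-$1$ object as in the proof of Theorem~\ref{thm_lfpsatcomplexity}. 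Both problems therefore fit within truth in second-order arithmetic.

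For the lower bounds I will reduce truth in second-order arithmetic to each problem, using only $\univar$-free sentences. Fix $\ap^* = \{\inprop\}\cup\ap_\arith$ and let $\tsys^*$ be the finite transition system with $\traces(\tsys^*) = (\pow{\ap^*})^\omega$ (a single state with one self-loop per label). Given a second-order arithmetic sentence~$\phi$, I will build an $\univar$-free $\lfpsohyltlfp$ sentence~$\phi'$ by stripping the third-order machinery from the translation~$\hyperize$ of Theorem~\ref{thm_satcomplexity_lfp_ss}: type-$0$ quantifiers become trace quantifiers over $\unidisvar$ guarded to singleton-encoded traces, type-$1$ quantifiers become trace quantifiers over $\unidisvar$ guarded to arbitrary $\inprop$-patterns, and arithmetic is implemented via a fixed-point variable~$X_\arith$ through the $\lfpsohyltlfp$-compliant reformulation~$\phi_\plustimes'$ described on page~\pageref{pageref_plustimesaslfp}. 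Since $\traces(\tsys^*)$ realises every subset of~$\nats$ as an $\inprop$-pattern and every trace of $T_\plustimes$, a routine induction gives $\tsys^* \models \phi'$ iff $\natsstruct \models \phi$, which yields the model-checking lower bound by hard-coding $\tsys^*$ as the input transition system.

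For finite-state satisfiability I will additionally conjoin~$\phi'$ with an $\univar$-free $\lfpsohyltlfp$ sentence~$\psi_{\mathit{rich}}$ whose role is to ensure that a witnessing finite transition system~$\tsys'$ exhibits a trace set rich enough for the guards of $\phi'$ to behave on~$\traces(\tsys')$ exactly as they do on~$\traces(\tsys^*)$. The crucial leverage is that, while arbitrary satisfiable $\univar$-free $\lfpsohyltlfp$ sentences admit only countable models by Lemma~\ref{lemma_lfpcwmodelsize}, a finite transition system may have uncountably many traces; $\psi_{\mathit{rich}}$ uses this slack to single out, among finite transition systems, those whose trace set contains all arithmetic traces and realises every $\inprop$-pattern, so that $\tsys^*$ (and enrichments thereof) are the only viable witnesses. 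Then $\phi' \wedge \psi_{\mathit{rich}}$ is finite-state satisfiable iff $\natsstruct \models \phi$.

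The main obstacle will be phrasing $\psi_{\mathit{rich}}$ inside $\univar$-free $\lfpsohyltlfp$: since $\univar$ is forbidden, the richness requirement must be encoded by a combination of guarded first-order $\hyltl$ completeness assertions forcing singleton and arithmetic traces (in the style of Example~\ref{example_hyltlcardlb} and Proposition~\ref{prop_plustimesinhyperltl}) together with least fixed points that propagate structural closure conditions on $\inprop$-patterns, making the presence of \emph{all} $\inprop$-patterns necessary for any finite witness to satisfy~$\psi_{\mathit{rich}}$ even though no single subformula quantifies over them.
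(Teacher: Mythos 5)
Your proposal is correct in substance but organizes the four bounds differently from the paper. The paper proves only two \myquot{hard} lemmas and composes: it reduces finite-state satisfiability to model-checking of $\univar$-free sentences (Lemma~\ref{lemma_fssat2mc}, by existentially quantifying traces that encode a candidate transition system and reconstructing its trace set via a least fixed point over prefixes, then model-checking against a fixed universal system), proves the lower bound only for finite-state satisfiability (Lemma~\ref{lemma_fssat_soahard}) and the upper bound only for model-checking (Lemma~\ref{lemma_mc_soaeasy}); the remaining two bounds fall out of the reduction. You instead give a direct arithmetization of finite-state satisfiability (guessing the transition system with low-order quantifiers, as the paper itself does for full $\sohyltl$ in Theorem~\ref{thm_finsatcomplexity}) and a direct reduction from truth in second-order arithmetic to model-checking against a fixed all-traces system (mirroring Theorem~\ref{thm_mccomplexity}). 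Both routes are sound; yours avoids the somewhat delicate Lemma~\ref{lemma_fssat2mc} at the cost of writing two arithmetizations instead of one, while the paper's factoring localizes all the witness-tree bookkeeping in a single lemma. Your model-checking upper bound and the finite-state-satisfiability lower bound coincide with the paper's in all essentials.

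One step in your sketch deserves to be made explicit, because it is the crux of the finite-state lower bound: the mechanism by which $\psi_{\mathit{rich}}$ forces \emph{every} $\inprop$-pattern into a finite-state model. Least fixed points cannot do this (they only accumulate countably many traces already present), and $\hyltl$-style assertions in the spirit of Example~\ref{example_hyltlcardlb} only force the countably many singleton encodings. The paper's $\phi_\prefs$ marks each trace with a unique $\#$ and uses plain first-order quantification to force every \emph{finite} prefix over $\set{\inprop}$ to be represented; the passage from all finite prefixes to all infinite patterns then relies on the fact that the trace set of a finite transition system is topologically closed. You allude to the uncountability slack but should name this closure argument explicitly, since without it the richness sentence does not pin down $(\pow{\set{\inprop}})^\omega$. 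With that made precise, your plan goes through.
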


The result follows from the following building blocks, which are visualized in Figure~\ref{fig_reductions}:

\begin{itemize}
    \item All lower bounds proven for $\univar$-free $\lfpsohyltlfp$ sentences trivially also hold for $\lfpsohyltlfp$ while upper bounds for $\lfpsohyltlfp$ also hold for the fragment of $\univar$-free $\lfpsohyltlfp$ sentences.
    \item We show that $\lfpsohyltlfp$ finite-state satisfiability can in polynomial time be reduced to $\lfpsohyltlfp$ model-checking for $\univar$-free sentences (see Lemma~\ref{lemma_fssat2mc} below).
    \item $\lfpsohyltlfp$ finite-state satisfiability for $\univar$-free sentences is at least as hard as truth in second-order arithmetic (see Lemma~\ref{lemma_fssat_soahard} below).
    \item $\lfpsohyltlfp$ model-checking can in polynomial time be reduced to truth in second-order arithmetic (see Lemma~\ref{lemma_mc_soaeasy} below).
\end{itemize}

\begin{figure}[h]
    \centering
    \begin{tikzpicture}

\fill[lightgray!40,rounded corners] (-5.25,-.4) rectangle (5.25,.4);    
\fill[lightgray!40,rounded corners] (-5.25,2.6) rectangle (5.25,3.4);    

\fill[lightgray!40,rounded corners] (-.75,-1) rectangle (.75,5.75);    
\fill[lightgray!40,rounded corners] (3.25,-1) rectangle (4.75,5.75);    

\fill[lightgray] (-.75,-.4) rectangle (.75,.4);    
\fill[lightgray] (-.75,2.6) rectangle (.75,3.4);    

\fill[lightgray] (3.25,-.4) rectangle (4.75,.4);    
\fill[lightgray] (3.25,2.6) rectangle (4.75,3.4);

\node[anchor = west] at (-5.1,3) {$\lfpsohyltlfp$};
\node[anchor = west] at (-5.1,0) {$\univar$-free $\lfpsohyltlfp$};

\node[align=center,anchor = north] at (0,5.5) {finite-\\state\\satisfi-\\ability};
\node[align=center,anchor = north] at (4,5.5) {model-\\checking};

\fill (0,0) circle (.1cm);
\fill (0,3) circle (.1cm);
\fill (4,0) circle (.1cm);
\fill (4,3) circle (.1cm);

\path[->,>=stealth,ultra thick, shorten > = .2cm, shorten < = .2cm]
(0,0) edge[bend left=15] node[left] {trivial} (0,3)
(4,0) edge[bend right=15] node[right] {trivial} (4,3)
(0,0) edge[bend right=15] node[below] {Lemma~\ref{lemma_fssat2mc}} (4,0)
(0,3) edge[bend left=15] node[above,rotate=-37] {Lemma~\ref{lemma_fssat2mc}} (4,0)
;

\node[align = center, anchor = west] (ub) at (5.75,3) {upper bound:\\Lemma~\ref{lemma_mc_soaeasy}};
\node[align = center, anchor = north] (lb) at (0,-1.5) {lower bound:\\Lemma~\ref{lemma_fssat_soahard}};

\path[->,>=stealth,ultra thick, shorten > = .2cm, shorten < = .2cm, dotted]
(ub) edge (4,3)
(lb) edge (0,0)
;

    \end{tikzpicture}    
    \caption{The reductions (drawn as solid arrows) and lemmata proving Theorem~\ref{thm_fssatmccomplexity_lfp}.}
    \label{fig_reductions}
\end{figure}

We begin with reducing finite-state satisfiability to model-checking.

\begin{lem}
\label{lemma_fssat2mc}
$\lfpsohyltlfp$ finite-state satisfiability is polynomial-time reducible to $\lfpsohyltlfp$ model-checking for $\univar$-free sentences.
\end{lem}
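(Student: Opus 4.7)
The plan is to exhibit a fixed finite transition system $\tsys^*$ and a polynomial-time translation $\phi \mapsto \phi'$ from $\lfpsohyltlfp$ sentences $\phi$ over $\ap$ to $\univar$-free $\lfpsohyltlfp$ sentences $\phi'$ over an extended alphabet $\ap' \supseteq \ap$, such that $\tsys^* \models \phi'$ if and only if $\phi$ is finite-state satisfiable. Because the target is model-checking with a given transition system, a single ``universal'' $\tsys^*$ will suffice.

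First, I would fix $\ap' = \ap \cup \ap_\enc$, where $\ap_\enc$ contains fresh propositions used to (i) encode a finite transition system over $\ap$ in the initial segment of a single trace (block-wise listing of vertices together with their $\ap$-labels, initial markers, and outgoing edges) and (ii) annotate every subsequent trace with a path witness through the encoded system (a vertex identifier at every position). I would then take $\tsys^*$ to be a finite transition system with $\traces(\tsys^*) = (\pow{\ap'})^\omega$, realized by setting $V = \pow{\ap'}$, $E = V \times V$, $I = V$, and $\lambda(S) = S$; in particular, $\traces(\tsys^*)$ contains every conceivable encoding trace and every path-annotated trace of any encoded system.

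Second, I would define the translation. The sentence $\phi'$ begins with $\exists \pi_\tsys \in \unidisvar.\ \phi_\enc(\pi_\tsys)$, where $\phi_\enc$ is an $\hyltl$ formula asserting that the $\ap_\enc$-projection of $\pi_\tsys$ is a well-formed encoding of some finite transition system $\tsys_\enc$ over $\ap$. Then I would recursively rewrite each quantifier occurring in $\phi$ and in each guard $\phi_j^\con$ as follows: a quantifier $\quant \pi \in \unidisvar$ becomes $\quant \pi \in \unidisvar$ additionally guarded by an $\hyltl$ relativization $\phi_\traceof(\pi, \pi_\tsys)$ expressing that $\pi$'s path-annotation starts at an initial vertex of $\pi_\tsys$, follows only edges recorded in $\pi_\tsys$, and has $\ap$-labels agreeing with those at the visited vertices of $\pi_\tsys$; a quantifier $\quant \pi \in \univar$ becomes $\quant \pi \in \unidisvar$ without any relativization (which is correct, since the $\ap$-projections of $\traces(\tsys^*)$ already cover all traces over $\ap$); and a quantifier $\quant \pi \in Y_j$ stays as is. Inside each guard $\phi_j^\con$, the inserted $\phi_\traceof$ conjuncts are absorbed into the antecedent of the implication, so the translated guard retains the syntactic form required by $\lfpsohyltlfp$ after routine prenexing. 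Since $\pi_\tsys$ is quantified in the outermost block, it is available as a free trace variable in every inner guard.

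Third, I would prove correctness by induction on subformulas, with the key invariant that the least fixed point assigned to $Y_j$ in $\phi'$ consists of exactly those path-annotated traces in $\traces(\tsys^*)$ whose $\ap$-projection lies in the least fixed point of the original $\phi_j^\con$ computed over $\traces(\tsys_\enc)$. This invariant is established by a stage-wise comparison of the two fixed-point iterations. The backward direction then picks $\pi_\tsys$ to encode any finite TS witnessing finite-state satisfiability of $\phi$ and uses the canonical path-annotated traces of $\tsys^*$ as witnesses; the forward direction extracts $\tsys_\enc$ from any valid $\pi_\tsys$ and reads off a model of $\phi$. The main obstacle is verifying the fixed-point invariant, i.e., that inserting the $\phi_\traceof$ guards inside each $\phi_j^\con$ does not disturb the correspondence between the fixed points; this is where the rich supply of path-annotated traces in $\traces(\tsys^*)$ and the monotonicity of the guard operator are essential, and the stage-by-stage induction closes the gap. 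The construction is manifestly polynomial-time, and $\phi'$ is $\univar$-free by design.
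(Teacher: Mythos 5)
Your overall architecture matches the paper's: a fixed transition system $\tsys^*$ with $\traces(\tsys^*) = (\pow{\ap'})^\omega$, an existentially quantified trace (or traces) encoding a candidate finite transition system $\tsys_\enc$ over $\ap$, relativization of $\unidisvar$-quantifiers to traces of $\tsys_\enc$, and replacement of $\univar$ by $\unidisvar$. However, there is a genuine gap in how you certify that a trace belongs to $\traces(\tsys_\enc)$. You propose to annotate every trace with \emph{a vertex identifier at every position} and to check locally that consecutive identifiers follow edges of $\tsys_\enc$. Since the reduction must fix $\ap'$ (and hence $\tsys^*$) independently of the witnessing system, a single position of a trace over $\pow{\ap'}$ can carry only constantly many distinct identifiers, whereas finite-state satisfiability ranges over transition systems with arbitrarily many vertices. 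So the per-position path annotation cannot name the vertices, and the ``local'' consistency check $\phi_\traceof(\pi,\pi_\tsys)$ cannot be expressed: even if it could, relating the vertex visited at position $i$ of $\pi$ to the entry for the edge $(v,v')$ stored at position $v\cdot n + v'$ (or in the $v$-th block) of the encoding trace requires arithmetic on positions, which plain temporal operators do not provide.

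The paper resolves exactly these two obstacles, and this is where the least-fixed-point machinery (which your sketch barely uses) becomes essential. First, vertices are represented by \emph{positions} rather than letters: a finite path prefix is encoded as a separate trace carrying a $\#$ delimiter and a single occurrence of $\inprop$ whose position names the last vertex, so unboundedly many vertices are fine. Second, an auxiliary fixed point $Y_\arith$ interpreted as $T_\plustimes$ (Proposition~\ref{prop_plustimesinhyperltl}) supplies addition and multiplication, which is needed to index into the adjacency-matrix encoding of $E$. Third, a least fixed point $Y_\prefixes$ generates exactly the encodings of all finite prefixes of traces of $\tsys_\enc$, and a trace is declared to be in $\traces(\tsys_\enc)$ iff all of its finite prefixes appear in $Y_\prefixes$ --- this step is sound only because the trace set of a \emph{finite} transition system is closed. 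Your proposal would need to be reworked along these lines; as written, the fixed-point invariant you intend to prove by stage-wise induction never gets off the ground because the objects it quantifies over (path-annotated traces over a fixed alphabet) cannot represent the systems in question.
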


\begin{proof}
Intuitively, we reduce finite-state satisfiability of $\varphi$ to model-checking by writing a sentence~$\varphi'$ existentially quantifying a finite transition system~$\tsys$ (encoded by two traces) and expressing that $\tsys$ satisfies $\phi$ and then model-checking $\varphi'$ in a fixed transition system~$\tsys'$. 

To simplify our construction, we begin by showing that we can restrict ourselves without loss of generality to transition systems with a unique initial vertex, i.e., we show that a $\sohyltl$ sentence is finite-state satisfiable if and only if it is satisfied by a transition system with a single initial vertex.
Then, we show how to encode finite transition systems by pairs of traces,  one trace encoding the labels of the vertices and the other one the adjacency matrix of transition relation. 
Using this encoding, we then show how to capture the set of (encodings of) prefixes of traces using a least fixed point.
Then, we can translate finite-state satisfiability of a sentence~$\varphi$ to model-checking by existentially quantifying traces encoding a finite transition system~$\tsys$ and then relativizing the quantifiers of $\varphi$ to traces whose prefixes are all in the least fixed point, which implies that they are traces of $\tsys$.
To this end, it suffices to model-check the resulting formula in a fixed transition system~$\tsys'$ that contains all traces over the atomic propositions occurring in $\varphi$ as well as some additional propositions used in the construction.

We begin by showing that we can restrict ourselves, without loss of generality, to finite-state satisfiability by transition systems with a single initial vertex, which simplifies our construction.
Let $\tsys = (V, E, I, \lambda)$ be a transition system and $\phi$ a $\sohyltl$ sentence.
Consider the transition system~$\tsys_{\X} = (V\cup\set{v_\initmark}, E', \set{v_\initmark}, \lambda')$ with a fresh initial vertex~$v_\initmark$,
\[
E' = E \cup \set{(v_\initmark, v_\initmark)} \cup \set{(v_\initmark, v) \mid v \in I},
\]
and $\lambda'(v_\initmark) = \set{\$}$ (for a fresh proposition~$\$$) and $\lambda'(v) = \lambda(v)$ for all $v \in V$.
Here, we add the self-loop on the fresh initial vertex~$v_\initmark$ to deal with the special case of $I$ being empty, which would make $v_\initmark$ terminal without the self-loop.

Now, we have
\[
\traces(\tsys_{\X}) = \set{\$}^\omega \cup \set{\set{\$}^+\cdot t \mid t \in \traces(\tsys) }.
\]
Furthermore, let $\phi_{\X}$ be the formula obtained from $\phi$ by adding an $\X$ to the maximal quantifier-free subformula of $\phi$ and by inductively replacing 
\begin{itemize}
\item each $\exists \pi \in X.\ \psi$ by $\exists \pi \in X.\ \X(\neg\$_\pi \wedge \psi)$ and
\item each $\forall \pi \in X.\ \psi$ by $\forall \pi \in X.\ \X(\neg\$_\pi \rightarrow \psi)$.
\end{itemize}
Then, $\tsys \models \phi$ if and only if $\tsys_{\X} \models \phi_{\X}$.
Thus, $\phi$ is finite-state satisfiable if and only if there exists a finite transition system with a single initial vertex that satisfies $\phi_{\X}$.

Now, we show how to encode a finite transition system using two traces. 
To this end, let $\phi$ be a $\lfpsohyltlfp$ sentence (over $\ap$, which we assume to be fixed).
We define $\ap' =  \ap \cup \set{\inprop,\#} \cup \set{\argone, \argtwo, \res, \add, \mult}$. 
Throughout the construction, we use a second-order variable~$Y_\arith$ which will be interpreted by $T_\plustimes$ (see Proposition~\ref{prop_plustimesinhyperltl}).
We now explain how we encode a finite transition system by two traces over $\ap'$:
\begin{itemize}
	\item Consider the formula~$\phi_\vertices = \neg \#_{\pi_\vertices} \wedge (\neg \#_{\pi_\vertices}) \U (\#_{\pi_\vertices} \wedge \X\G \neg \#_{\pi_\vertices})$ expressing that the interpretation of ${\pi_\vertices}$ contains a unique position where $\#$ holds. This position must not be the first one. Hence, the $\ap$-projection of such a trace up to the $\#$ is a nonempty word~$w(0)w(1)\cdots w(n-1)$ for some $n > 0$. 
	It induces $V = \set{0,1,\ldots, n-1}$ and $\lambda \colon V \rightarrow \pow{\ap}$ with $\lambda(v) = w(v) \cap \ap$. Furthermore, we fix $I = \set{0}$.
	
	\item 
Let $t$ be a trace over $\ap'$. It induces the edge relation~$E = 
\set{(i,j) \in V\times V \mid \inprop \in t(i\cdot n + j)}$, i.e., we interpret 
the first $n^2$ truth values of $\inprop$ in $t$ as an adjacency matrix (encoded as a sequence of rows of the matrix). Furthermore, if $t$ is the interpretation of $\pi_\edges$ satisfying $\phi_\edges$ defined below, then every vertex in $\set{0,1,\ldots, n-1}$ has a successor in $E$.
	\begin{align*}
	\phi_\edges ={}&{} \forall \pi \in Y_\arith.\ [\mult_\pi \wedge (\F (\argone_{\pi} \wedge \X\F \#_{\pi_\vertices}))\wedge \F(\argtwo_\pi \wedge \#_{\pi_\vertices})]\rightarrow \\
	{}&{}\quad\exists \pi' \in Y_\arith.\ 
	\add_{\pi'} \wedge 
	\F(\argone_{\pi'} \wedge \res_{\pi}) \wedge 
	(\F (\argtwo_{\pi'} \wedge \X\F \#_{\pi_\vertices}) \wedge 
	\F(\res_{\pi'} \wedge 
	\inprop_{\pi_\edges}).
	\end{align*}
Intuitively, the formula~$\phi_\edges$ expresses that for every $i \in V = \set{0,1,\ldots, n-1}$ there is a $j \in V$ such that the edge~$(i,j)$ is in the edge relation encoded by the trace assigned to $\pi_E$, which is the case if and only if the proposition~$\inprop$ holds at position~$i \cdot n +j$.
To this end, we universally quantify a trace (assigned to $\pi$) encoding a multiplication~$n_1 \cdot n_2= n_3$.
Now, if 
\begin{itemize}
    \item $n_1$ is strictly smaller than $n$, expressed using the subformula~$\F (\argone_{\pi} \wedge \X\F \#_{\pi_\vertices})$ (recall that $\#$ holds at position~$n$ (and nowhere else) of the trace assigned to $\pi_V$), and if
    \item $n_2$ is equal to $n$, expressed using the subformula~$\F(\argtwo_\pi \wedge \#_{\pi_\vertices})$,
\end{itemize}
i.e., $n_1$ corresponds to $i$ and $n_2$ corresponds to $n$ in $i \cdot n +j$,
then we existentially quantify a trace (assigned to $\pi'$) encoding an addition~$n_4 + n_5 = n_6$.
Here, we require that
\begin{itemize}
    \item $n_4 $ is equal to $ n_3$ (and thus equal to $i \cdot n$), using the subformula~$\F(\argone_{\pi'} \wedge \res_{\pi})$,
    \item $n_5$ is strictly smaller than $n$, using the subformula~$\F (\argtwo_{\pi'} \wedge \X\F \#_{\pi_\vertices})$, and
    \item the proposition~$\inprop$ holds at position~$n_6$ of the trace~$t$ assigned to $\pi_\edges$, using the subformula~$\F(\res_{\pi'} \wedge 
	\inprop_{\pi_\edges})$,
\end{itemize}
i.e., $n_5$ corresponds to $j$ in $i \cdot n +j$ and $n_6$ is therefore equal to $i \cdot n +j$.

\end{itemize}
Thus, $\pi_\vertices$ and $\pi_\edges$ satisfying the above formulas indeed encode a
finite transition system~$\tsys$ with a single initial vertex as described above.

Our next step is to encode trace prefixes of the encoded transition system and then inductively construct the set of all such prefixes, by describing it as the minimal fixed point that contains the trace prefix of the unique path of length one starting at the initial vertex and that is closed under extending any trace prefix by one transition. Note that this only requires us to keep track of the last vertex of the path inducing the trace. We do so by marking a position of the trace that encodes the vertex as done before.

 Consider the formula
	\[\phi_\prefix = 
    [(\neg\#_\pi) \U (\#_\pi \wedge \X\G\neg \#_\pi)] \wedge 
    [(\neg\inprop_\pi) \U (\inprop_\pi \wedge \X\G\neg \inprop_\pi)] \wedge 
    \F (\inprop_{\pi} \wedge \X\F \#_{\pi_\vertices})
	,\]
	 which is satisfied if the interpretation~$t$ of $\pi$ has a unique position~$\ell$ at which $\#$ holds and if it has a unique position~$v$ where $\inprop$ holds. Furthermore, $v$ must be strictly smaller than $n$, i.e., it is in $V$.
	In this situation, we interpret the $\ap$-projection of $t(0)\cdots t(\ell-1)$ as a trace prefix over $\ap$ and $v$ as a vertex (intuitively, this is interpreted as the vertex where the path prefix inducing the trace prefix ends).

	Using this encoding, the formula
	 \[
\phi_\init = \phi_\prefix[\pi/\pi_\init] \wedge \inprop_{\pi_\init} \wedge \X\#_{\pi_\init} \wedge \bigwedge_{\proposition \in \ap} \proposition_{\pi_\vertices} \leftrightarrow \proposition_{\pi_\init}
\]
ensures that $\pi_\init$ is interpreted with a trace that encodes the trace prefix of the unique path of length one starting at the initial vertex. Here, $\phi_\prefix[\pi/\pi_\init]$ denotes the formula obtained from $\phi_\prefix$ by replacing each occurrence of $\pi$ by $\pi_\init$.

Recall that we have fixed a $\lfpsohyltlfp$ sentence~$\phi$ over $\ap$ and that we defined $\ap' =  \ap \cup \set{\inprop,\#} \cup \set{\argone, \argtwo, \res, \add, \mult}$.
Let $\tsys'$ be a fixed transition system such that $\traces(\tsys') = (\pow{\ap'})^\omega$.
We can construct $\tsys'$ such that it has $2^{\size{\ap'}}$ many vertices (which is constant as $\ap$ is fixed!)
Given $\phi$, we construct an $\univar$-free $\lfpsohyltlfp$ sentence~$\phi'$ (over $\ap'$) such that $\phi$ is satisfied by a finite transition system with a single initial vertex if and only if $\tsys' \models\phi'$.

To this end, first consider the formula
\begin{align*}
\phi'' = \exists \pi_\vertices \in \unidisvar.\ \exists \pi_\edges \in \unidisvar.\ \exists \pi_\init \in \unidisvar.\ \phi_\vertices \wedge \phi_\edges \wedge \phi_\init \wedge \exists (Y_\prefixes,\smallest, \phi_\con).\ \rel(\phi),
\end{align*}
where $\phi_\con$ and $\rel(\phi)$ are introduced below.
Here, we quantify $\pi_\vertices$ and $\pi_\edges$ and ensure that they encode a finite transition system~$\tsys$ as described above, and want to express that the interpretation of $Y_\prefixes$  must contain exactly the encodings of prefixes of traces of $\tsys$. 

To this end, $\phi_\con$ is defined as 
\begin{align*}
	\phi_\con = \pi_\init \tracein Y_\prefixes \wedge \forall \pi_0 \in Y_\prefixes.\ \forall \pi_1 \in \unidisvar.\ 
	\psi_\step \rightarrow \pi_1 \tracein Y_\prefixes
\end{align*}
with the formula~$\psi_\step $
\begin{align*}
\phi_\prefix[\pi/\pi_1] \wedge &
	(\bigwedge_{\proposition \in \ap} \proposition_{\pi_0} \leftrightarrow \proposition_{\pi_1}) \U(\#_{\pi_0} \wedge \X \#_{\pi_1}) \wedge\\
&	\Big(\bigwedge_{\proposition \in \ap} [\F(\proposition_{\pi_\vertices} \wedge \inprop_{\pi_1}) ]\leftrightarrow [\F(\proposition_{\pi_1}  \wedge \X\#_{\pi_1})] \Big) \wedge 
\phi_{\edge},    
\end{align*}
where $\phi_\edge$ checks whether there is an edge in the encoded transition system~$\tsys$ between the vertex induced by $\pi_0$ and the vertex induced by $\pi_1$ (analogously to $\phi_\edges$):
\begin{align*}
\phi_\edge = \exists \pi_m \in Y_\arith.\ \exists \pi_a\in Y_\arith.\
{}&{}
\mult_{\pi_m} \wedge
\F(\argone_{\pi_m} \wedge \inprop_{\pi_0}) \wedge
\F(\argtwo_{\pi_m} \wedge \#_{\pi_\vertices}) \wedge\\
{}&{}\hspace{-2cm}
\add_{\pi_a} \wedge
\F(\argone_{\pi_a} \wedge \res_{\pi_m} ) \wedge
\F(\argtwo_{\pi_a} \wedge \inprop_{\pi_1}) \wedge
\F(\res_{\pi_a} \wedge \inprop_{\pi_\edges})
\end{align*}
Intuitively, the until subformula of $\psi_\step$ ensures that the finite trace encoded by the interpretation of $\pi_0$ is a prefix of the finite trace encoded by the interpretation of $\pi_1$, which has one additional letter.
The equivalence~$\F(\proposition_{\pi_\vertices} \wedge \inprop_{\pi_1}) \leftrightarrow \F(\proposition_{\pi_1}  \wedge \X\#_{\pi_1})$ then ensures that the additional letter is the label of the vertex~$v$ induced by the interpretation of $\pi_1$: the left-hand side of the equivalence \myquot{looks up} the truth values of the label of $v$ in $\pi_\vertices$ and the right-hand side the truth values at the additional letter in $\pi_1$, which comes right before the $\#$.

Thus, the least fixed point induced by $\psi_\con$ contains exactly the encodings of the prefixes of traces of $\tsys$ as the prefix of length one is included and $\psi_\step$ extends the prefixes by one more letter.

Thus, it that remains is to relativize the quantifiers of $\phi$ (the sentence we started with) so that they range over the traces of the transition system~$\tsys$ we have quantified using $\pi_\vertices$ and $\pi_\edges$.
Recall that the formula~$\varphi'$ we are constructed is model-checked over a transition system~$\tsys'$ that contains all traces over $\ap'$.
Thus, quantification over $\univar$ in $\phi$ (which ranges over arbitrary traces over $\ap$) is in $\varphi'$ replaced by quantification over $\unidisvar$ (which ranges over arbitrary traces in $\ap' \supseteq \ap$ and the additional propositions are irrelevant here).
Further, quantification over $\unidisvar$ in $\phi$ (which ranges over traces of a finite transition system) is in $\varphi'$ replaced by quantification over $\unidisvar$, but restricted to traces whose prefixes are all in $Y_\prefixes$. 
Such a trace must be a trace of the transition system, as it is finite.
Note that trace quantification over second-order variables~$Y \notin \set{\unidisvar,\univar}$ does not have to not relativized, as the traces added to the fixed points interpreting $Y$ come from $\unidisvar$, $\univar$, or previously quantified second-order variables. 

So, the formula~$\rel(\phi)$ is defined by inductively replacing
\begin{itemize}
    \item each $\exists \pi \in \univar.\ \psi$ by $\exists \pi \in \unidisvar.\ \psi$, 
    
    \item each $\forall \pi \in \univar.\ \psi$ by $\forall \pi \in \unidisvar.\ \psi$, 
    
    \item each $\exists \pi \in \unidisvar.\ \psi$ by $\exists \pi \in \unidisvar.\ \psi_r \wedge \psi$, and 
    
    \item each $\forall \pi \in \unidisvar.\ \psi$ by $\forall \pi \in \unidisvar.\ \psi_r \rightarrow \psi$,
\end{itemize}
where $\psi_r$ expresses that the trace assigned to $\pi$ must be one of the transition system~$\tsys$:
\begin{align*}
\forall \pi' \in \unidisvar.\ {}&{} \neg\#_{\pi'} \wedge ((\neg\#_{\pi'})\U(\#_{\pi'}\wedge \X\G\neg \#_{\pi'})) \rightarrow \\
{}&{}\exists \pi_p\in Y_\prefixes.\ \F(\#_{\pi'} \wedge \#_{\pi_p}) \wedge (\bigwedge_{\proposition\in\ap} \proposition_{\pi} \leftrightarrow \proposition_{\pi_p})\U\#_{\pi_p}
\end{align*}
Here, we use the fact that if all prefixes of a trace are in the prefixes of the traces of $\tsys$, then the trace itself is also one of $\tsys$. Thus, we can require for every $n > 0$, 
there is a trace in $Y_\prefixes$ encoding a prefix of length~$n$ of a trace of $\tsys$ such that $\pi$ has that prefix. 

To finish the proof, we need to ensure that $Y_\arith$ does indeed contain the traces implementing addition and multiplication (see Proposition~\ref{prop_plustimesinhyperltl} and the proof of Theorem~\ref{thm_satcomplexity_lfp_ss} on Page~\pageref{pageref_plustimesaslfp} for details): $\tsys'$ is a model of 
\[
\phi' = \exists \pi_\add \in \unidisvar.\ \exists \pi_\mult \in \unidisvar.\ \psi_s \wedge \exists(X_\arith, \smallest, \phi_\plustimes').\ \phi''
\]
if and only if $\phi$ is satisfied by a finite transition system with a single initial vertex.
\end{proof}

Now, we prove the lower bound for $\lfpsohyltlfp$ finite-state satisfiability.

\begin{lem}
\label{lemma_fssat_soahard}
Truth in second-order arithmetic is polynomial-time reducible to finite-state satisfiability for $\univar$-free $\lfpsohyltlfp$ sentences.
\end{lem}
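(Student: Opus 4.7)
The plan is to give, in polynomial time, from each second-order arithmetic sentence $\phi$ a $\univar$-free $\lfpsohyltlfp$ sentence $\phi'$ such that $\phi'$ is finite-state satisfiable if and only if $\natsstruct \models \phi$. The construction adapts the lower bound for general $\lfpsohyltlfp$ satisfiability from Theorem~\ref{thm_satcomplexity_lfp_ss}, which crucially used $\univar$ to access encodings of every subset of $\nats$. Although $\univar$ is now off-limits, a finite transition system can have $(\pow{\ap})^\omega$ as its trace set, and we will let the universe of discourse play the role of $\univar$. This is consistent with Lemma~\ref{lemma_lfpcwmodelsize}, which bounds only the size of arbitrary models and places no restriction on trace sets of finite transition systems; it is precisely this gap that makes finite-state satisfiability harder than plain satisfiability in the $\univar$-free fragment.

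Concretely, $\phi'$ will be assembled along the lines of the sentence $\phi_n$ from the proof of Theorem~\ref{thm_satcomplexity_lfp_ss}: structural constraints partition traces of $\unidisvar$ into encodings of subsets of $\nats$ (via $\inprop$, together with type-distinguishing propositions) and traces over $\ap_\arith$; a guarded least fixed point of the form $\exists(X_\arith,\smallest,\phi_\arith').\ \cdots$ isolates $T_\plustimes$ using the $\lfpsohyltlfp$-compatible reformulation of $\phi_\plustimes$ noted on Page~\pageref{pageref_plustimesaslfp}; and the translation $\hyperize(\phi)$ turns first-order arithmetic quantifiers into trace quantifiers over singleton traces in $\unidisvar$, second-order arithmetic quantifiers into trace quantifiers over subset-encoding traces in $\unidisvar$, and arithmetic equalities into lookups in $X_\arith$. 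Whenever $\natsstruct \models \phi$, the canonical finite transition system whose trace set is $(\pow{\ap})^\omega$ witnesses finite-state satisfiability of $\phi'$.

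The principal obstacle is the converse direction: showing that every finite-state model of $\phi'$ has a universe containing encodings of all subsets of $\nats$, so that $\hyperize(\phi)$ faithfully mirrors the arithmetic semantics of $\phi$. Since \myquot{the universe contains every trace} cannot be expressed directly without $\univar$, the plan is to \emph{internalize} a witness transition system using the machinery of the proof of Lemma~\ref{lemma_fssat2mc}: $\phi'$ will existentially quantify traces encoding an inner finite transition system, use a handful of simple finite-state predicates to pin its structure down to one whose unfolded trace set is necessarily $(\pow{\set{\inprop}})^\omega$ (e.g., a complete two-state gadget labeled $\emptyset$ and $\set{\inprop}$), and then unfold it into its prefixes via a guarded least fixed point analogous to $\phi_\con$ in Lemma~\ref{lemma_fssat2mc}. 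The trace quantifiers inside $\hyperize(\phi)$ will range over those traces in $\unidisvar$ certified, via all their prefixes belonging to this fixed point, to lie in the unfolded inner trace set, rather than over $\unidisvar$ directly. If $\phi$ is true, the canonical \myquot{all traces} outer transition system supplies the required encoding traces and $\hyperize(\phi)$ holds; conversely, any finite-state model of $\phi'$ supplies an inner encoding whose pinned-rich unfolding makes $\hyperize(\phi)$ equivalent to $\phi$, thereby forcing $\phi$ to be true.
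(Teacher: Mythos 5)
Your setup and the first direction are fine, and you have correctly isolated the crux: without $\univar$ you must somehow force every finite-state model to contain an encoding of every subset of $\nats$. But the mechanism you propose does not achieve this. In the $\univar$-free fragment the least fixed points only ever collect traces that are already in the universe of discourse (Remark~\ref{remark_lfpcw}; the step clause of a guard such as $\phi_\con$ quantifies $\forall \pi_1 \in \unidisvar$), so your prefix fixed point can never be richer than $\unidisvar$ itself. Likewise, relativizing the quantifiers in $\hyperize(\phi)$ to those traces of $\unidisvar$ all of whose prefixes lie in the fixed point is only a \emph{restriction} of the range of quantification; it never forces a missing subset-encoding trace into the model. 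In Lemma~\ref{lemma_fssat2mc} this relativization is sound precisely because the outer system $\tsys'$ is \emph{constructed} to satisfy $\traces(\tsys') = (\pow{\ap'})^\omega$; in a finite-state satisfiability instance the outer system is existentially quantified, and a model whose universe contains, say, only a single subset-encoding trace would satisfy all of your constraints while making $\hyperize(\phi)$ diverge from $\phi$ (existential set quantifiers lose their witnesses, universal ones become vacuous). So the claimed converse direction does not go through as described.

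The paper closes this gap with a different, and essentially unavoidable, idea. It writes a plain $\forall\exists$ sentence~$\phi_\prefs$ (in the style of Example~\ref{example_hyltlcardlb}) asserting that each trace of the model encodes a finite word over $\pow{\set{\inprop}}$ up to a unique marker~$\#$, that the empty word is encoded, and that every encoded word has both of its one-letter extensions encoded; an induction then shows that every finite word over $\pow{\set{\inprop}}$ occurs as a prefix of a trace of the model. The decisive step is topological: the trace set of a \emph{finite} transition system is closed, so containing all finite prefixes forces the $\set{\inprop}$-projection of the trace set to be all of $(\pow{\set{\inprop}})^\omega$. This is exactly where finiteness of the transition system is used, and why the same trick cannot collapse general satisfiability, which remains in $\Sigma_1^1$ by Lemma~\ref{lemma_lfpcwmodelsize}. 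You would need to import this closedness argument (or an equivalent device) to repair your converse direction; the inner-transition-system gadget borrowed from Lemma~\ref{lemma_fssat2mc} does not by itself supply it.
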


\begin{proof}
Given a sentence~$\phi$ of second-order arithmetic, we show how to construct an $\univar$-free $\lfpsohyltlfp$ sentence~$\phi'$ such that  $\natsstruct\models\phi$ if and only if $\phi'$ is finite-state satisfiable.
Intuitively, we write a formula that is finite-state satisfiable, but only by finite transition systems that contain all traces over $\set{\inprop}$.
As before, this allows us to mimic first- and second-order quantification by quantification of traces.
Together with the fact that addition and multiplication can be implemented in $\hyltl$, this suffices to obtain the desired construction.
Note that this only works because we are asking for a finite transition system to satisfy $\phi'$: general satisfiability for $\univar$-free $\lfpsohyltlfp$ sentences is simpler (see Theorem~\ref{thm_lfpsatcomplexity}).

Let $\ap = \set{\inprop,\#}$.
We begin by presenting a (satisfiable) $\lfpsohyltlfp$ sentence~$\phi_\prefs$ that ensures that the set of prefixes of the $\set{\inprop}$-projection of each of its models is equal to $(\pow{\set{\inprop}})^*$.
To this end, consider the conjunction~$\phi_\prefs$ of the following formulas:
\begin{itemize}
    \item $\forall \pi \in \unidisvar.\ \G(\#_\pi \rightarrow \X\G\neg \#_\pi))$, which expresses that each trace in the model has at most one position where the proposition~$\#$ holds.
    \item $\exists \pi \in \unidisvar.\  \#_{\pi}$, which expresses that each model contains a trace where $\#$ holds at the first position.
    \item $\forall \pi \in \unidisvar.\ \exists \pi' \in \unidisvar.\ (\F\#_\pi) \rightarrow (\inprop_\pi \leftrightarrow \inprop_{\pi'}) \U (\#_{\pi} \wedge \neg\inprop_{\pi'} \wedge \X\#_{\pi'} )$ expressing that for every trace~$t$ in the model with a $\#$, there is another trace~$t'$ in the model such that the $\set{\inprop}$-projection~$w$ of $t$ up to the $\#$ and the $\set{\inprop}$-projection~$w'$ of $t'$ up to the $\#$ satisfy $w' = w\emptyset$.
    \item $\forall \pi \in \unidisvar.\ \exists \pi' \in \unidisvar.\ (\F\#_\pi)\rightarrow (\inprop_\pi \leftrightarrow \inprop_{\pi'}) \U (\#_{\pi} \wedge \inprop_{\pi'} \wedge \X\#_{\pi'} )$ expressing that for every trace~$t$ in the model with a $\#$ there is another trace~$t'$ in the model such that the $\set{\inprop}$-projection~$w$ of $t$ up to the $\#$ and the $\set{\inprop}$-projection~$w'$ of $t'$ up to the $\#$ satisfy $w' = w\set{\inprop}$.   
\end{itemize}

A straightforward induction shows that the set of prefixes of the $\set{\inprop}$-projection of each model of $\phi_\prefs$ is equal to $(\pow{\set{\inprop}})^*$.
Furthermore, let $\tsys$ be a finite transition system that is a model of $\phi_\prefs$, i.e., with $\traces(\tsys) \models \phi_\prefs$. 
Then, the $\set{\inprop}$-projection of $\traces(\tsys)$ must be equal to $(\pow{\set{\inprop}})^\omega$, which follows from the fact that the set of traces of a finite transition system is closed (see, e.g.,~\cite{BK08} for the necessary definitions).
Thus, as usual, we can mimic set quantification over $\nats$ by trace quantification. 

Further, recall that we can implement addition and multiplication in $\lfpsohyltlfp$ (see Proposition~\ref{prop_plustimesinhyperltl} and the proof of Theorem~\ref{thm_satcomplexity_lfp_ss} on Page~\pageref{pageref_plustimesaslfp} for how to turn the formula~$\phi_\plustimes$ into a least fixed point guard): there is an $\lfpsohyltlfp$ guard for a second-order quantifier such that the $\set{\argone, \argtwo, \res, \add, \mult}$-projection of the unique least fixed point that satisfies the guard is equal to $T_\plustimes$.

Thus, we can mimic set quantification over $\nats$ and implement addition and multiplication, which allow us to reduce truth in second-order arithmetic to finite-state satisfiability for $\univar$-free sentences using the function~$\hyperize$ presented in the proof of Theorem~\ref{thm_satcomplexity}:
\[
\phi' = \phi_\prefs \wedge \exists \pi_\add \in \unidisvar.\ \exists \pi_\mult \in \unidisvar.\ \psi_s \wedge \exists(X_\arith, \smallest, \phi_\plustimes').\ \hyperize(\phi)
\]
is finite-state satisfiable if and only if $\natsstruct\models\phi$.
\end{proof}
Note that the formula $\phi'$ we have constructed in the proof just uses one second-order quantifier.

This result, i.e., that finite-state satisfiability for $\univar$-free $\lfpsohyltlfp$  sentences is at least as hard as truth in second-order arithmetic, should be contrasted with the general satisfiability problem for $\univar$-free $\lfpsohyltlfp$ sentences, which is \myquot{only} $\Sigma_1^1$-complete~\cite{sohypercomplexity}, i.e., much simpler. 
The reason is that every satisfiable $\lfpsohyltlfp$ sentence has a countable model (i.e., a countable set of traces). 
This is even true for the formula~$\phi_\prefs$ we have constructed. However, every \emph{finite-state} transition system that satisfies the formula must have uncountably many traces. 
This fact allows us to mimic quantification over arbitrary subsets of $\nats$, which is not possible in a countable model. 
Thus, the general satisfiability problem is simpler than the finite-state satisfiability problem.

Finally, we prove the upper bound for $\lfpsohyltlfp$ model-checking.

\begin{lem}
\label{lemma_mc_soaeasy}
$\lfpsohyltlfp$ model-checking is polynomial-time reducible to truth in second-order arithmetic. 
\end{lem}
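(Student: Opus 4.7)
The plan is to reduce $\lfpsohyltlfp$ model-checking to truth in second-order arithmetic by adapting the $\Sigma_1^2$ upper-bound construction from Theorem~\ref{thm_satcomplexity_lfp_ss}. The key simplifications here are twofold: first, because the transition system $\tsys$ is given as input, we no longer need to existentially quantify a model (this is what saved a type in the satisfiability case and is a type-2 object there); second, since second-order arithmetic allows arbitrary alternation of type~$1$ quantifiers, we do not need to extract Skolem functions (which were type~$2$ objects in Theorem~\ref{thm_satcomplexity_lfp_ss}), but can instead translate each trace quantifier in $\phi$ directly into a type-$1$ quantifier, preserving alternation.

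Concretely, given $(\tsys,\phi)$, I encode traces as type-$1$ objects (subsets of $\nats$) using the pairing-function encoding from the proof of Theorem~\ref{thm_satcomplexity}. Building on the formulas $\phi_\ists$, $\phi_\ispath$, and $\phi_\traceof$ from the proof of Theorem~\ref{thm_finsatcomplexity}, but with $\tsys$ hardcoded rather than existentially quantified, I obtain a first-order arithmetic formula $\phi_\istraceof(T)$ that holds iff $T$ encodes a trace of $\tsys$; I also reuse $\phi_\istrace(T)$ for encodings of arbitrary traces. I then translate $\phi$ inductively into a second-order arithmetic sentence $\phi^*$: each $\forall\pi\in\unidisvar$ becomes $\forall T.\ \phi_\istraceof(T)\to\cdots$, each $\forall\pi\in\univar$ becomes $\forall T.\ \phi_\istrace(T)\to\cdots$, and similarly for existential quantifiers (no Skolemization is required). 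Temporal operators and the quantifier-free body are handled through the expansion-function technique from the proof of Theorem~\ref{thm_lfpsatcomplexity}, where the expansion is a type-$1$ object that is (existentially or universally) quantified along with the traces.

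The central technical step is the translation of the $\lfpsohyltlfp$ second-order quantifiers. Here I reuse the adapted witness-tree characterisation of least fixed points developed in the proof of Theorem~\ref{thm_satcomplexity_lfp_ss} (which allows for $\univar$-leaves). A witness tree is a finite labeled tree whose vertices are labeled by pairs $(t,j)$ with $t$ a trace and $j\in\{1,\ldots,k\}\cup\{a,d\}$; since it is finite, the whole structure, including the traces at its vertices, can be encoded as a single type-$1$ object (combining the parent/child relation, the labeling by fixed-point indices, and the concatenation of finitely many trace-encodings via Cantor pairing). I formalise this via $\psi_\hastree$-style formulas analogous to those in the proof of Theorem~\ref{thm_satcomplexity_lfp_ss}, but referring to the fixed $\tsys$ via $\phi_\istraceof$. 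Each $\forall\pi\in Y_j.\ \psi$ then translates to $\forall T.\ \phi_\istraceof(T)\to (\exists W.\ \psi_\hastree(W,T,j,\text{context})\to\arithmetize(\psi))$, with the current variable assignment tracked as free second-order parameters, and dually for $\exists$. By Lemma~\ref{lemma:witnesstreeprops} (in its version extended to $\univar$), this correctly captures membership in the least fixed point.

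The main obstacle is ensuring that the witness-tree formula is correctly formulated in second-order arithmetic despite the fact that its vertices carry trace labels and are used recursively across multiple fixed points $Y_1,\ldots,Y_k$: one must verify that each internal vertex $v$ labeled $(t,j)$ satisfies the $n'$-successor constraint induced by $\phi_j^\con$, which itself requires evaluating $\psi_j^\step$ on the traces labeling $v$ and its children, via a type-$1$ expansion. Once the witness-tree and expansion encodings are set up, all remaining checks are first-order over the finite tree structure and the finitely many subformulas of $\phi$, so the overall translation yields a second-order arithmetic sentence computable in polynomial time, giving the desired reduction.
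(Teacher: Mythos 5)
Your proposal follows essentially the same route as the paper's proof: hardcode $\tsys$ into the path/trace-encoding formulas (yielding the analogue of the paper's $\phi_\tsys$), translate trace quantifiers directly into type-$1$ quantifiers of second-order arithmetic with no Skolemization, thread an encoded variable assignment through the translation, and replace the fixed-point quantifiers by the witness-tree formula adapted from Theorem~\ref{thm_satcomplexity_lfp_ss}, with the third-order model variable replaced by the hardcoded trace set of $\tsys$. One detail needs fixing: in your clause for $\forall\pi\in Y_j.\ \psi$, the guard $\phi_\istraceof(T)$ restricting $T$ to traces of $\tsys$ is too strong, because when $\univar$ occurs in the guards the least fixed point assigned to $Y_j$ may contain traces outside $\traces(\tsys)$ (witnesses and seeds may be drawn from $\univar$); the correct guard is simply the existence of a witness tree, and that existential quantifier must sit inside the antecedent, i.e., $\bigl(\exists W.\ \psi_\hastree(W,T,j,\cdot)\bigr)\rightarrow\arithmetize(\psi)$ rather than $\exists W.\ \bigl(\psi_\hastree(W,T,j,\cdot)\rightarrow\arithmetize(\psi)\bigr)$. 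With those corrections your construction coincides with the paper's.
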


\begin{proof}
As in the (upper bound) proof of Theorem~\ref{thm_satcomplexity} (and several times thereafter), we mimic trace quantification via quantification of sets of natural numbers and capture the temporal operators using addition. 
To capture the \myquot{quantification} of fixed points, we use again witness trees (see Subsection~\ref{subsec_lfpsatcw} for details). 
Recall that these depend on a variable assignment of the trace variables. 
Thus, in our construction, we need to explicitly handle such an assignment as well (encoded by a set of natural numbers) in order to be able to correctly apply witness trees.

Let $\tsys = (V,E,I,\lambda)$ be a finite transition system. 
We assume without loss of generality that $V = \set{0,1,\ldots, n}$ for some $n \ge 0$.
Recall that $\pair \colon \nats\times\nats \rightarrow\nats$ denotes Cantor's pairing function defined as $\pair(i,j) = \frac{1}{2}(i+j)(i+j+1) +j$, which is a bijection and implementable in arithmetic.
We encode a path~$\rho(0)\rho(1)\rho(2)\cdots$ through $\tsys$ by the set~$\set{\pair(j,\rho(j)) \mid j \in\nats} \subseteq \nats$.
Not every set encodes a path, but the first-order formula
\begin{align*}
\phi_\ispath(Y) = {}&{} \forall x.\ \forall y.\ y > n \rightarrow \pair(x,y) \notin Y \wedge \\
{}&{} \forall x.\ \forall y_0.\ \forall y_1.\ (\pair(x,y_0) \in Y \wedge
 \pair(x,y_1) \in Y) \rightarrow y_0=y_1  \wedge  \\
{}&{} \bigvee_{v \in I} \pair(0,v) \in Y \wedge \\
{}&{} \forall j.\ \bigvee_{(v,v') \in E} \pair(j,v) \in Y \wedge \pair(j+1,v') \in Y
\end{align*}
checks if a set does encode a path of $\tsys$.

Furthermore, fix some bijection~$e \colon \ap \rightarrow\set{0,1,\ldots,\size{\ap}-1}$.
As before, we encode a trace~$t \in (\pow{\ap})^\omega$ by the set~$S_t =\set{\pair(j,e(\proposition)) \mid j \in \nats \text{ and } \proposition \in t(j)} \subseteq \nats$.
While not every subset of $\nats$ encodes some trace~$t$, the first-order formula 
\[\phi_\istrace(Y) = \forall x.\ \forall y.\ y \ge \size{\ap} \rightarrow \pair(x,y) \notin Y \] checks if a set does encode a trace.

Finally, the first-order formula
\[
\phi_{\tsys}(Y) = \exists Y_p.\ \ispath(Y_p) \wedge \forall j.\ \bigwedge_{\proposition\in\ap} \left( \pair(j,e(\proposition))\in Y \leftrightarrow \bigvee_{v\colon \proposition \in \lambda(v)} \pair(j,v) \in Y_p\right)
\]
checks whether the set~$Y$ encodes the trace of some path through $\tsys$.

As in the proof of Theorem~\ref{thm_satcomplexity_lfp_ss}, we need to encode variable assignments (whose domain is restricted to trace variables) via sets of natural numbers. 
Using this encoding, one can \myquot{update} encoded assignments, i.e., there exists a formula~$\phi_{\update}^\pi(A,A',Z)$ that is satisfied if and only if
\begin{itemize}
    \item the set~$A$ encodes a variable assignment~$\Pi$,
    \item the set~$A'$ encodes a variable assignment~$\Pi'$,
    \item the set~$Z$ encodes a trace~$t$, and
    \item $\Pi[\pi\mapsto t] = \Pi'$.
\end{itemize}

Now, we inductively translate an $\lfpsohyltlfp$ sentence~$\phi$ into a formula~$\arithmetize_\tsys(\phi)$ of second-order arithmetic.
This formula has two free variables, one first-order one and one second-order one.
The former encodes the position at which the formula is evaluated while the latter one encodes a variable assignment (which, as explained above, is necessary to give context for the witness trees).  
We construct $\arithmetize_\tsys(\phi)$ such that 
$\tsys \models \phi$ if and only if $\natsstruct\models (\arithmetize_\tsys(\phi))(0,\emptyset)$, where the empty set encodes the empty variable assignment.

\begin{itemize}
    
    \item $\arithmetize_\tsys(\exists (Y,\smallest,\psi_j^\con).\ \psi) = \arithmetize_\tsys(\psi)$. Here, the free variables of $\arithmetize_\tsys(\exists (Y,\smallest,\psi_j^\con).\ \psi)$ are the free variables of $\arithmetize_\tsys(\psi)$. Thus, we ignore quantification over least fixed points, as we instead use witness trees to check membership in these fixed points.
    
    \item $\arithmetize_\tsys(\forall (Y,\smallest,\psi_j^\con).\ \psi) = \arithmetize_\tsys(\psi)$. Here, the free variables of $\arithmetize_\tsys(\forall (Y,\smallest,\psi_j^\con).\ \psi)$ are the free variables of $\arithmetize_\tsys(\psi)$. 
    
    \item $\arithmetize_\tsys(\exists\pi\in \univar.\ \psi) = \exists Z_\pi.\  \exists A'.\ \phi_{\istrace}(Z_\pi) \wedge \phi_\update^\pi(A,A',Z_\pi) \wedge \arithmetize_\tsys(\psi)$. 
    Here, the free second-order variable of $\arithmetize_\tsys(\exists\pi\in \univar.\ \psi)$ is $A$ while $A'$ is the free second-order variable of $\arithmetize_\tsys(\psi)$. The free first-order variable of $\arithmetize_\tsys(\exists\pi\in \univar.\ \psi)$ is the free first-order variable of $\arithmetize_\tsys(\psi)$.
    
    \item $\arithmetize_\tsys(\forall\pi\in \univar.\ \psi) = \forall Z_\pi.\ \forall A'.\ (\phi_{\istrace}(Z_\pi) \wedge \phi_\update^\pi(A,A',Z_\pi) )\rightarrow \arithmetize_\tsys(\psi)$. 
    Here, the free second-order variable of $\arithmetize_\tsys(\forall\pi\in \univar.\ \psi)$ is $A$ while $A'$ is the free second-order variable of $\arithmetize_\tsys(\psi)$. The free first-order variable of $\arithmetize_\tsys(\forall\pi\in \univar.\ \psi)$ is the free first-order variable of $\arithmetize_\tsys(\psi)$.

    \item $\arithmetize_\tsys(\exists\pi\in \unidisvar.\ \psi) = \exists Z_\pi.\  \exists A'.\ \phi_\tsys(Z_\pi) \wedge \phi_\update^\pi(A,A',Z_\pi) \wedge \arithmetize_\tsys(\psi)$. 
    Here, the free second-order variable of $\arithmetize_\tsys(\exists\pi\in \unidisvar.\ \psi)$ is $A$ while $A'$ is the free second-order variable of $\arithmetize_\tsys(\psi)$. The free first-order variable of $\arithmetize_\tsys(\exists\pi\in \unidisvar.\ \psi)$ is the free first-order variable of $\arithmetize_\tsys(\psi)$.
    
    \item $\arithmetize_\tsys(\forall\pi\in \unidisvar.\ \psi) = \forall Z_\pi.\ \forall A'.\ (\phi_\tsys(Z_\pi) \wedge \phi_\update^\pi(A,A',Z_\pi) )\rightarrow \arithmetize_\tsys(\psi)$. 
    Here, the free second-order variable of $\arithmetize_\tsys(\forall\pi\in \unidisvar.\ \psi)$ is $A$ while $A'$ is the free second-order variable of $\arithmetize_\tsys(\psi)$. The free first-order variable of $\arithmetize_\tsys(\forall\pi\in \unidisvar.\ \psi)$ is the free first-order variable of $\arithmetize_\tsys(\psi)$.

    \item $\arithmetize_\tsys(\exists\pi\in Y_j.\ \psi) = \exists Z_\pi.\ \exists A'.\ \phi_\hastree(A,Z_\pi,j) \wedge \phi_\update^\pi(A,A',Z_\pi) \wedge \arithmetize_\tsys(\psi)$, where $\phi_\hastree(A,Z_\pi,j)$ is a formula of second-order arithmetic that captures the existence of a witness tree for the trace being encoded by $Z_\pi$ being in the fixed point assigned to $Y_j$ w.r.t.\ the variable assignment encoded by $A$. Its construction is similar to the corresponding formula in the proof of Theorem~\ref{thm_satcomplexity_lfp_ss}, but we replace the free third-order variable~$X_D$ encoding the model there by hardcoding the set of traces of $\tsys$ using the formula~$\phi_\tsys$ from above.
   
    Here, the free second-order variable of $\arithmetize_\tsys(\exists\pi\in Y_j.\ \psi)$ is $A$ while $A'$ is the free second-order variable of $\arithmetize_\tsys(\psi)$. The free first-order variable of $\arithmetize_\tsys(\exists\pi\in Y_j.\ \psi)$ is the free first-order variable of $\arithmetize_\tsys(\psi)$.

    \item $\arithmetize_\tsys(\forall\pi\in Y_j.\ \psi) = \forall Z_\pi.\ \forall A'.\ (\phi_\hastree(A,Z_\pi,j) \wedge \phi_\update^\pi(A,A',Z_\pi) )\rightarrow \arithmetize_\tsys(\psi)$.
    Here, the free second-order variable of $\arithmetize_\tsys(\forall\pi\in Y_j.\ \psi)$ is $A$ while $A'$ is the free second-order variable of $\arithmetize_\tsys(\psi)$. The free first-order variable of $\arithmetize_\tsys(\forall\pi\in Y_j.\ \psi)$ is the free first-order variable of $\arithmetize_\tsys(\psi)$.

    \item $\arithmetize_\tsys(\psi_1 \vee \psi_2) = \arithmetize_\tsys(\psi_1) \vee \arithmetize_\tsys(\psi_2)$. Here, we require that the free variables of $\arithmetize_\tsys(\psi_1)$ and $\arithmetize_\tsys(\psi_2)$ are the same (which can always be achieved by variable renaming), which are then also the free variables of $\arithmetize_\tsys(\psi_1 \vee \psi_2)$. 
    
    \item $\arithmetize_\tsys(\neg\psi) = \neg\arithmetize_\tsys(\psi)$. Here, the free variables of $\arithmetize_\tsys(\neg\psi) $ are the free variables of $ \arithmetize_\tsys(\psi)$.
    
     \item $\arithmetize_\tsys(\X\psi) =  \exists i' (i' = i+1) \wedge \arithmetize_\tsys(\psi)$, where $i'$ is the free first-order variable of $\arithmetize_\tsys(\psi)$ and $i$ is the free first-order variable of $\arithmetize_\tsys(\X\psi)$.
     The free second-order variable of $\arithmetize_\tsys(\X\psi)$ is equal to the free second-order variable of $\arithmetize_\tsys(\psi)$.
    
    \item $\arithmetize_\tsys(\psi_1\U\psi_2) =  \exists i_2.\ i_2 \ge i \wedge \arithmetize_\tsys(\psi_2) \wedge \forall i_1.\ (i \le i_1 \wedge i_1 < i_2) \rightarrow \arithmetize_\tsys(\psi_1)$, where $i_j$ is the free first-order variable of $\arithmetize_\tsys(\psi_j)$, 
    and $i$ is the free first-order variable of $\arithmetize_\tsys(\psi_1\U\psi_2)$.
    Furthermore, we require that the free second-order variables of the $\arithmetize_\tsys(\psi_j)$ are the same, which is then also the free second-order variable of $\arithmetize_\tsys(\psi_1\U\psi_2)$. Again, this can be achieved by renaming variables.

    \item $\arithmetize_\tsys(\proposition_\pi) = \pair(i,e(\proposition)) \in Z_\pi$, i.e., $i$ is the free first-order variable of $\arithmetize_\tsys(\proposition_\pi)$. Note that this formula does not have a free second-order variable. For completeness, we can select an arbitrary one to serve that purpose.
    
\end{itemize}
Now, an induction shows that $\tsys\models \phi$ if and only if $\natsstruct$ satisfies $(\arithmetize(\phi))(0,\emptyset)$, where $\emptyset$ again encodes the empty variable assignment.
\end{proof}


\section{Related Work}
\label{sec_relatedwork}
As mentioned in Section~\ref{sec:intro}, the complexity problems for HyperLTL were thoroughly studied~\cite{FinkbeinerH16,hyperltlsatconf,hyperltlsat}. For $\sohyltl$, Beutner et al.\ mainly focused on the algorithmic aspects by providing model checking~\cite{DBLP:conf/cav/BeutnerFFM23} and monitoring~\cite{BeutnerFFM24} algorithms.
Finkbeiner et al. studied the complexity of $\sohyltl$ and $\lfpsohyltlfp$ over tree-shaped and acyclic structures, which contain finitely many traces~\cite{FoSSaCS26}, thus all problems become decidable.
None of the above studied the respective complexity problems for general structures in depth. 

Logics related to $\sohyltl$ are asynchronous and epistemic logics. Much research has been done regarding epistemic properties~\cite{DBLP:conf/clima/Dima08,DBLP:books/mit/FHMV1995,DBLP:conf/atal/LomuscioR06a,DBLP:conf/fsttcs/MeydenS99} and their relations to hyperproperties~\cite{DBLP:conf/fossacs/BozzelliMP15}. However, most of this work concerns expressiveness and decidability results (e.g.,~\cite{DBLP:journals/tocl/BozzelliMM24}), and not complexity analysis for the undecidable fragments. This is similar for asynchronous hyperlogics~\cite{BartocciHNC23,DBLP:conf/cav/BaumeisterCBFS21,BeutnerF23,DBLP:journals/corr/abs-2404-16778,DBLP:conf/lics/BozzelliPS21,DBLP:conf/concur/BozzelliPS22,DBLP:journals/pacmpl/GutsfeldMO21,DBLP:conf/mfcs/KontinenSV23,KontinenSV24,DBLP:conf/mfcs/KrebsMV018}, where most work concerns decidability results and expressive power, but not complexity analysis. 

Another related logic is $\teamltl$~\cite{DBLP:conf/mfcs/KrebsMV018}, a hyperlogic for the specification of dependence and independence. Lück~\cite{Luck20} studied similar problems to those we study in this paper and showed that, in general, satisfiability and model checking of $\teamltl$ with Boolean negation is equivalent to truth in third-order arithmetic. Kontinen and Sandström~\cite{DBLP:conf/wollic/KontinenS21} generalize this result and show that any logic between $\teamltl$ with Boolean negation and second-order logic inherits the same complexity results. 
Kontinen et al.~\cite{DBLP:conf/mfcs/KontinenSV23} study set semantics for asynchronous $\teamltl$, and provide positive complexity and decidability results. Gutsfeld et al.~\cite{GutsfeldMOV22} study an extension of $\teamltl$ to express refined notions of asynchronicity and analyze the expressiveness and complexity of their logic, proving it also highly undecidable. 
While $\teamltl$ is closely related to $\sohyltl$, the exact relation between them is still unknown.

Finally, the logic~$\hyqptl$, which extends $\hyltl$ by quantification over propositions~\cite{Rabe16diss}, is also related to $\sohyltl$. 
With uniform quantification, $\hyqptl$ satisfiability is equivalent to truth in second-order arithmetic~\cite{regaud2024complexityhyperqptl} while finite-state satisfiability and model-checking have the same complexity as for $\hyltl$~\cite{FinkbeinerH16,Rabe16diss}. Non-uniform quantification makes $\hyqptl$ as expressive as second-order $\hyltl$~\cite{regaud2024complexityhyperqptl}, which implies that all three problems are equivalent to truth in third-order arithmetic.

\section{Conclusion}
\label{sec_conc}
We have investigated and settled the complexity of satisfiability, finite-state satisfiability, and model-checking for $\sohyltl$ and its fragments~$\sohyltlfp$ and $\lfpsohyltlfp$.
For the former two, all three problems are equivalent to truth in third-order arithmetic, and therefore (not surprisingly) much harder than the corresponding problems for $\hyltl$, which are \myquot{only} $\Sigma_1^1$-complete, $\Sigma_1^0$-complete, and \tower-complete, respectively.
This shows that the addition of second-order quantification increases the already high complexity of $\hyltl$ significantly. 
However, for the fragment $\lfpsohyltlfp$, in which second-order quantification degenerates to least fixed point computations, the complexity is lower (albeit still high): satisfiability is $\Sigma_1^2$-complete while finite-state satisfiability and model-checking are equivalent to truth in second-order arithmetic.
All our results, but the one for $\lfpsohyltlfp$ satisfiability, also hold for closed-world semantics. 
On the other hand, $\lfpsohyltlfp$ satisfiability under closed-world semantics is \myquot{only} $\Sigma_1^1$-complete, i.e., as hard as $\hyltl$ satisfiability. 

In future work, we aim to study the rich landscape of temporal logics for asynchronous hyperproperties. 
Here, the first results~\cite{rz25} for generalized $\hyltl$ with stuttering and contexts~\cite{DBLP:journals/corr/abs-2404-16778} have been obtained: 
Model-checking is equivalent to truth in second-order arithmetic while satisfiability is $\Sigma_1^1$-complete.

\paragraph*{Acknowledgments}

\noindent 
\begin{wrapfigure}[2]{l}{3.2cm}
\vspace{-.4cm}
\includegraphics[scale=.1]{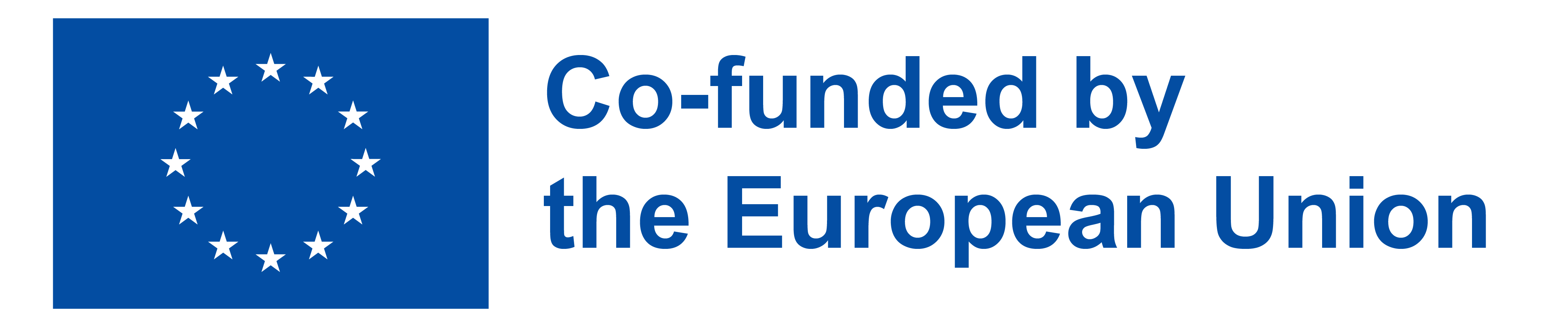}  
\end{wrapfigure}
Gaëtan Regaud has been supported by the European Union. Martin Zimmermann has been supported by DIREC – Digital Research Centre Denmark. 
Hadar Frenkel has been supported in part by the Israel Science Foundation (grant No.\ 655/25).

This work was initiated by a discussion at Dagstuhl~Seminar 23391 \myquot{The Futures of Reactive Synthesis} and some results were obtained at Dagstuhl Seminar~24111 \myquot{Logics for Dependence and Independence: Expressivity and Complexity}.

\bibliographystyle{alphaurl}
\bibliography{bib}

\end{document}